\def\notshow#1\notshowend{} 
\newcommand{\C}{\mathcal{C}}
\def\bb#1\eb{{#1}}
\newcommand{\R}{\mathds R}
\newcommand{\N}{\mathds N}
\newcommand{\CC}{\mathds C}
\newtheorem{thm}{Theorem}[section]
\newtheorem{prop}[thm]{Proposition}
\newtheorem{lemma}[thm]{Lemma}
\newtheorem{cor}[thm]{Corollary}
\theoremstyle{definition}
\newtheorem{defi}[thm]{Definition}
\newtheorem{example}[thm]{Example}
\newtheorem{rem}[thm]{Remark}
\newcommand{\ben}{\begin{enumerate}}
\newcommand{\een}{\end{enumerate}}
\newcommand{\bit}{\begin{itemize}}
\newcommand{\eit}{\end{itemize}}
\newcommand{\edoc}{\end{document}}
\newcommand{\TT}{\mathrm{T}}
\newcommand{\HH}{\mathrm{H}}
\newcommand{\VV}{\mathrm{V}}
\newcommand{\fun}{\mathcal{F}}
\newcommand{\vect}{\mathfrak{X}}
\newcommand{\form}{\varOmega}
\newcommand{\ten}{\mathcal{T}}
\newcommand{\dv}{\dot{\partial}}
\newcommand{\lie}{\mathfrak{l}^\HH}
\newcommand{\Lie}{\mathfrak{L}}
\newcommand{\dd}{\mathrm{d}}
\newcommand{\vol}{d\mathrm{Vol}}
\newcommand{\di}{\mathrm{div}}
\newcommand{\la}{\mathrm{Lan}}
\title[Stress-energy tensor in Finsler spacetimes]{
On the  significance of 
the \\ stress-energy tensor  in Finsler spacetimes}
\author[M. A. Javaloyes]{Miguel Angel Javaloyes}\address{Departamento de Matem\'aticas, \hfill\break\indent Universidad de Murcia, \hfill\break\indent Campus de Espinardo,\hfill\break\indent 30100 Espinardo, Murcia, Spain} \email{majava@um.es}
\author[M. S\'anchez]{Miguel S\'anchez} \address{Departamento de Geometr\'{\i}a y Topolog\'{\i}a, Facultad de Ciencias \& IMAG, \hfill\break\indent Universidad de Granada,\hfill\break\indent Campus Fuentenueva s/n, \hfill\break\indent 18071 Granada, Spain}\email{sanchezm@ugr.es}
\author[F.F. Villase\~nor]{Fidel F. Villase\~nor} \address{Departamento de Geometr\'{\i}a y Topolog\'{\i}a, Facultad de Ciencias \& IMAG, \hfill\break\indent Universidad de Granada,\hfill\break\indent Campus Fuentenueva s/n, \hfill\break\indent 18071 Granada, Spain}\email{anbernal@no-gravity.eu}
\begin{document}

\begin{abstract} 

We revisit the physical arguments which lead to the definition of the stress-energy  tensor $T$ in the Lorentz-Finsler setting $(M,L)$ starting at classical Relativity. Both the standard heuristic approach using fluids and the Lagrangian one are taken into account. In particular, we argue that the Finslerian breaking of Lorentz symmetry makes $T$  an anisotropic 2-tensor  (i. e., a tensor for each $L$-timelike direction),  in contrast with {  the energy-momentum vectors} defined on $M$. Such a tensor is compared {  with different ones}   
obtained by using a Lagrangian approach.
  The notion of divergence is revised from a geometric viewpoint and, then, {  the conservation laws of $T$} for each observer field are revisited.    
   We introduce a natural {\em anisotropic Lie bracket derivation}, 
  which leads to a divergence 
obtained from the volume element and the non-linear connection associated with $L$ {  alone}.  The computation of this divergence  selects the Chern anisotropic connection, thus giving  a geometric interpretation  to previous choices in the literature.
\vspace{5mm}

\noindent {\em MSC:} 53C60, 83D05
83A05, 83C05. \\
{\em Keywords:} Divergence in Finsler manifolds, Stress-energy tensor, Finsler spacetime,  Lorentz symmetry breaking, Very Special Relativity. 
\end{abstract}
\maketitle

\tableofcontents

\section{Introduction}
This article has a double  aim in Lorentz-Finsler Geometry. The first one is to revisit the physical grounds {  of} the stress-energy tensor $T$ \S \ref{s3}. The possible  extensions of the relativistic $T$ are discussed from the viewpoint of both fluids mechanics and Lagrangian systems. The second one is to {  revise} geometrically the notion of divergence \S \ref{sec:divergence vectors}, yielding consequences about the conservation of $T$ \S \ref{s5}. With this aim, we introduce {  new notions of} Lie   bracket and derivative  associated with a nonlinear connection and applicable to anisotropic tensors fields, which appear naturally in Finsler Geometry.

Finslerian modifications of General Relativity aim to find a tensor $T$
 collecting  the possible anisotropies in the distribution of energy, momentum and stress, which will serve as a source for the (now Lorentz-Finsler) geometry of the spacetime \cite{HPV0, HPV1, KSS, LiChang, SVV}. 
Some of these proposals may be waiting for experimental evidence, postponing then  how the basic relativistic notions would be affected. However, such a discussion is relevant to understand the scope and implications of the introduced Finslerian elements. In a previous reference \cite{BerJavSan}, the fundamentals of observers in the Finslerian setting were extensively studied, including its compatibility with {  the} Ehlers-Pirani-Schild approach. 
Now we focus on the stress-energy tensor $T$.

The difficulty to study such a $T$ is apparent. Recall that, using the principle of equivalence, General Relativity is reduced infinitesimally  into the Special one, which provides a background for interpretations. However, in the Lorentz-Finsler case, the infinitesimal model  is  changed into a Lorentz {\em norm} (instead of scalar product), implying a breaking of Lorentz invariance. This is a substantial issue in its own right which has been studied in the context of 
 {\em Very Special Relativity} and others  
\cite{Bogoslovsky, CG, GGP07, FP16,  Kos11}. As an additional difficulty, the infinitesimal  model changes with the point.\footnote{Berwald spaces \cite{FHPV,FPP18} are an exception, as the parallel transport becomes an isometry between the Lorentz norms. Thus, in some sense, these spaces would admit a principle of equivalence with respect to a Lorentz normed space (non-necessarily to Lorentz-Minkowski spacetime).}

  Two noticeable pre-requisites are the following: (a) only the value of the Lorentz-Finsler metric on causal directions is relevant \cite{BerJavSan, JavSan14} (this is  briefly commented   
  in the setup \S \ref{s2.3}), and (b)  there is a big variety of {  possible} extensions of the relativistic kinematic objects to the Finsler case, at least from the geometric viewpont (see the appendix \S \ref{s_A}).
Taking into account these issues,   the  extension of the notion of stress-energy tensor   to the Finslerian setting is discussed in \S \ref{s3}.

We start at the {\em fluids approach}.  As a preliminary question,   energy-momentum is discussed,  \S \ref{s3.1}. We emphasize that, even though {  this} is well-defined as a tangent vector {  in} each tangent space $\TT_pM$, $p\in M$, different observers $u$, $u'$ at $p$ will use  coordinates related by non-trivial linear transformations. Indeed, the latter will depend on both  $L$ and  the chosen way to measure  relative velocities. {  Moreover},  when {  the} stress-energy   $T$ is considered \S \ref{s3.2},   the arguments in Classical Mechanics and Relativity which support  its status  as a tensor   hold only partially in the {  Lorentz-Finsler} setting. Indeed, $T$  acquires a {  nonlinear} nature which  is codified in an (observer-dependent) anisotropic tensor, rather than 
{  in} a tensor on $M$.  

The Lagrangian approach is discussed in \S \ref{s_lagr}.  This approach has been developed recently  by  Hohmann, Pfeifer and Voicu \cite{HPV,HPV21}, who introduced an  {\em energy-momentum scalar function}. Here, we discuss the analogies and differences of this function with the canonical relativistic stress-energy tensor $\delta S_{matter}/ \delta  g^{\mu\nu}$ and the 2-tensor $T$ obtained from the fluids approach above. Relevant issues are the existence of different ways to obtain { } a 2-tensor starting at a scalar function, the recovery of this function from a matter Lagrangian and  the possibility to consider  the Palatini Lagrangian as the background one    (rather than Einstein-Hilbert type Lagrangians  used by the cited authors; recall that Palatini's becomes especially meaningful in the Finslerian case \cite{JSV2}).   The important case  of kinetic gases is considered explicitly ({Ex.} \ref{ex_kinetic}).

Once the definition of $T$ has been discussed, we focus on its conservation \S \ref{s5}, revisiting first the divergence theorem \S \ref{sec:divergence vectors}.  This is crucial in the Finslerian setting because, as  discussed before,  the Lagrangian approach above does not guarantee  a conservation law as the relativistic $\di(G)=0$.

 \S \ref{sec:divergence vectors} analyzes the divergence from a purely mathematical viewpoint. Now, $L$ is regarded as pseudo-Finsler (the results will be useful not only in any indefinite signature but also in the classical positive definite case) and $T$ will not be assumed to be symmetric a priori. Classically, the divergence of a vector field $Z$ is defined with the derivation associated with the Lie bracket $\left[Z,X\right]=\Lie_ZX$, {  applied to the} volume element. In the Finslerian case, however, {  the Lie derivative and bracket do} not make sense {  for} arbitrary anisotropic vector fields. 
 This difficulty was circumvented by 
 Rund \cite{Rund}, who redefined  $\mathrm{div}(Z)$  in such a way that a type of divergence theorem held. However,  the Lie viewpoint is  restored here.

\S \ref{subsec:bracket cartan} Once a nonlinear connection $\HH A$  (seen as a  horizontal distribution on $A$) is prescribed, we can define a Lie bracket $\lie_Z X$  and, then,  a Lie derivative $\Lie_Z^\HH X$ (Defs. \ref{d_4.1} and \ref{d_4.5}; Th. \ref{th:lie bracket} (C)). Noticeably, the former $\lie_Z $ is {  expressible} in terms of the infinitesimal flow of $Z$ (Prop. \ref{Lieflowformula}).

\S \ref{s_4.2} The divergence of $Z$ is naturally defined by using this     Lie bracket (Def. \ref{d_4.9}).   For the computation of $\mathrm{div}(Z)$, {  however}, one can use an {\em  anisotropic connection} $\nabla$ (this can be seen as a Finsler connection dropping its vertical part, see \S \ref{sec:preliminaries}) and {  a priori} Chern's one is not especially priviledged 
(Prop. \ref{prop:divergence vector characterization}).

\S \ref{sec:divergence tensor}. We give a general Finslerian version of {  the} divergence theorem for any anisotropic vector field $Z$, emphasizing the role of the choice of  an (admissible) vector field $V\colon M\rightarrow A$, which in the Lorentzian case can be interpreted as an observer field; this is expressed in terms of integration of forms in the spirit of {  Cartan's} formula (Th. \ref{th:divergence theorem}, {  Rem.} \ref{rem:normals}). 
We also explain how the boundary term can be expressed in different ways by using a normal either {  with} respect to {the pseudo-Riemannian metric} $g_V$ or to the {  fundamental tensor}, which were the choices {  of} Rund \cite{Rund} and Minguzzi \cite{Min17} resp.

 \S \ref{s5} gives  some applications to conservation laws. 

\S \ref{d_5.1}. First, we discuss the definition of divergence for the case of $T$. {  Our definition for vector fields was not biased to the Chern anisotropic connection,} but {  this} will be used for  $\mathrm{div}(T)$ (Def. \ref{d_divT}). The reason is that $\mathrm{div}(T)$ should  behave under contraction in a similar way as {  in} the isotropic case (namely, as in formula \eqref{eq:div(T(X))}), which privileges Chern's connection (Prop. \ref{prop:div(T(X))}).

\S \ref{d_5.2}. As an interlude about the appeareance of Chern's $\nabla$, 
a comparison with the possible use of Berwald's and previous approaches in the literature is done.

\S \ref{d_5.3}.
  A  conservation law for the flow of $T_V(X_V)$ is obtained (Cor. \ref{cor:conservation laws}), stressing  three hypotheses on the vanishing for $V$ of elements related to the stress-energy $T$ ($\mathrm{div}(T)=0$), the anisotropic vector $X$ ($\lie_X g=0$, generalizing the isotropic case) and {  a derivative of $V$}. The latter hypothesis is genuinely Finslerian and it means that  some terms related to  the nonlinear covariant derivative $\mathrm{D}V$ must vanish globally ($V$ can always be chosen such that they vanish at some point). {  It is worth pointing out that our general formula for the integral of the divergence \eqref{eq:divergence theorem} recovers the classical interpretation of the divergence as an infinitesimal growth of the flow (now observer-dependent). So, $\mathrm{div}(T)=0$ is equivalent to the conservation of energy-momentum in the instantaneous restspace of each observer, see Rem. \ref{rem:divT=0}.}

We finish by applying this general result to two examples.

First to Lorentz norms, showing that the conservation laws of Special Relativity still hold even though, now, the conserved quantity may be different for different observers. 
 As a second example, we give  natural conditions so that the flow of  $T_V(X_V)$ (whenever it exists as a Lebesgue integral, eventually equal to $\pm \infty$)  {is equal} in any two Cauchy hypersurfaces of a globally hyperbolic Finsler spacetime.  Indeed, we refine a previous result by Minguzzi \cite{Min17}, who assumed that $L$ was defined {  on} the whole $\TT M$  and  $T_V(X_V)$ was compactly supported.  We show that a combination of Rund's and Minguzzi's ways to compute the boundary terms allows one to obtain appropriate decay rates (namely, the properly Finslerian hypothesis \eqref{eq:decay condition}) which ensure the conservation.

\section{Preliminaries and setup} \label{sec:preliminaries}
First, let us set up some notation. In all the present text, $M$ is a connected smooth ($C^\infty$) manifold of dimension $n\geq 2$. As in previous references \cite{JSV1,JSV2}, any coordinate chart $(U,(x^1,...,x^n))$ of $M$ naturally induces a chart $(\TT U,(x^1,...,x^n,y^1,...,y^n))$ of $\TT M$ defined by the fact that
\[
v=y^i(v)\left.\frac{\partial}{\partial x^i}\right|_{\pi(v)}
\]
for $v\in\TT U$, where $\pi\colon \TT M\rightarrow M$ is the canonical projection. We abbreviate
\[
\frac{\partial}{\partial x^i}=:\partial_i,\qquad\frac{\partial}{\partial y^i}=:\dot{\partial}_i;
\]
these are vector fields on $\TT U$. At any rate, we will express our results in coordinate-free and geometric terms.

\subsection{Anisotropic tensors} We shall employ the framework of anisotropic tensors, following \cite{Jav1,Jav2,JSV1}, as it is simpler than previous ones. An open subset $A\subseteq\TT M$ with $\pi(A)=M$ is fixed; the elements $v\in A$ are called {\em observers}. We will denote by $\mathcal{T}^r_s(M_A)$ the space of (smooth) $r$-contravariant $s$-covariant $A$-anisotropic tensor fields ($r,s\in\N\cup\left\{0\right\}$), and by $\mathcal{T}(M_A):=\bigoplus_{r,s}\mathcal{T}^r_s(M_A)$ the full anisotropic tensor algebra. $\mathcal{F}(A)=\mathcal{T}^0_0(M_A)$ will be the space of functions on $A$. This time we will also put $\mathfrak{X}(M_A):=\mathcal{T}^1_0(M_A)$ for the space of anisotropic vector fields and $\varOmega_s(M_A)$ for the space of anisotropic $s$-forms (alternating anisotropic tensors, so that $\varOmega_1(M_A):=\mathcal{T}^0_1(M_A)$). The space $\ten(M)$ of classical tensor fields will be seen as a subspace of $\ten(M_A)$, formed by the {\em isotropic} elements, namely those which depend only on the point $p\in M$ and not on the observer at it. In particular, $\mathfrak{X}(M)\subseteq\vect(M_A)$. There is a distinguished element of $\vect(M_A)$: the {\em canonical (or Liouville) anisotropic vector field}, 
\[
\CC=y^i\,\partial_i,\qquad\CC_v:=v.
\]
For an open set $U\subseteq M$, we will put $\mathfrak{X}^A(U)$ for the set of (local) \emph{observer fields}, that is, those $V\in\mathfrak{X}(U)$ such that $V_p\in A\cap\mathrm{T}_p M$ for all $p\in U$. Given one of these and $T\in\mathcal{T}^r_s(M_A)$, their composition, denoted by  $T_V\in\mathcal{T}^r_s(U)$, makes sense. Finally, for $X\in\vect(M_A)$, there is also a canonical derivation $\dot{\partial}_X\colon\mathcal{T}^r_s(M_A)\rightarrow \mathcal{T}^r_s(M_A)$: the {\em vertical derivative along $X$},
\[
\left(\dot{\partial}_X T\right)_v:=\lim_{t\rightarrow 0}\frac{T_{v+t X_v}-T_v}{t}, \qquad \left(\dot{\partial}_X T\right)_{j_1,...,j_s}^{i_1,...,i_r}=X^{j_{s+1}}\dot{\partial}_{j_{s+1}}T_{j_1,...,j_s}^{i_1,...,i_r}.
\]

\subsection{Nonlinear and anisotropic connections}
In this article, a {\em nonlinear connection on $A\rightarrow M$} is defined as a (horizontal) subbundle $\HH A\subseteq \TT A$ such that $\TT A=\HH A \oplus \VV A$, where $\VV A:=\left.\mathrm{Ker}(\dd \pi)\right|_A$ is the vertical subbundle. For other options and the rudiments, see \cite{JSV1}. Nonlinear connections are characterized by their {\em nonlinear coefficients} $N^i_j$, 
\begin{equation}
	\HH_vA = \mathrm{Span} \left\{\left.\delta_i \right|_v \right\}, \qquad \delta_i := \frac{\delta}{\delta x^i} := \frac{\partial}{\partial x^i}  - N_i^j \frac{\partial}{\partial y^j},
	\label{eq:nonlinear connection}
\end{equation}
and also by their {\em nonlinear covariant derivative} $\mathrm{D}_X\colon\mathfrak{X}^A(U)\rightarrow\mathfrak{X}(U)$, 
\begin{equation}
\mathrm{D}_XV:=X^j\left(\frac{\partial V^i}{\partial x^j}+N^i_j(V)\right)\partial_i,
\label{eq:nonlinear covariant derivative}
\end{equation}
for $X\in\mathfrak{X}(U)$. They also provide (at least locally) a {\em nonlinear parallel transport} of observers $v\in A\cap\TT_{\gamma(0)} M$ along curves $\gamma\colon\left[0,t\right]\rightarrow M$. Namely, a map $P_t:A_{\gamma(0)}\rightarrow A_{\gamma(t)}$ defined as $P_t(v)=V(t)$, being $V$ the only vector field along $\gamma$ such that $V(0)=v$ and $D_{\dot\gamma}V=0$ (see \cite[Def. 12]{JSV1} and the comment below).

An {\em $A$-anisotropic connection} is an operator $\nabla\colon\vect(M)\times\vect(M)\rightarrow\vect(M_A)$ satisfying the usual Koszul derivation properties, see \cite{Jav1,Jav2,JSV2}. In a chart domain $U$, they are characterized by their Christoffel symbols 
$\varGamma_{jk}^i:A\cap TU\rightarrow\R$,
\[
\nabla_{\partial_j}\partial_k =: \varGamma_{jk}^i \partial_i.
\]
{  They can be seen as vertically trivial} linear connections on the vector bundle $\VV A \rightarrow A$ \cite[Th. 3]{JSV1}. On the other hand, every anisotropic connection has an {\em underlying nonlinear connection}, the only one with nonlinear coefficients
\[
N^i_j:=\varGamma_{jk}^i y^k.
\]
As a consequence, they define the {\em covariant derivative} $\nabla\colon\ten_s^r(M_A)\rightarrow\ten_{s+1}^r(M_A)$ for any anisotropic tensor: 
\[
\nabla_{j_{s+1}}T^{i_1,...,i_r}_{j_1,...,j_s}=\delta_{j_{s+1}} T^{i_1,...,i_r}_{j_1,...,j_s}+\sum_{\mu=1}^r\varGamma_{j_{s+1}k}^{i_\mu}T^{i_1,...,k,...,i_r}_{j_1,...,j_s}-\sum_{\nu=1}^s\varGamma_{j_{s+1} j_\mu}^{k}T^{i_1,...,i_r}_{j_1,...,k,...,j_s}.
\]

\subsection{Lorentz-Finsler metrics}\label{s2.3}

From now on, we will always assume that $A$ is conic ($\lambda v\in A$ for $v\in A$ and $\lambda\in (0,\infty)$). We shall follow the definitions and conventions in \cite{JS20,JSV1}. In particular, a {\em Finsler spacetime} $(M,L)$ is a (connected) manifold $M$ endowed with a {\em (properly) Lorentz-Finsler metric} 
$L\colon\overline{A} \subseteq \TT M\setminus\mathbf{0} \rightarrow \left[0,\infty\right)$. $L$ is required to be  
smooth, positive homogeneous and, when restricted to each $A_p:= \TT_pM\cap A$ ($p\in M$), its vertical Hessian $g$ is non-degenerate with signature $(+,-,\dots , -)$;  $A_p$ must be connected and salient, and  its boundary  in $\TT M\setminus\mathbf{0}$, which must be equal to $L^{-1}(0)$, is a (strong) {\em cone structure} $\C$. In particular, at each point $p$, $L$ is a {\em Lorentz norm}. By positive homogeneity, $L$ is determined by its {\em indicatrix} $L^{-1}(1)$. 

Notice that the cone $\C$ yields a natural notion of timelike, lightlike and spacelike tangent vectors but $L$ is not defined on the latter. Indeed, we are not interested in the value of $L$ on spacelike vectors by  physical reasons  which  are analyzed in \cite{BerJavSan}. Roughly, only particles (massive, massless) can be measured and, so, experimental evidences  only can affect 

$\Sigma$ and  
$\C$. Even though this also happens in 
classical Relativity,  the value of the Lorentz metric on the (future-directed) timelike vectors is enough to extend it to all the directions. Indeed, the anisotropies in Finsler spacetimes should be regarded as originated by the distribution of matter and energy in the causal directions rather than by (unobservable) spacelike anisotropies.     

 Even though it is the Lorentz-Finsler case which has a physical interpretation, in all other aspects the theory carries on if $L$ is just {\em pseudo-Finsler}, namely positively $2$-homogeneous with non-degenerate $g$ on $A$. In fact, this is the context in which we will develop \S \ref{sec:divergence vectors} and \ref{s5}, as they are of a more mathematical character.

The {\em Cartan tensor} of $L$ is
\[
C:=\frac{1}{2}\dot{\partial g}, \qquad C_{ijk}=\frac{1}{2}\frac{\partial g_{ij}}{\partial y^k}.
\]
It is actually symmetric, so one can define the {\em mean Cartan tensor} as
\begin{equation}
C^\mathfrak{m}(X):=\mathrm{trace}_g\left\{C(X,-,-)\right\},\qquad\left(C^\mathfrak{m}\right)_j=g^{ik}C_{ijk}=:C_j,	
\label{eq:mean cartan}
\end{equation}
for $X\in\vect(M_A)$. $L$ has also a canonically associated connection: the {\em metric nonlinear connection}, $\HH A$, of nonlinear coefficients
\begin{equation}
N^i_j:=\gamma_{jk}^iy^k-C^i_{jk}\gamma^k_{ab}y^ay^b,\qquad\gamma_{jk}^i:=\frac{1}{2}g^{ic}\left(\frac{\partial g_{cj}}{\partial x^k}+\frac{\partial g_{ck}}{\partial x^j}-\frac{\partial g_{jk}}{\partial x^c}\right).
\label{eq:metric connection}
\end{equation}
This is the underlying nonlinear connection of several anisotropic connections. One is the (Levi-Civita)--Chern $\nabla$, the only symmetric anisotropic connection that parallelizes $g$. It is the horizontal part of Chern-Rund's and Cartan's classical connections and it has Christoffel symbols
\begin{equation}
\varGamma_{jk}^i:=\frac{1}{2}\,g^{il}\left(\frac{\delta g_{lj}}{\delta x^k}+\frac{\delta g_{lk}}{\delta x^j}-\frac{\delta g_{jk}}{\delta x^l}\right),
\label{eq:chern}
\end{equation}
 where the $\delta_i$ are those associated with \eqref{eq:metric connection}. Another one is the Berwald $\widehat{\nabla}$. This is the horizontal part of Berwald's and Hashiguchi's classical connections and it has Christoffel symbols
 \begin{equation}
 \widehat{\varGamma}_{jk}^i:=\frac{1}{2}\,g^{il}\left(\frac{\delta g_{lj}}{\delta x^k}+\frac{\delta g_{lk}}{\delta x^j}-\frac{\delta g_{jk}}{\delta x^l}\right)+\la_{jk}^i.
 \label{eq:berwald}
 \end{equation}
Here, $\la_{jk}^i$ are the components of a tensor metrically equivalent to the {\em Landsberg tensor of $L$}, which, among many other ways, can be defined as
\[
\la_{ijk}:=\frac{1}{2}g_{lm}\dot{\partial}_i\dot{\partial}_jN_k^ly^m
\]
for the $N_k^l$ of \eqref{eq:metric connection} (see \cite[(37)]{Jav1}). The Landsberg tensor is actually symmetric too, so one can define the {\em mean Landsberg tensor of $L$} as
\begin{equation}
\la^\mathfrak{m}(X):=\mathrm{trace}_g\left\{\la(X,-,-)\right\},\qquad\left(\la^\mathfrak{m}\right)_j=g^{ik}\la_{ijk}=:\la_j.	
\label{eq:mean landsberg}
\end{equation}

\section{Basic interpretations on the stress-energy tensor $T$ 
}\label{s3}

Let us start with a discussion at each event $p\in M$ of a Finsler spacetime $(M,L)$. We can consider $\TT_pM$ endowed with the Lorentz norm $L|_{\TT_pM}$.  In most of this section, the discussion relies essentially on the particular case  when  $M$ is a real affine $n$-space with associated vector space $V$ (which plays the role of $\TT_pM$ in the general case) and $L$ is a Lorentz-Finsler norm on $V$ with indicatrix $\Sigma$ and cone $\C$ included in $V$.  Given  $u, u'\in \Sigma$, consider the corresponding fundamental tensors   $g_u$ and $g_{u'}$ and take orthonormal bases $B_u$, $B_{u'}$, obtained extending $u$, $u'$.  In a natural way, these bases live in $\TT_uV, \TT_{u'}V$ and they can be identified with bases in $V$ itself. Assuming this, the change of coordinates between $B_u$, $B_{u'}$ is linear but  not a Lorentz transformation, in general.

Extending the interpretations in Relativity, $p\in M$ is an {\em event},  the affine simplification  includes  the case of  Very Special Relativity \cite{Bogoslovsky, CG, GGP07},  $u\in \Sigma$  can be regarded as an {\em  observer}, the tangent space to the indicatrix $\TT_u\Sigma$ (i.e., the subspace $g_u$-orthogonal to $u$ in $\TT_uV\equiv V$) becomes the {\em restspace} of the observer $u$, and $B_u$ is an {\em inertial reference frame} for this observer. The {\em Lorentz invariance breaking} corresponds to the fact that the bases $B_u$ and $B_{u'}$ are orthonormal for the different metrics $g_u, g_{u'}$ and, thus, the linear transformation between the coordinates of $B_u$ and $B_{u'}$  (when regarded as elements of the same vector space $\TT_uV\equiv V \equiv \TT_{u'}V$) is not a Lorentz one. 
If the affine simplification is dropped, such elements (observers, restspaces)  must be regarded as  instantaneous  at $p\in M$. 

It is worth emphasizing that, according to the viewpoint introduced in \cite{JavSan14} and discussed extensively in \cite{BerJavSan}, the spacelike directions are not physically relevant for the Lorentz-Finsler metric. However, each (instantaneous) observer does have a restspace with a Euclidean scalar product. In the case of classical Relativity, Lorentz-invariance permits  natural identifications between these restspaces, and they become consistent with the value of the scalar product on spacelike directions. Certainly, a Lorentz norm $L$ 

could be extended outside these directions (maintaining the Lorentz signature for its fundamental tensor) but this can be done  in many different ways, and no relation with the scalar products $g_u, u\in \Sigma$ would hold.

The dropping of natural identifications associated with the Lorentz invariance implies that many notions which are  unambiguously  defined in  classical Relativity
admit many different alternatives now. In the Appendix we analyze some of them for the {\em relative velocity} between observers as well as other kinematical concepts. This is  taken  into account in the following discussion about how the  Finslerian setting affects the notion of {\em energy-momentum-stress} tensor.

\subsection{ 
Particles and dusts: anisotropic picture of isotropic elements} \label{s3.1}
In principle, there is no reason to modify the classical relativistic interpretation of $p= m u$ as the {\em (energy-) momentum vector}  of a particle of {\em (rest) mass} $m>0$ moving in the observer's direction $u\in \Sigma$. Moreover, if the particle moves in such a way that $m$ is constant, it will be represented by a unit timelike curve $\gamma(\tau)$ such that $p(\tau )=m\gamma'(\tau)$ will be its instantaneous momentum at each {\em proper time} $\tau$. The  (covariant)  derivative $p'=m\gamma''$ would be the force $F$ acting on the particle, which is necessarily $g_{\gamma'}$-orthogonal to $\gamma'$ (i.e., the force lies in the instantaneous restspace  of the particle). Then, the relativistic conservation of the momentum in the absence of external forces would retain its natural meaning, namely, if the particle represented by $(m, \gamma)$ splits into two $(m_1, \gamma_1)$ and $(m_2, \gamma_2)$ at some $\tau_0$ then $m\gamma'(\tau_0)= m_1\gamma_1'(\tau_0)+m_2\gamma_2'(\tau_0)$. 

The Appendix suggests that the way how an observer $u$ may measure the 
energy-momentum and conservation may be non-trivial. In particular, if one assumes that an observer $u$ measures $m\gamma' \in \TT_pM$ by using a $g_u$-orthonormal basis $B_u$

in general, $g_u  (m\gamma', m\gamma')\neq m^2 (=L(m\gamma'))$. Moreover,  as we have already  commented,  the coordinates for other observer $u'$ will not transform by means of Lorentz transformation. However, as the transformation of their coordinates is still linear, and both of them will write consistently $m\gamma'(\tau_0)= m_1\gamma_1'(\tau_0)+m_2\gamma_2'(\tau_0)$  in their coordinates.

Particles are also the basis to model dusts, which constitute  the simplest class of relativistic fluids. A dust is represented by a  
number-flux vector field  $N=nU$,  where $U$ represents the intrinsic velocity of the particle in the dust, i.e. a comoving observer, and $n$ is the  density of the dust for each momentaneously comoving reference frame. Comparing with the case of energy momentum, $N$ is also an intrinsic object which lives at the tangent space of each point and $U$ gives the priviledged observer who measures $n$. However, the  measures of $n$ by different observers involve different measures of the volume. As explained in the Appendix, the length contraction may be fairly unrelated to the relative velocities of the observers. This implies a more complicated transformation of the coordinates by different observers.  Anyway,  the transformations between these coordinates would remain linear and, so,  they could still agree in the fact that they are measuring the same intrinsic vector field.

Summing up, in the case of both particles and dusts, one assumes that the physical property lives in $V$ (or, more properly, in each tangent space $\TT_pM$ of the affine space) and there is a priviledged (comoving) observer $u$. The transformation of coordinates for other observer $u'$ may be complicated   but, at the end, it is a linear transformation which can be determined by specifying 
 the geometric quantities which are being measured as well as  the geometry of $\Sigma$. Thus, by using the coordinates measured by each observer  one could construct and  anisotropic vector field at each $p\in M$, which will fulfill some constraints, as  the measurement by one of the observers (in particular, the priviledged one) would determine  the  measurements by all the others.

\subsection{Emergence of an anisotropic stress-energy tensor}\label{s3.2}
The situation, however, is subtler for more general fluids, which are modelled classically by a 2-tensor on the underlying manifold. 

Let us start recalling the Newtonian and Lorentzian cases. In Classical Mechanics one starts working in an orthonormal basis of  Euclidean space to obtain the components $T_{ij}$ of the Cauchy stress tensor, which   give the flux of $i$-momentum (or force) across the 
$j$-surface in the background\footnote{\bb In this section, $i,j=1,2,3$ and $\mu,\nu=0,1,2,3$, but in the others they will run freely from 1 to $n$ ($=$ dim $M$). \eb}. The  laws of conservation of linear momentum and static equilibrium of  forces 
imply that these components give truly a 2-tensor (linear in each variable) and the conservation of linear momentum implies that this tensor is symmmetric. 

In the relativistic setting, each observer will determine some symmetric components  $T^{ij}$ in its restspace by essentially the same procedure as above. Additionally, it constructs $T^{00}$,  $T^{0i}$ and $T^{i0}$ as the density energy, energy flux across 
$i$-surface and $i$-momentum density, resp. The interpretation of these magnitudes completes the symmetry\footnote{\label{foot_sym}The symmetry of $T$ is dropped for the case of theories with high spin because of its contribution to angular momentum. 

} $T^{0i}=T^{i0}$ as well as the linearity in the $0$-component. 
However, the bilinearity in the components $T^{\mu\nu}$ has been only ensured for vectors in the restspace of the observer. In Relativity, one can claim Lorentz invariance in order to complete the reasons justifying that, finally, the components $T^{\mu\nu}$ will transform as a tensor\footnote{See for example  \cite[\S 4.5]{Sc}, \cite[\S 35]{LL}.}. 

Nevertheless, it is not clear in Lorentz-Finsler geometry why the transformation of the components $T_{ij}$ from an observer $u$ to a second one $u'$ must be linear, taking into account that they apply to  spacelike coordinates in distinct Euclidean  subspaces  and no Lorentz-invariance is assumed. 
Indeed, the following simple academic example shows that this is not the case.

\begin{example}\label{ex_Lioville stress energy tensor} Assume that $(M,L)$ is an affine space with a Lorentz norm with domain $A$  and consider the anisotropic tensor\footnote{The division by $L$ is so that $\mathbf{T}$ is $0$-homogeneous overall, as anisotropic stress-energy tensors should be in order to correctly generalize the classical case.} $\mathbf{T}  
=L^{-1}\phi \; \CC 
\otimes \CC  
$, where $\CC$ is the canonical (Liouville) vector field and $\phi: \Sigma \rightarrow \R$ is a smooth function which is extended as a 0-homogeneous function on $A$. Then, for each $u\in\Sigma$ and $w\in \TT_u\Sigma$ one has $\mathbf{T}_u(u,u)=\phi(u)$, $\mathbf{T}_u(w,w)=0$, 
$\mathbf{T}_u(u,w)=0$. In this case, each 
$\mathbf{T}_u$ is a symmetric 2-tensor, but the information on $\mathbf{T}$ requires the knowledge of $\phi(u)$ for all possible $u\in\Sigma$. Recall that this example holds even if $(M,L)$ is  the Lorentz-Minkowski spacetime regarded as a Finsler spacetime (but no Lorentz-invariance is assumed for $\mathbf{T}$).
\end{example}

Therefore, the following issues about $T$ appear:    

\bit
 \item[(a)] \label{i_a} Observer dependence:  even if we assume  that the components $T^{\mu\nu}$ measured by any observer $u$ are bilinear and then,  it is a standard tensor, the components measured by a second observer $u'$ may transform by a linear map which depends on $\Sigma$ as well as the experimental way of measuring (as in the case of the energy-momentum vector).

 \item[(b)] Nonlinearity: it is not clear even why such a linear transformation must  exist,  as bilinearity is only ensured in the direction of $u$ and of its restspace. Thus, the tensor $T_u$  measured by a single observer $u$ would not be enough to grasp the physics of the fluid at each event $p\in M$, as in the example above.

 \item[(c)] \label{i_c} Contribution of the anisotropies of $\Sigma$: as an additional possibility,  the local geometry of $\Sigma$ at $u$   underlies  the measurements of this observer and might provide a contribution for  the stress-energy tensor itself. 
 \eit
 
Summing up, Lorentz-Finsler geometry leads to assume that the measurements by $u$ are not enough to determine the state of the fluid and  the stress-energy tensor should be regarded as a non-isotropic tensor field, determined by the measurements of all the observers.

Formally, this means  an  {\em anisotropic tensor} $T\in \mathcal{T}_0^2(M_A)$ (see  \cite{JSV1}  for a summary of the formal approach), which can be expressed locally as
$$
T_v= T^{\mu\nu}(v)
\left. \partial_{\mu}\right|_{x} 
\otimes
\left.\partial_{\nu}\right|_{x}, \qquad  v = y^\mu \left. \frac{\partial}{\partial x^\mu}\right|_x \equiv (x,y)\in A\subset \TT M,
$$
where $T^{\mu\nu}(\lambda v)=T^{\mu\nu}(v)$ for all $\lambda>0$ (i.e. $T_v$ depends only on the direction of $v$). As a first approach (recall footnote \ref{foot_sym}), we can assume $T^{\mu\nu}=T^{\nu\mu}$. Consistently, we will assume that there exists a Lorentz-Finsler metric $L$ on $M$ with indicatrix $\Sigma \subset \TT M$ and, so, indexes can be raised and lowered by using its fundamental tensor $g$. The fact that $T$ has order 2 is important to establish classical analogies. However, other  tensors might appear as more fundamental energy-momentum tensors and, then, one would try to derive a semi-classical 2-tensor as in \S \ref{s_lagr}.  

 In principle, the intuitive relativistic interpretations  would be transplanted directly to each $v$, whenever  $v\in\Sigma$.   That is, given two $g_v$-unit vectors $u,w$, the value $T_v(u,w)$ of the 2-covariant  stress-energy tensor    perceived by the observer $v$ (at $x=\pi(v)$)  is obtained as the flux of $w$-energy-momentum per unit of $g_v$-volume  orthogonal to $u$. More precisely, let      $B(u)$ be a  small coordinate 3-cube in a hypersurface  $g_v$-orthogonal to $u$ and  $P_B$ is the total flux of the energy-momentum of particles crossing $B(u)$ (being positive from the $-u$ side to the $u$ side and negative the opposite direction),  then the $w$-energy-momentum per unit of $g_v$-volume is
\[  \epsilon  \, T_v(u,w):=  \lim_{Vol_{g_v}(B(u))\rightarrow 0}\frac{g_v(P_B, w )}{Vol_{g_v}(B(u))}.\]

 where $\epsilon=g_v(w,w)$.  As a Finslerian subtlety, recall that  $g_v$ is only defined \bb in $T_v(T_xM)$ and then in \eb $T_xM$ (i.e., it is trivially extended to $B(u)$ in a coordinate depending way), but  the above limit depends only on the value of $g_v$. Namely, if one considers two semi-Riemannian metrics $g$ and $\tilde g$ in a neighborhood of $p$ such that $ g_p=\tilde g_p$ and $B_n$ are open subsets with $p$ in the interior of $B_m$ for all $n\in\N$ and  $\lim_{n\rightarrow +\infty}vol_g(B_m)=0$,  then 
\[\lim_{m\rightarrow +\infty}\frac{vol_g(B_m)}{vol_{\tilde g}(B_m)}=1.\]
In particular, we have the interpretations \bb (recall signature $(+, -, -, -)$): \eb
\begin{enumerate}
\item $T_v(v,v)$ is the energy density  measured  by $v\in \Sigma$, 
	\[T_v(v,v):=\lim_{Vol_{g_v}(B(v))\rightarrow 0}\frac{g_v(P_B,v)}{Vol_{g_v}(B(v))}=\lim_{Vol_{g_v}(B(v))\rightarrow 0}\frac{E_B}{Vol_{g_v}(B(v))},\]
 being $E_B:=g_v(P_B,v)$ the measured energy. 
	\item If  $w$  is $g_v$-orthogonal to $v$ and $g_v$-unit, $T_v(w,v)$ measures the flow of energy per unit of $g_v$-volume in a surface $g_v$-orthogonal to $v$ and $w$  (i.e. some small surface of area $A$ flowing a lapse $\Delta t$),  while $T_v(v,u)$ measures the $w$-momentum density,
	\[T_v(w,v):=\lim_{Vol_{g_v}(B(w))\rightarrow 0}\frac{g_v(P_B,v)}{Vol_{g_v}(B(w))}=\lim_{Vol_{g_v}(A)\rightarrow 0}\frac{1}{A}\left\{\lim_{\Delta t\rightarrow 0}\frac{E_B}{\Delta t}\right\}.\]
	\[  -  T_v(v,w):=\lim_{Vol_{g_v}(B(v))\rightarrow 0}\frac{g_v(P_B,w)}{Vol_{g_v}(B(v))}.\]
	\item If $ z ,w$ are $g_v$-orthogonal to $v$ and $g_v$-unit, $T_v(z,w)$ measures the flow of $w$-momentum per unit of $g_v$-volume in a surface $g_v$-orthogonal to $v$ and $z$,
		 \[  -  T_v(z,w):= \lim_{Vol_{g_v}(B(z))\rightarrow 0}\frac{g_v(P_B,w)}{Vol_{g_v}(B(z))}=\lim_{Vol_{g_v}(A)\rightarrow 0}\frac{1}{A}\left\{\lim_{\Delta t\rightarrow 0}\frac{g_v(P_B,w)}{\Delta t}\right\}.\]
\end{enumerate}

\subsection{Lagrangian viewpoint}\label{s_lagr} In the Lagrangian approach for  Special Relativity,  the background spacetime is assumed to be endowed with a flat metric $\eta$. So, the Lagrangian $\mathcal{L}$ is constructed by using the prescribed $\eta$ and  some matter fields $\phi_\alpha$.
The stress-energy tensor coincides with the {\em canonical}  energy-momentum  tensor associated  with  the Lagrangian, in most cases (the exceptions include theories involving  spin). This canonical  tensor appears as the Noether current  associated with the invariance by spacetime translations (i.e., when $\mathcal{L}(\phi_\alpha,\partial_\mu \phi_\alpha, x^\mu)\equiv \mathcal{L}(\phi_\alpha, \partial_\mu \phi_\alpha)$) , namely\footnote{See for example  \cite{Wald}  (around formula (E.1.36))  or  \cite[\S  32]{LL}.}

 \begin{equation} \label{e_SLagSET}T^{\mu\nu}= \frac{\partial \mathcal{L}}{\partial (\partial_\mu \phi_\alpha)}  \partial^\nu \phi_\alpha -\eta^{\mu\nu} \mathcal{L}.\end{equation}
 In principle, these interpretations would hold unaltered for the case of an affine space with a Lorentz norm,  including the case of  Very Special Relativity.

In General Relativity, however, the Lagrangian formulation  introduces a background Lagrangian independent of matter fields (the Einstein-Hilbert one, eventually with a cosmological constant) and, then, a  matter Lagrangian $\mathcal{L}_{matter}$
which includes a constant of coupling with the background. Then, the  safest way to define the stress-energy is the canonical one obtained as the corresponding action  term $\delta S_{matter}/ \delta  g^{\mu\nu}$  in the Euler-Lagrange equations\footnote{See for example, \cite[\S E.1]{Wald}, \cite[\S 4.3]{Carroll}, \cite[\S 21.2, \S 21.3]{MTW}. 
}, 
\begin{equation} \label{e_LagSET}T_{\mu\nu}=-2 \frac{\delta \mathcal{L}_{matter}}{\delta g^{\mu\nu}}+ g_{\mu\nu} \mathcal{L}_{matter}.\end{equation}
Any tensor obtained in this way will have some advantages to play the role of a stress-energy tensor, because it will be automatically symmetric (in contrast to \eqref{e_SLagSET}) and will have vanishing divergence.

In the Finslerian setting, 
 the variational viewpoint  has been  systematically studied  in a very recent paper  by Hohmann, Pfeifer and Voicu \cite{HPV21}. 
Previously,  the background Lagrangian closest to the Einstein-Hilbert functional in the Finslerian setting had  been studied  in  \cite{PW11, HPV}. Such a functional is obtained as the integral of the Ricci scalar function on the indicatrix of the Lorentz-Finsler metric\footnote{Some arguments which support strongly their choice are (see \cite{HPV0}): (a) the simplest analogous to the vacuum Einstein equation in the Finslerian approach  Ricci$=0$ (proposed by  Rund \cite{Rund},  and satisfied by Finsler pp-waves \cite{FP16})  is not a variational equation, (b) the Ricci scalar functional yields an Euler-Lagrange equation which agrees with Einstein's in the vacuum Lorentz case,  and (c) this Euler-Lagrange equation is the variational completion of the Finslerian   Ricci$=0$.}  $L$.
 Taking into account this background functional, they define the
 energy-momentum scalar function by taking the corresponding variational action  term \cite[formula (84)]{HPV21},
$$\mathfrak{T}= -2\frac{L^3}{|g|}\frac{\delta \mathcal{L}_{matter}}{\delta L}.$$
Notice that, here,  the functional coordinate for the Lagrangian is $L$ and, thus, an (anisotropic) function rather than a 2-tensor is obtained.  However, starting at this function some tensors become useful \cite[formulas (88), (91)]{HPV21}, in particular  a canonically associated (anisotropic   Liouville) 2-tensor  
$$\Theta^\mu_\nu=\frac{\mathfrak{T}}{L} \; \CC^\mu \, \CC_\nu $$
as in Example \ref{ex_Lioville stress energy tensor}.

Notice that, essentially, the information of  these  tensors is codified in $\mathfrak{T}$. Even though such a tensor is justified by the procedure of Gotay-Mardsen in \cite{GM}, some
issues as the following ones might deserve interest for a further discussion:
\ben\item This is not the unique natural possibility to construct an anisotropic 2-tensor starting at $\mathfrak{T}$. For example, an alternative would be  the vertical Hessian\footnote{The multiplication by $L$ is so that taking second vertical derivatives of the $2$-homogeneous $\mathfrak{T}L$ produces a $0$-homogeneous tensor, in the same way that the vertical Hessian of the $2$-homogeneous function $L$ is the $0$-homogeneous fundamental tensor $g$.},
\begin{equation}\label{e_vert_hess}
T_{\mu\nu}=\dot\partial_{\mu,\nu} (\mathfrak{T}L) \equiv 
\frac{\partial^2 (\mathfrak{T}L)}{\partial y^{\mu}\partial y^{\nu}}.
\end{equation}
It is natural to wonder about the choice  closer to the relativistic intuitions about the  stress-energy.
 \item Recently, the Palatini approach has also been studied   for the Finslerian setting \cite{JSV2}. 
There, the dynamic variables are $L$ and the components of an (independent) non-linear connection. Thus, a similar Lagrangian procedure  would lead to a  higher order tensor. In the relativistic setting   this approach supports classical Relativity, as  it recovers both  equations and (in the symmetric case) the Levi-Civita connection. However, the Palatini approach is no longer equivalent in the Finslerian case, as it yields non-equivalent connections and it shows a variety of possibilities for the non-linear connections.
So, it is natural to wonder about the most natural choice  of a Lagrangian-based stress-energy tensor in this setting.
\een

  Finally, let us discuss an example analyzed from the Lagrangian viewpoint in  \cite{HPV0, HPV21}  taking into account also the observers' one  in \S \ref{s3.2}.

\begin{example}\label{ex_kinetic}
  The gravitational field sourced by a kinetic gas has been deeply studied in \cite{HPV0, HPV21}. In the relativistic setting, this  is derived from the Einstein-Vlasov equations in terms of a 1 particle distribution function (1PDF) $\phi(x,\dot x)$ which encodes how many gas particles at a given spacetime point $x$ propagate on
worldlines with normalized 4-velocity $\dot x$.  Specifically,  the stress energy tensor is:
$$
T^{\mu\nu}(x)= \int_{\Sigma_x} \dot{x}^\mu \dot{x}^\nu\phi(x,\dot x) dvol_{g_x},
\qquad \qquad x\in M,$$
being $\Sigma_x$ the indicatrix (future-directed unit vectors of the Lorentz metric)  and $\hbox{dVol}_x$ the volume at each $x$.
In \cite{HPV0}, they propose  to derive the
gravitational field of a kinetic gas directly from the 1PDF without averaging, i.e., taking into account  the full information on the
velocity distribution. This leads to consider the function $\phi: \Sigma \rightarrow \R$, $u\equiv (x,\dot x) \mapsto \phi(u)\geq 0$ as  an   energy-momentum  function  which plays the role of a stress-energy tensor (even though it is a scalar rather than a 2-tensor).  Moreover, the original Lorentz metric is naturally allowed to be Lorentz-Finsler, which permits to obtain more general cosmological models \cite[\S III]{HPV0}. 

Indeed, up to a coupling constant, $\phi$ is regarded directly as the matter source in the Finslerian Einstein-Hilbert equation  (i. e., it is placed at the right-hand side of this equation, \cite[eqn. (7)]{HPV0}). It is worth pointing out:
\bit\item $\phi$ can be reobtained as a Lagrangian energy-momentum by inserting  it directly  as a term in the background Lagrangian \cite[eqn. (75)]{HPV21}. However, the Lagrangian is not natural then, as it depends on the variables of $M$ (recall  \cite[Appendix 3, \S (a)]{HPV21}). 
\item As discussed above, such a function allows one to construct several tensors, in particular the vertical Hessian $\partial^2 \phi/\partial\dot x^\mu \partial\dot x^\nu$ (as in \eqref{e_vert_hess}), which also might play a role to  compare with the relativistic $T^{\mu\nu}(x)$. 
\eit

Anyway, starting at the 1PDF $\phi$,  	another Finslerian interpretations would be possible. In particular,  one can define   
the {\em energy momentum distribution} $\phi(u)u$. Then, given an observer $v\in \Sigma$ and a $g_v$-unit vector, the $w${\em -energy momentum} might be defined as
$$g_v(u,w) \phi(u).$$ 
In particular, when $w=v$ this would be the {\em energy perceived by $v$} and when $w$ is unit and $g_v$-orthogonal to $v$ would be (minus) the {\em momentum in the direction $w$} (compare with the discussion at the end of \S \ref{s3.2}).
So, an alternative stress-energy tensor perceived by each observer $v\in \Sigma$ might be defined as the anisotropic tensor:
\[T_v(w,z)=\int_{\Sigma_{\pi(v)}} g_v(u,w)g_v(u,z)\phi(u)  d\hbox{vol}_{g_v} ,\]
where the integration in $u$ is carried out with the volume form of $(\Sigma_{\pi(v)}, g_v)$, denoted
by $dvol_{g_v}$.
\end{example} 

\section{Divergence of anisotropic vector fields} \label{sec:divergence vectors}

After studying the basic properties of the Finslerian stress-energy tensor $T$, our next aim is to analyze the meaning and significance of the {\em infinitesimal conservation law} $\di(T)=0$. Along this and the next section, we will always consider an anisotropic tensor $T\in\ten_1^1(M_A)$ interpreted as an endomorphism of anisotropic vector fields. $T^\flat\in\ten_2^0(M_A)$ and $T^\sharp\in\ten_2^0(M_A)$ will be defined on vectors and $1$-forms by $T^\flat(X,Y):=g(X,T(Y))$ and $T^\sharp(\theta,\eta):=g^\ast(T^\ast(\theta),\eta)$ resp., where $g^\ast$ is the inverse fundamental tensor and $T^\ast$ is the transpose of $T$. They will have components $\left(T^\flat\right)_{ij}=g_{il}T^l_j=:T_{ij}$ and $\left(T^\sharp\right)^{ij}=T^i_lg^{lj}=:T^{ij}$, and in principle we will not even assume that these are symmetric. We will be assuming that $M$ is orientable an oriented. This is not restrictive: one could always reduce the theory to this case by pulling back all the objects (the fibered manifold $A\rightarrow M$ included) to the oriented double cover of $M$ \cite[Ch. 15]{Lee}.

Let us briefly recall the mathematically precise meaning of the conservation laws in classical General Relativity ($g$, $T$ and $X$ isotropic). One has 
\begin{equation}
	\mathrm{div}(T(X))=\nabla_i(T^i_jX^j)=\nabla_iT^i_jX^j+T^i_j\nabla_iX^j=\di(T)(X)+\mathrm{trace}(T(\nabla X))
	\label{eq:div(T(X))}
\end{equation}
with $\nabla$ the Levi-Civita connection. The first contribution vanishes due to $\di(T)=0$, and there are different situations in which the second one vanishes as well. For instance, if $T^\flat(-,\nabla_{-}X)$ is antisymmetric, then
\begin{equation}
	\mathrm{trace}(T(\nabla X))=T^i_j\nabla_iX^j=g^{il}T_{lj}\nabla_iX^j=\frac{1}{2}g^{il}\left(T_{lj}\nabla_iX^j+T_{ij}\nabla_lX^j\right)=0,
	\label{eq:antisymmetry}
\end{equation}
and if $T^\flat$ is symmetric and $\nabla X^\sharp$ is antisymmetric (equiv., $X$ is a Killing vector field), then also
\begin{equation}
	\mathrm{trace}(T(\nabla X))=g^{il}T_{lj}\nabla_iX^j=\frac{1}{2}T_{lj}\left(g^{li}\nabla_iX^j+g^{ji}\nabla_iX^l\right)=0.
	\label{eq:killing}
\end{equation}
Anyway, whenever $\mathrm{trace}(T(\nabla X))=0$, one can integrate \eqref{eq:div(T(X))} and apply the pseudo-Riemannian divergence theorem to get the {\em integral conservation law}
\begin{equation}
	\int_{\partial D}\imath_{T(X)}(d\mathrm{Vol})=0,
	\label{eq:riemannian conservation law}
\end{equation}
where $\overline{D}$ is a domain of appropriate regularity, $\imath$ is the interior product operator and $d\mathrm{Vol}$ is the metric volume form. In a sense that will be made more precise in \S 5, this is expressing that the total amount of $X$-momentum in a space region only changes along time as much as it flows across the spatial boundary of the region. In this way, there is no ``creation" nor ``destruction" of $X$-momentum in any space region.

Extending the infinitesimal or the integral conservation laws poses, first and foremost, the problem of appropriately defining the divergence of an anisotropic $T$. Observe that a priori it is not clear even how to define the divergence of a vector field $Z$, isotropic or not, as one could consider $\mathrm{trace}(\nabla Z)$ for different anisotropic connections $\nabla$, mainly Chern's and Berwald's. An alternative is to seek for a more geometric, hence {\em unbiased}, definition. For instance, the {\em metric (anisotropic) volume form of $L$},
\begin{equation}
	\vol=\sqrt{\left|\det g_{ab}(x,y)\right|}\dd x^1\wedge...\wedge\dd x^n\in\form_n(M_A)
	\label{eq:metric volume form}
\end{equation}
for $(x^1,...,x^n)$ positively oriented, is well-defined, and when $Z\in\vect(M)$ (i. e., $Z$ is isotropic), so is the Lie derivative
\[
\Lie_Z\colon\ten(M_A)\rightarrow\ten(M_A)
\]
(see \cite[\S 5]{Jav1}). So, by analogy with the classical case, one could think of $\Lie_Z(\vol)$ for defining $\di(Z)$.

It turns out that the unbiased definition, including all $Z\in\vect(M_A)$, is achieved with a modification of this Lie derivative that we will regard as an extension of the classical Lie bracket. We devote the next subsection to the technical mathematical foundations of such an {\em anisotropic Lie bracket}, which needs of a nonlinear connection on $A\rightarrow M$ to be well-defined. All the maps $\ten(M_A)\rightarrow\ten(M_A)$ that will appear in \S \ref{subsec:bracket cartan} will be {\em (anisotropic) tensor derivations} in the sense of \cite[Def. 2.6]{Jav1} and their local nature will be apparent, so we will not explicitly discuss it. For example, the Lie derivative along $Z\in\vect(M)$ is the only tensor derivation such that for $X\in\vect(M)$ and $f\in\fun(A)$, 
\begin{equation}
	\Lie_ZX=\left[Z,X\right],\qquad\Lie_Zf=Z^{\mathrm{c}}(f) := Z^k \frac{\partial f}{\partial x^k}  + y^k \frac{\partial Z^i}{\partial x^k} \frac{\partial f}{\partial y^i}.
	\label{eq:classical lie derivative}
\end{equation}

\subsection{Mathematical formalism of the anisotropic Lie bracket} \label{subsec:bracket cartan}

During this subsection, we fix an arbitrary nonlinear connection given by $\TT A = \HH A \oplus \VV A$ or by the nonlinear covariant derivative $\mathrm{D}$ (keep in mind \eqref{eq:nonlinear connection} and \eqref{eq:nonlinear covariant derivative}), and also an anisotropic vector field $Z\in\vect(M_A)$.

For $X\in\vect(M_A)$, it is very natural to consider the commutator of the horizontal lifts of $Z$ and $X$:
\[
\left[Z^{\mathrm{H}},X^{\mathrm{H}}\right] = \left[Z^j \delta_j,X^k \delta_k\right] = \left(Z^j \delta_jX^i - X^j \delta_jZ^i\right)\delta_i + Z^j X^k \left[\delta_j,\delta_k\right]\in\vect(A).
\]
We recall that $Z^j X^k \left[\delta_j,\delta_k\right]$ is always vertical. Indeed, $\left[\delta_j,\delta_k\right]=\mathcal{R}_{jk}^i \dv_i$, where $\mathcal{R}$ is the curvature tensor of the nonlinear connection (see \cite{JSV2}, where this curvature is regarded as an anisotropic tensor and the homogeneity of the connection is not really required). This means that the horizontal part of $\left[Z^{\mathrm{H}},X^{\mathrm{H}}\right]$ has coordinates $Z^j \delta_jX^i - X^j \delta_jZ^i$, and this corresponds to a globally well-defined $A$-anisotropic vector field:
\begin{equation}
	\lie_Z X := \left(Z^j \delta_jX^i - X^j \delta_jZ^i\right)\partial_i \in\vect(M_A).
	\label{eq:lie bracket vector}
\end{equation}

\begin{defi}\label{d_4.1}
	$\lie_Z X$ is the {\em anisotropic Lie bracket of $Z$ and $X$ with respect to the nonlinear connection $\HH A$}.
\end{defi}

\begin{rem} \label{rem:anisotropic lie bracket}
	The word ``anisotropic" could be omited in the previous definition, in the sense that for $Z,X\in\vect(M_A)$, there is no other Lie bracket, isotropic or not, defined in general. Nonetheless, \eqref{eq:lie bracket vector} makes apparent that when $Z,X\in\vect(M)$ (i. e., when $Z$ and $X$ are isotropic), $\lie_Z X$ coincides with the standard Lie bracket $\left[Z,X\right]$ regardless of the connection.
\end{rem}

\begin{lemma}
	Given a nonlinear connection $\HH A$, $V\in \mathfrak{X}^A(U)$, $f\in{\mathcal F}(A)$ and anisotropic vector fields $X,Z\in \mathfrak{X}(M_A)$, it holds that
	\begin{equation}
		Z^\HH(f)=Z(f(V))-\dot\partial_{\mathrm{D}_ZV}f,
		\label{ZHf}
	\end{equation}
	\begin{equation}
	\left(\lie_ZX\right)_V=\left[Z_V,X_V\right]-\left(\dot\partial_{\mathrm{D}_ZV}X\right)_V+\left(\dot\partial_{\mathrm{D}_XV}Z\right)_V.
	\label{lieXZV}
	\end{equation}
\end{lemma}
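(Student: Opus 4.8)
The plan is to prove the two formulas by direct computation in natural coordinates, using the explicit local expressions for the objects involved together with the chain rule for composing an anisotropic tensor with an observer field $V$. The key observation is that, given $V\in\mathfrak{X}^A(U)$ and $f\in\mathcal F(A)$, the composition $f(V)$ is a function on $U$ whose ordinary derivative along a vector $Z$ decomposes into a ``horizontal'' piece and a ``vertical'' piece, precisely because $V$ itself varies with the point.

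First I would establish \eqref{ZHf}. Write $Z=Z^k\partial_k$ and recall $Z^\HH=Z^k\delta_k=Z^k(\partial_k-N^j_k\dot\partial_j)$, so by definition $Z^\HH(f)$ evaluated at $v$ means $Z^k\big(\partial_k f-N^j_k(v)\dot\partial_j f\big)$; composing with $V$ replaces $v$ by $V_p$. On the other hand, by the chain rule,
\begin{equation}
Z\big(f(V)\big)=Z^k\partial_k\big(f(x,V(x))\big)=Z^k\left(\frac{\partial f}{\partial x^k}+\frac{\partial f}{\partial y^j}\frac{\partial V^j}{\partial x^k}\right),
\end{equation}
all terms evaluated at $(x,V(x))$. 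Subtracting and using the definition \eqref{eq:nonlinear covariant derivative} of $\mathrm{D}_ZV=Z^k\big(\partial_k V^j+N^j_k(V)\big)\partial_j$ together with the coordinate expression of the vertical derivative $\dot\partial_{\mathrm{D}_ZV}f=(\mathrm{D}_ZV)^j\dot\partial_j f$ yields exactly $Z(f(V))-\dot\partial_{\mathrm{D}_ZV}f=Z^k\big(\partial_k f-N^j_k(V)\dot\partial_j f\big)=Z^\HH(f)$, as claimed. This is essentially bookkeeping: the cross term $\frac{\partial f}{\partial y^j}\frac{\partial V^j}{\partial x^k}$ from the chain rule is precisely what the $\dot\partial_{\mathrm{D}_ZV}f$ term is designed to cancel, up to the $N^j_k$ piece which reassembles the $\delta_k$.

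Next I would prove \eqref{lieXZV} by applying the same idea componentwise. Starting from \eqref{eq:lie bracket vector}, $(\lie_ZX)^i=Z^j\delta_jX^i-X^j\delta_jZ^i$, and composing with $V$ means evaluating $\delta_jX^i=\partial_jX^i-N^k_j\dot\partial_kX^i$ at $(x,V(x))$. The term $Z^j\partial_j(X^i(V))$, expanded by the chain rule, gives $Z^j\partial_jX^i+Z^j\frac{\partial X^i}{\partial y^k}\partial_jV^k$; rewriting $Z^j\partial_jX^i$ as $(Z_V)^j\partial_j(X_V)^i$ minus the extra chain-rule term, and likewise for the $X\,\delta\,Z$ piece, one sees the principal parts assemble into $[Z_V,X_V]^i$ while the leftover vertical-derivative terms assemble, after inserting the $N^k_j(V)$ contributions, into $-(\dot\partial_{\mathrm{D}_ZV}X)^i_V+(\dot\partial_{\mathrm{D}_XV}Z)^i_V$ by the very definition of $\mathrm{D}$. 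Equivalently, and perhaps more cleanly, I would feed \eqref{ZHf} into the identity at the level of horizontal lifts: since $(\lie_ZX)^i=Z^\HH(X^i)-X^\HH(Z^i)$ with $X^i,Z^i\in\mathcal F(A)$, applying \eqref{ZHf} twice gives $(\lie_ZX)_V=Z(X^i(V))\partial_i-\dot\partial_{\mathrm{D}_ZV}X\big|_V-\big(X(Z^i(V))\partial_i-\dot\partial_{\mathrm{D}_XV}Z\big|_V\big)$, and $Z(X^i(V))\partial_i-X(Z^i(V))\partial_i=[Z_V,X_V]$ because $X_V,Z_V$ are genuine vector fields on $U$.

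The main obstacle is purely notational rather than conceptual: one must be careful that $\delta_j$, $N^i_j$, and the vertical derivatives are all evaluated at the moving point $(x,V(x))$ after composition, and that the chain-rule cross terms are tracked consistently so they match the definition of the nonlinear covariant derivative $\mathrm{D}$. No homogeneity of the connection or of $L$ is needed. I expect the cleanest write-up to derive \eqref{ZHf} first by the chain rule and then obtain \eqref{lieXZV} as a formal corollary via the horizontal-lift identity $\lie_ZX=Z^\HH(X^{\bullet})\partial_\bullet-X^\HH(Z^\bullet)\partial_\bullet$, reducing the second formula to the observation that the ``horizontal'' part of a bracket of horizontal lifts descends to $[Z_V,X_V]$ on the base.
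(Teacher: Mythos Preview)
Your proposal is correct and follows essentially the same approach as the paper: prove \eqref{ZHf} by the chain rule applied to $f(V)$ together with the coordinate definition of $\mathrm{D}_ZV$, then deduce \eqref{lieXZV} by feeding \eqref{ZHf} componentwise into the horizontal-lift expression $(\lie_ZX)^i=Z^\HH(X^i)-X^\HH(Z^i)$. The paper does the second step by first specializing \eqref{ZHf} to $Z=\partial_i$ to obtain $\delta_if(V)=\partial_i(f(V))-(\dot\partial_{\mathrm{D}_{\partial_i}V}f)(V)$ and then substituting into \eqref{eq:lie bracket vector}, whereas you apply \eqref{ZHf} directly to the full $Z^\HH$ and $X^\HH$; the two are equivalent and your route is marginally more streamlined.
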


\begin{proof}
	Observe that 
	\[
	\begin{split}
	Z(f(V))-\dot\partial_{\mathrm{D}_ZV}f&=Z^i\left(\frac{\partial f}{\partial x^i}(V)+\frac{\partial f}{\partial y^j}(V)\frac{\partial V^j}{\partial x^i}\right) \\
	&\quad-\frac{\partial f}{\partial y^j}(V)Z^k\left(\frac{\partial V^j}{\partial x^k}-N^j_k(V)\right)\\
	&=Z^i\left(\frac{\partial f}{\partial x^i}(V)-\frac{\partial f}{\partial y^j}(V)N^j_i(V)\right) \\
	&=Z^\HH (f),
	\end{split}
	\]
	which concludes \eqref{ZHf}. In particular, $\delta_if(V)=\partial_i(f(V))-\left(\dot\partial_{\mathrm{D}_{\partial_i}V}f\right)(V)$, and using this in \eqref{eq:lie bracket vector}, \eqref{lieXZV} follows.
\end{proof}

We also recall that the {\em torsion} of an $A$-anisotropic connection $\nabla$ \cite[(18)]{Jav1}, \cite[Def. 5]{JSV1} is the anisotropic tensor $\mathrm{Tor}\in\ten_2^1(M_A)$ defined on first on isotropic fields $Z,X\in\vect(M)$ by $\mathrm{Tor}(Z,X) = \nabla_Z X-\nabla_X Z - \left[Z,X\right]$ and then extended by $\fun(A)$-bilinearity. Therefore, it can be regarded as and $\fun(A)$-bilinear map $\mathrm{Tor}\colon\vect(M_A)\times\vect(M_A)\rightarrow\vect(M_A)$ and it has coordinates 
\begin{equation}
	\mathrm{Tor}_{jk}^i=\varGamma_{jk}^i-\varGamma_{kj}^i,
	\label{eq:tor}
\end{equation}
where the $\varGamma_{jk}^i$'s are the Christoffel symbols of $\nabla$.\footnote{This is not to be mistaken by the torsion of the nonlinear connection $\HH A$, which would have coordinates $N_{j\,\cdot k}^i - N_{k\,\cdot j}^i$ (even though this can be seen as a particular case of the torsion of some $\nabla$ and hence it is also denoted by $\mathrm{Tor}$ in \cite{JSV2}).}

\begin{thm} \label{th:lie bracket} Let a nonlinear connection $\TT A=\HH A \oplus \VV A$ and an anisotropic vector field $Z\in\vect(M_A)$ be fixed. 
	
\noindent (A) If $\nabla$ is any $A$-anisotropic connection whose underlying nonlinear connection is $\HH A$, then for any $X\in\vect(M_A)$,
\begin{equation}
\mathrm{Tor}(Z,X) = \nabla_Z X -\nabla_X Z - \lie_{Z}X
\label{eq:lie bracket in terms of nabla}
\end{equation}
(where $\mathrm{Tor}$ is the torsion of $\nabla$).
	
\noindent (B) By imposing the Leibniz rule with respect to tensor products and the commutativity with contractions, the map $X\mapsto\lie_{Z}X$ extends unequivocally to an (anisotropic) tensor derivation $\lie_{Z}\colon\ten^r_s(M_A)\rightarrow\ten^r_s(M_A)$ given by 
\begin{equation}
	\begin{split}
	\lie_{Z}T(\theta^1,...,\theta^r,X_1,...,X_s)&=Z^\mathrm{H}(T(\theta^1,...,\theta^r,X_1,...,X_s))
	\\
	&\quad-\sum_{\mu=1}^r T(\theta^1,...,\lie_{Z}\theta^\mu,...,\theta^r,X_1,...,X_s)
	\\
	&\quad-\sum_{\nu=1}^s T(\theta^1,...,\theta^r,X_1,...,\lie_{Z}X_{\nu},...,X_s)
	\end{split}
	\label{eq:lie bracket tensor}
\end{equation}
for $\theta^\mu \in \varOmega_1(M)$ and $X_\nu\in\mathfrak{X}(M)$. In coordinates, if 
\[
T=T^{i_1,...,i_r}_{j_1,...,j_s}(x,y)\partial_{i_1}\otimes...\otimes\partial_{i_r}\otimes\dd x^{j_1}\otimes...\otimes\dd x^{j_s},
\]
then 
\begin{equation}
\left(\lie_{Z}T\right)^{i_1,...,i_r}_{j_1,...,j_s}=Z^k\frac{\delta T^{i_1,...,i_r}_{j_1,...,j_s}}{\delta x^k}-\sum_{\mu=1}^r\frac{\delta Z^{i_\mu}}{\delta x^k}T^{i_1,...,k,...,i_r}_{j_1,...,j_s}+\sum_{\nu=1}^s\frac{\delta Z^{k}}{\delta x^{j_\nu}}T^{i_1,...,i_r}_{j_1,...,k,...,j_s}.
\label{eq:lie bracket coordinates}
\end{equation}
	
\noindent (C) The map 
\[
\Lie_Z^\HH := \lie_{Z} - \dv_{\lie_{Z}\CC}\colon\ten(M_A)\rightarrow\ten(M_A)
\]
is also a tensor derivation. When $Z\in\vect(M)$,
\begin{equation}
	\Lie_Z^\HH T = \Lie_{Z}T
	\label{eq:consistency lie derivative}
\end{equation}
for all $T\in\ten(M_A)$, where $\Lie_{Z}$ is the Lie derivative \eqref{eq:classical lie derivative}, regardless of the nonlinear connection.

\noindent (D)  Given $V\in \mathfrak{X}^A(U)$ and $\omega\in\varOmega_n(M_A)$ ($n=\dim M$), it holds that
\begin{equation}\label{lieV}
\left(\lie_{Z} \omega\right)_V =\Lie_{Z_V} (\omega_V)-\dot\partial_{\mathrm{D}_{Z}V}\omega-\mathrm{trace}(\dot\partial _{\mathrm{D}V}Z)\omega.
\end{equation}
\end{thm}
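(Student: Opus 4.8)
The plan is to prove the four parts essentially in order, since each builds on the previous ones. For part (A), I would take $Z,X\in\vect(M)$ isotropic first (so that $\lie_Z X = [Z,X]$ by Remark \ref{rem:anisotropic lie bracket}), where \eqref{eq:lie bracket in terms of nabla} is just the definition of $\mathrm{Tor}$; then I would check $\fun(A)$-bilinearity of both sides as maps of $Z$ and $X$. The left side $\mathrm{Tor}(Z,X)$ is $\fun(A)$-bilinear by the comment before the theorem. For the right side, I would compute $\nabla_{fZ}X - \nabla_X(fZ) - \lie_{fZ}X$ and $\nabla_Z(fX) - \nabla_{fX}Z - \lie_Z(fX)$ using the Koszul rule for $\nabla$ (with respect to isotropic directions extended by $\fun(A)$-linearity) and the explicit coordinate formula \eqref{eq:lie bracket vector}; the non-tensorial derivative terms (involving $\delta_i f$) should cancel pairwise, exactly as in the classical computation that $\nabla_Z X - \nabla_X Z - [Z,X]$ is tensorial. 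The key input here is that the underlying nonlinear connection of $\nabla$ is $\HH A$, so the $\delta_i$ in \eqref{eq:lie bracket vector} are the same ones appearing in $\Gamma^i_{jk}y^k = N^i_j$; I would double-check in coordinates that $\Gamma^i_{jk}Z^jX^k - \Gamma^i_{kj}X^kZ^j + (Z^j\delta_j X^i - X^j\delta_j Z^i)$ using $\nabla_{\partial_j}\partial_k = \Gamma^i_{jk}\partial_i$ reproduces $\mathrm{Tor}^i_{jk}Z^jX^k$ via \eqref{eq:tor}, which is immediate.

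For part (B), I would first verify that the right-hand side of \eqref{eq:lie bracket tensor} is $\fun(A)$-multilinear in $\theta^1,\dots,\theta^r,X_1,\dots,X_s$ — this is the standard tensor-derivation bookkeeping: when an argument is multiplied by $f\in\fun(A)$, the derivative $Z^\HH(f)$ term coming from the first line is killed by a matching term from the relevant sum, because $\lie_Z(fX_\nu) = f\,\lie_Z X_\nu + Z^\HH(f)\,X_\nu$ (and similarly for $1$-forms, where $\lie_Z\theta$ is defined by duality, i.e. by \eqref{eq:lie bracket tensor} with $r=0$, $s=1$ applied through the contraction $\langle\theta,X\rangle$). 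Uniqueness follows from \cite[Def. 2.6]{Jav1}: a tensor derivation is determined by its action on $\fun(A)$ and on $\vect(M_A)$ together with Leibniz and commutation with contractions, and \eqref{eq:lie bracket tensor} is forced by those axioms. The coordinate formula \eqref{eq:lie bracket coordinates} is then obtained by evaluating \eqref{eq:lie bracket tensor} on $\theta^\mu = \dd x^{i_\mu}$ and $X_\nu = \partial_{j_\nu}$, using $\lie_Z\partial_j = -(\delta_j Z^i)\partial_i$ from \eqref{eq:lie bracket vector} and the dual relation $\lie_Z \dd x^i = (\delta_k Z^i)\dd x^k$, plus $Z^\HH(f) = Z^k\delta_k f$.

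For part (C): $\Lie_Z^\HH$ is a difference of two tensor derivations ($\lie_Z$ from (B) and $\dv_{\lie_Z\CC}$, which is a tensor derivation by the discussion of $\dot\partial_X$ in \S\ref{sec:preliminaries}), hence a tensor derivation. To prove \eqref{eq:consistency lie derivative} when $Z\in\vect(M)$, by uniqueness it suffices to check agreement on $\fun(A)$ and on $\vect(M_A)$. On $\vect(M)$ both give $[Z,X]$ (Remark \ref{rem:anisotropic lie bracket}, and $\lie_Z\CC$ involves only $\delta_j Z^i = \partial_j Z^i - N^i_j(\partial\cdot)$… here I must be careful). The cleanest route: on a general anisotropic vector field $X$, $(\lie_Z X)^i = Z^j\delta_j X^i - X^j\delta_j Z^i = Z^j\partial_j X^i - X^j\partial_j Z^i - Z^j N^k_j \dot\partial_k X^i + X^j N^k_j \dot\partial_k Z^i$; since $Z$ is isotropic $\dot\partial_k Z^i = 0$, leaving $(\lie_Z X)^i = [Z,X]^i_{\text{naive}} - \dot\partial_{\mathrm{D}_Z(\cdot)}$-type correction. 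Meanwhile $(\lie_Z\CC)^i = Z^j\delta_j y^i - y^j\delta_j Z^i = Z^j(-N^i_j) - y^j(\partial_j Z^i - N^k_j\dot\partial_k Z^i) = -Z^jN^i_j - y^j\partial_j Z^i$ (using $\dot\partial_k Z^i=0$), so $\dv_{\lie_Z\CC}X = (\lie_Z\CC)^k\dot\partial_k X^i\,\partial_i$. Adding, the $N$-terms and the $y^j\partial_j Z^i \dot\partial$-term combine so that $(\Lie_Z^\HH X)^i = Z^j\partial_j X^i - X^j\partial_j Z^i + y^k\partial_k Z^j\,\dot\partial_j X^i$, which is exactly $(\Lie_Z X)^i$ as read off from \eqref{eq:classical lie derivative} (the $f$-formula there, extended to vector fields, is $Z^\mathrm{c}$ acting componentwise plus the $-X^j\partial_j Z^i$ bracket term). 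I would present this as a short coordinate verification, noting the $N^i_j$-dependence genuinely cancels, which is the point of the $-\dv_{\lie_Z\CC}$ correction. Extension to all of $\ten(M_A)$ is then automatic by uniqueness of tensor derivations agreeing on generators.

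For part (D), the formula \eqref{lieV} for $n$-forms: I would evaluate $\lie_Z\omega$ via \eqref{eq:lie bracket tensor} with $r=0$, $s=n$ on $X_1,\dots,X_n\in\vect(M)$, composed with $V$. Using the already-proved lemma — identities \eqref{ZHf} for $Z^\HH(f)$ and \eqref{lieXZV} for $(\lie_Z X_\nu)_V$ — each term $Z^\HH(\omega(X_1,\dots,X_n))$ becomes $Z_V(\omega_V(X_1,\dots,X_n)) - \dot\partial_{\mathrm{D}_Z V}\omega$ evaluated appropriately, and each $(\lie_Z X_\nu)_V = [Z_V,X_{\nu,V}] - (\dot\partial_{\mathrm{D}_Z V}X_\nu)_V + (\dot\partial_{\mathrm{D}_{X_\nu}V}Z)_V$. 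Reassembling, the $[Z_V,\cdot]$ terms and the first term build the classical Lie derivative $\Lie_{Z_V}(\omega_V)$; the $\dot\partial_{\mathrm{D}_Z V}$ pieces collect (using that $\dot\partial_{\mathrm{D}_Z V}$ is itself a tensor derivation, so it distributes over $\omega(X_1,\dots,X_n)$) into the single term $-\dot\partial_{\mathrm{D}_Z V}\omega$; and the remaining $n$ terms $\sum_\nu \omega(X_1,\dots,(\dot\partial_{\mathrm{D}_{X_\nu}V}Z)_V,\dots,X_n)$ — where the $\nu$-th slot is replaced — form, for an $n$-form on an $n$-manifold, exactly $-\mathrm{trace}(\dot\partial_{\mathrm{D}V}Z)\,\omega$ evaluated on $(X_1,\dots,X_n)$, by the standard identity that $\sum_\nu \omega(\dots,A(X_\nu),\dots) = \mathrm{trace}(A)\,\omega$ for any endomorphism $A$ (here $A = \dot\partial_{\mathrm{D}(\cdot)V}Z$, i.e. $X\mapsto (\dot\partial_{\mathrm{D}_X V}Z)$, whose trace is $\mathrm{trace}(\dot\partial_{\mathrm{D}V}Z)$). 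The sign requires care: since $\lie_Z X_\nu$ enters \eqref{eq:lie bracket tensor} with a minus, the $+(\dot\partial_{\mathrm{D}_{X_\nu}V}Z)_V$ contributes with an overall minus, giving $-\mathrm{trace}(\dot\partial_{\mathrm{D}V}Z)\omega$ as stated.

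\medskip

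\noindent\emph{Main obstacle.} The genuine bookkeeping difficulty is part (C)/(D): tracking the $N^i_j$-terms and verifying they cancel in exactly the right way so that $\Lie_Z^\HH$ becomes connection-independent for isotropic $Z$ (part (C)) while \eqref{lieV} correctly isolates the trace term (part (D)). In particular, recognizing that $\sum_\nu \omega(X_1,\dots,(\dot\partial_{\mathrm{D}_{X_\nu}V}Z),\dots,X_n)$ equals a trace times $\omega$ — and getting that trace to be precisely $\mathrm{trace}(\dot\partial_{\mathrm{D}V}Z)$ with the correct sign — is the one spot where a sloppy index manipulation would give a wrong answer. Everything else (parts (A) and (B)) is the standard "this expression is tensorial because the non-tensorial pieces cancel" argument, which I expect to go through routinely.
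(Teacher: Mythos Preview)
Your proposal is correct and follows essentially the same route as the paper's proof: (A) via $\fun(A)$-bilinearity reducing to the isotropic case, (B) via the forced action on functions and $1$-forms plus evaluation on coordinate frames, (C) via checking the two tensor derivations agree on generators, and (D) via the lemma \eqref{ZHf}--\eqref{lieXZV} together with the trace identity for top-degree forms. The only notable difference is in (C): the paper checks agreement on \emph{isotropic} $X\in\vect(M)$ (where $\dot\partial_{\lie_Z\CC}X=0$ trivially, so $\Lie_Z^\HH X=\lie_Z X=[Z,X]=\Lie_Z X$ is immediate) and then on $f\in\fun(A)$ by a short coordinate computation, whereas you carry out the longer verification directly on anisotropic $X$; both work, but the paper's choice is quicker and you should still write out the function case explicitly.
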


\begin{proof}
	(A) It is straightforward to compute that the right hand side of \eqref{eq:lie bracket in terms of nabla} is ${\mathcal F}(A)$-multilinear. Moreover, the identity is trivial on isotropic vector fields $X,Z\in\mathfrak{X}(M)$, as $\lie_ZX=\left[X,Z\right]$ in this case, which concludes.
	
	(B) Given $f\in\ten^0_0(M_A)=\fun(A)$, for $X\in\ten^1_0(M_A)=\vect(M_A)$ it follows from \eqref{eq:lie bracket vector} that
	\[
	\lie_{Z}(f X)=Z^\mathrm{H}(f) X + f\,\lie_{Z}X.
	\]
	Thus, in order to respect the Leibniz rule, the only possibility is to define 
	\begin{equation}
	\lie_{Z}f=Z^\mathrm{H}(f)=Z^k \frac{\delta f}{\delta x^k}.
	\label{eq:lie bracket function}
	\end{equation}
	Now, given $\theta\in\ten^0_1(M_A)=\form_1(M_A)$, in order to respect again the Leibniz rule and the commutativity with contractions, the only possibility is to define $\lie_{Z} \theta$ on every $X\in\vect(M_A)$ by 
	\begin{equation}
	\left(\lie_{Z}\theta\right)(X)=Z^\mathrm{H}(\theta(X))-\theta(\lie_{Z} X)=\left(Z^k \frac{\delta \theta_j}{\delta x^k}+\frac{\delta Z^k}{\delta x^j} \theta_k\right)X^j. 
	\label{eq:lie bracket form}
	\end{equation}
	\eqref{eq:lie bracket function}, \eqref{eq:lie bracket vector} and \eqref{eq:lie bracket form} make apparent that $\lie_Z$ is already local on functions, vector fields and $1$-forms, and they allow to compute 
	\begin{equation}
	\lie_Z(\partial_i)=-\frac{\delta Z^k}{\delta x^i}\partial_k,\qquad\lie_Z(\dd x^j)=\frac{\delta Z^j}{\delta x^k}\dd x^k.
	\label{eq:lie partial and dx}
	\end{equation}
	Finally, given $T\in\mathcal{T}^r_s(M_A)$, one is led to define $\lie_Z T$ by \eqref{eq:lie bracket tensor}. Clearly, this indeed provides a tensor derivation and \eqref{eq:lie bracket coordinates} follows from the evaluation of \eqref{eq:lie bracket tensor} at $(\dd x^{i_1},...,\dd x^{i_r},$ $\partial_{j_1},...,\partial_{j_s})$
	together with \eqref{eq:lie bracket function} and \eqref{eq:lie partial and dx}. 
	
	(C) $\dv_X\colon\ten(M_A)\rightarrow\ten(M_A)$ is a tensor derivation for any $X\in\vect(M_A)$, in particular for 
	\begin{equation}
	X = \lie_{Z}\CC = \left(Z^j \delta_jy^i - y^j \delta_jZ^i\right)\partial_i = -\left(Z^j N_j^i + y^j \delta_jZ^i\right)\partial_i 
	\label{eq:lie bracket can}
	\end{equation}
	(see \eqref{eq:lie bracket vector}). Thus, the difference $\Lie_Z^\HH = \lie_{Z} - \dv_{\lie_{Z}\CC}$ is again a derivation. As for the last assertion, where $Z\in\vect(M)$, we are going to use \cite[Prop. 2.7]{Jav1}. For $X\in\vect(M)$, we have
	\begin{equation}
	\Lie_Z^\HH X = \lie_Z X = \left[Z,X\right] = \Lie_Z X
	\label{eq:consistency lie bracket}
	\end{equation}
	(recall Rem. \ref{rem:anisotropic lie bracket}). For $f\in\fun(A)$, we have
	\[
	\begin{split}
	\Lie_Z^\HH f = \lie_{Z}f - \dv_{\lie_{Z}\CC}f &= Z^j \delta_jf + \left(Z^j N_j^i + y^j \delta_jZ^i\right)\dv_if \\
	&= Z^j \left(\partial_jf - N_j^i \dv_ if\right) + \left(Z^j N_j^i + y^j \delta_jZ^i\right)\dv_if \\
	&= Z^j \partial_jf + y^j \delta_jZ^i \dv_if \\
	&= \Lie_Z f
	\end{split}
	\] 
	(see \eqref{eq:lie bracket function}, \eqref{eq:lie bracket can}, \eqref{eq:nonlinear connection} and \eqref{eq:classical lie derivative}). As $\Lie_Z^\HH$ and $\Lie_Z$ act the same on isotropic vector field and anisotropic functions, they are equal.
	
	(D) Observe that for $X\in \mathfrak{X}(M)$, the term $\dot\partial_{\mathrm{D}_ZV}X$  vanishes in \eqref{lieXZV}. Moreover,  if $Z\in \mathfrak{X}(M_A)$ and $f\in{\mathcal F}(A)$, then $Z^\HH(f)_V=Z_V(f(V))-\left(\dot\partial_{\mathrm{D}_ZV}f\right)(V)$. Given a local reference frame $E_1,...,E_n\in \mathfrak{X}(U)$, and taking into account the last two identities and the definitions of $\lie$ and $\Lie$, it follows that
	\[
	\begin{split}
	\left(\lie_{Z} \omega\right)_V(E_1,...,E_n) -\Lie_{Z_V} (\omega_V)(E_1,...,E_n)&=-\dot\partial_{D_ZV}\omega(E_1,...,E_n)\\
	&\quad-\sum_{i=1}^n \omega(E_1,..., \dot\partial_{D_{E_i} V}Z,..., E_n).
	\end{split}
	\]
	As $\omega(E_1,..., \dot\partial_{D_{E_i} V}Z,..., E_n)= E_i^*(\dot\partial_{D_{E_i} V}Z) \omega_V(E_1,...,E_n)$, \eqref{lieV} follows.
\end{proof}

\begin{defi}\label{d_4.5}
	The tensor derivation $\lie_Z\colon\ten(M_A)\rightarrow\ten(M_A)$ defined in Th. \ref{th:lie bracket} (B) is the {\em (anisotropic) Lie bracket with $Z$}, while $\Lie^\HH_Z\colon\ten(M_A)\rightarrow\ten(M_A)$ is the {\em (anisotropic) Lie derivative along $Z$}, both of them {\em with respect to the connection $\HH A$}.
\end{defi}

\begin{rem} [Anisotropic Lie bracket and Lie derivative]
	The derivation $\Lie_Z^\HH$ defined in Th. \ref{th:lie bracket} (C) would be the {\em Lie derivative along $Z$ with respect to $\HH A$}. Analogously to the discussion of Rem. \ref{rem:anisotropic lie bracket}, what makes this name consistent is \eqref{eq:consistency lie derivative}: whenever the Lie derivative along $Z$ was already defined, $\Lie^\HH_Z$ coincides with it. Even though the Lie bracket and the Lie derivative are equal in the classical regime, it is heuristically useful to regard $\lie$ as the anisotropic generalization of the former and $\Lie^\HH$ as that of the latter, in order to distinguish them. It is actually $\lie$, and not $\Lie$, which will be relevant for the definition of divergence. The reason is that the former, as we will see below, has a clear geometric interpretation in terms of flows, while the latter would just add the term $\dv_{\lie_{Z}\CC}$ to that interpretation. Moreover, Th. \ref{th:lie bracket} (D) actually corresponds to a Cartan formula for $\Lie_Z$ whose full development we postpone for a future work. Thus, $\Lie_Z(\vol)=\Lie_Z^\HH(\vol)$ can be regarded as an initial guess for the divergence of $Z$, but we will not employ $\Lie^\HH$ from now on.
\end{rem}

Let us observe that given a diffeomorphism $\psi_t\colon M\rightarrow M$ that is the flow of an isotropic vector field $Z$, we can define the pullback $\psi_t^*(\omega)$ of an anisotropic differential form $\omega\in  \form_s(M_A)$ as the anisotropic form given by $\psi_t^*(\omega)_v(u_1,...,u_s):=\omega_{P_t(v)}(\dd\psi_t(u_1),...,\dd\psi_t(u_s))$, where $P_t(v)$ is the $\HH A$-parallel transport of $v$ along the integral curve of $Z$ and $u_1,..., u_s\in \TT_{\pi(v)}M$.

\begin{prop}\label{Lieflowformula}
	If $Z\in \mathfrak{X}(M)$ and $\omega\in \form_s(M_A)$, then 
	\begin{equation}\label{lieH}
	\lie_Z\omega=\lim_{t\rightarrow 0}\frac{\psi^*_t(\omega)-\omega}{t},
	\end{equation}
	where $\psi_t$ is the (possibly local) flow of $Z$.  
\end{prop}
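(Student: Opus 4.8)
The plan is to reduce \eqref{lieH} to the classical statement that the ordinary Lie derivative is computed by the flow, corrected by the effect that parallel transport has on the observer slot. Both sides of \eqref{lieH} are anisotropic $s$-forms, so it suffices to check the identity after composing with an arbitrary local observer field $V\in\mathfrak{X}^A(U)$ and evaluating on isotropic vector fields $E_1,\dots,E_s\in\mathfrak{X}(U)$; by Th.~\ref{th:lie bracket} (B)--(D) and, concretely, formula \eqref{lieV} (and its evident analogue for general $s$, obtained the same way from \eqref{lieXZV} and \eqref{ZHf}), the left-hand side evaluated at $V$ is known explicitly in terms of $\Lie_{Z_V}(\omega_V)$, $\dot\partial_{\mathrm{D}_ZV}\omega$ and the $\dot\partial_{\mathrm{D}_{E_i}V}Z$ terms. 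So the real task is to expand the right-hand side of \eqref{lieH} the same way and match.

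First I would fix $v\in A$ with $\pi(v)=p$, write $\psi_t$ for the (local) flow of $Z$ near $p$, and recall that $P_t$ denotes $\HH A$-parallel transport along the integral curve $t\mapsto\psi_t(p)$. By definition of the pullback of an anisotropic form given just before the statement,
\[
\psi_t^*(\omega)_v(u_1,\dots,u_s)=\omega_{P_t(v)}\bigl(\dd\psi_t(u_1),\dots,\dd\psi_t(u_s)\bigr),\qquad u_1,\dots,u_s\in\TT_pM .
\]
I would split the $t$-derivative at $t=0$ into two contributions via the product/Leibniz structure: (i) the contribution of the base-point and the differentials $\dd\psi_t(u_j)$, which is exactly the classical Lie-derivative computation carried out fiberwise at the frozen observer $v$, producing the ``horizontal'' part $Z^j\delta_j(\cdots)$ together with the $\delta Z$-terms in \eqref{eq:lie bracket coordinates}; and (ii) the contribution of moving the observer $v\mapsto P_t(v)$ inside $\omega$, whose $t$-derivative at $0$ is $\dot\partial_{\,\dot P_0(v)}\omega$, where $\dot P_0(v)\in\VV_vA$ is the initial velocity of the parallel-transported observer. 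The key input is that for parallel transport along $\gamma(t)=\psi_t(p)$ with $\dot\gamma(0)=Z_p$ one has $\dot P_0(v)=-N^i_j(v)Z^j(p)\,\dot\partial_i$, equivalently $\mathrm{D}_{\dot\gamma}V|_{t=0}$ computed from \eqref{eq:nonlinear covariant derivative} with $V$ the parallel field; comparing with the definition $\delta_i=\partial_i-N^i_j\dot\partial_j$ one sees this is precisely what converts the $\partial_i$'s from the classical computation into the $\delta_i$'s appearing in $Z^\HH$. Assembling (i) and (ii) yields exactly the coordinate expression \eqref{eq:lie bracket coordinates} for $\lie_Z\omega$, which proves \eqref{lieH}.

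The cleanest way to organize this is probably not to manipulate coordinates directly but to invoke Th.~\ref{th:lie bracket} (D): both sides of \eqref{lieH}, once composed with $V$ and with $V$ chosen so that $\mathrm{D}_ZV$ vanishes at $p$ is \emph{not} available in general, but one can instead just verify that the right-hand side of \eqref{lieH}, composed with an arbitrary $V$, satisfies the same formula \eqref{lieV} (and the $s$-form analogue) that characterizes $\lie_Z\omega$ composed with $V$; since $\lie_Z\omega$ is the \emph{unique} anisotropic form with that property for all $V$, equality follows. Thus the argument is: (a) recall the pullback formula; (b) differentiate at $t=0$, splitting into base/differential part and observer-transport part; (c) identify the observer-transport part as $-\dot\partial_{\mathrm{D}_ZV}\omega$ plus the frame-dependent trace term $\mathrm{trace}(\dot\partial_{\mathrm{D}V}Z)\omega$ via the Leibniz rule over the $s$ slots, using the explicit form of $\dot P_0$; (d) recognize the base/differential part as $\Lie_{Z_V}(\omega_V)$; (e) conclude by the uniqueness in Th.~\ref{th:lie bracket} (B)--(D).

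\textbf{Main obstacle.} The delicate point is step (b)--(c): carefully justifying the interchange of $\tfrac{d}{dt}\big|_0$ with the multilinear evaluation and cleanly separating the derivative of the ``point + pushforward'' data from the derivative of the ``parallel-transported observer'' data, and then checking that the latter reproduces exactly $\dot\partial_{\mathrm{D}_ZV}$ with the correct sign and the correct $\delta_i$ (not $\partial_i$). This hinges on the identity $\dot P_0(v)=-N^i_j(v)Z^j_p\,\dot\partial_i$ for $\HH A$-parallel transport, which should be read off directly from $D_{\dot\gamma}V=0$ and \eqref{eq:nonlinear covariant derivative}; everything else is the standard proof that $\Lie_Z=\lim_t(\psi_t^*-\mathrm{id})/t$ in the isotropic case, applied fiberwise.
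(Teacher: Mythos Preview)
Your plan is workable in outline, but you have talked yourself out of the one-line argument the paper actually uses, and your longer route has some mis-attributions.

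\textbf{The paper's shortcut.} You write that a $V$ with $\mathrm{D}_ZV=0$ ``is not available in general''. It is. Given $v\in A$, extend $v$ to a local observer field $V$ by $\HH A$-parallel transporting along the integral curves of $Z$ (e.g.\ starting from any transversal to $Z$ through $\pi(v)$); then $\mathrm{D}_ZV\equiv 0$ on that tube. With this $V$, the anisotropic pullback reduces to the classical one, $\psi_t^*(\omega)_v=\psi_t^*(\omega_V)_{\pi(v)}$, because $P_t(v)=V_{\psi_t(\pi(v))}$ by construction. Now apply \eqref{lieV} (its proof goes through verbatim for $s$-forms before the last line, which is the only place $s=n$ is used): since $Z\in\mathfrak{X}(M)$, the terms $\dot\partial_{\mathrm{D}_{E_i}V}Z$ vanish, and $\mathrm{D}_ZV=0$ kills the remaining correction, leaving $(\lie_Z\omega)_V=\Lie_{Z}(\omega_V)$. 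The classical flow formula for $\Lie_Z$ finishes the proof. This is exactly the paper's argument.

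\textbf{Where your longer route slips.} In your step (i) you say the ``frozen observer'' computation already produces $Z^j\delta_j(\cdots)$ and $\delta Z$-terms. It does not: freezing $v$ gives $Z^j\partial_j(\cdots)$ and $\partial_kZ^{j}$-terms. The observer-transport derivative $\dot P_0(v)=-N^i_j(v)Z^j\dot\partial_i$ then converts $Z^j\partial_j$ into $Z^j\delta_j$, as you note in (ii); but the conversion $\partial_kZ^{j}\to\delta_kZ^{j}$ in the slot terms has a different source: it is exactly the hypothesis $Z\in\mathfrak{X}(M)$ (so $\dot\partial Z=0$), which you never invoke. For the same reason, in your step (c) the trace term $\mathrm{trace}(\dot\partial_{\mathrm{D}V}Z)\omega$ is identically zero here and does not arise from the observer-transport contribution via any Leibniz rule over the slots; that term in \eqref{lieV} comes from the $\dot\partial_{\mathrm{D}_{E_i}V}Z$ pieces of \eqref{lieXZV}, which are present only for anisotropic $Z$. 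If you clean these two points up, your coordinate computation does go through, but it is strictly longer than the paper's choice of parallel $V$.
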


\begin{proof}
	Observe that $\psi_t^*(\omega)_v$ can be obtained as $\psi_t^*(\omega_V)$ with $V$ an extension of $v$ such that $\mathrm{D}_ZV=0$. Then \eqref{lieV} and the classical formula for the Lie derivative in terms of the flow imply \eqref{lieH}.
\end{proof}

\begin{rem}
	Even though, for convenience, we stated the previous geometrical interpretation for an $s$-form $\omega$, it should be clear that it holds true for any $r$-contravariant $s$-covariant $A$-anisotropic tensor.
\end{rem}

\subsection{Lie Bracket definition of divergence} \label{s_4.2}
Finally, in this and the next subsections a pseudo-Finsler metric $L$ defined on $A$ is fixed again. In its presence, and in view of the Riemannian case and Prop. \ref{Lieflowformula}, the most natural way of defining the divergence of an anisotropic vector field $Z$ is by $\lie_Z(\vol)$. Here there is a canonical choice for $\HH A$: the metric nonlinear connection of $L$. The definition obtained this way is unbiased, in that one does not choose any anisotropic connection {\em a priori}. Notwithstanding, it will turn out to be most conveniently expressed in terms of the Chern connection.  

\begin{defi}\label{d_4.9} For $Z\in\mathfrak{X}(M_A)$, its \emph{divergence with respect to the pseudo-Finsler metric $L$} is the anisotropic function $\di(Z)\in\fun(A)$ defined by 
	\[
	\lie_Z(\vol) =: \di(Z)\vol,
	\]
where $\HH A$ and $\vol$ are, resp., the metric nonlinear connection \eqref{eq:metric connection} and the metric volume form	\eqref{eq:metric volume form} of $L$.
\end{defi}

\begin{rem}
	Even though we will keep assuming it for simplicity, the hypothesis of $M$ being orientable is not really needed for this definition. As in pseudo-Riemannian geometry, on small enough open sets $U\subseteq M$ it is always possible to choose an orientation, define $\vol_U\in\form_n(M_A)$ with respect to it and put $\left.\di(Z)\right|_{A\cap\TT U}\vol_U := \lie_Z(\vol_U)$. The different definitions will be coherent because when the orientation changes, $\vol_U$ changes to $-\vol_U$ and 
	\[
	\lie_Z(-\vol_U) = -\lie_Z(\vol_U) = \left.-\di(Z)\right|_{A\cap\TT U}\vol_U = \di(Z)_{A\cap\TT U}\left(-\vol_U\right).
	\]
	In particular, when $M$ is orientable, $\di(Z)$ is independent of the orientation choice.
\end{rem}

\begin{prop} \label{prop:divergence vector characterization} Let $L$ be a fixed pseudo-Finsler metric defined on $A$, and let $Z\in\mathfrak{X}(M_A)$. If $\nabla$ is any symmetric $A$-anisotropic connection such that its underlying nonlinear connection is the metric one and $\nabla_Z(\vol)=0$, then
	\begin{equation}
		\mathrm{div}(Z)=\mathrm{trace}(\nabla Z),
		\label{eq:divergence chern}
	\end{equation}	
or in coordinates,
	\begin{equation}
		\mathrm{div}(Z)=\frac{\delta Z^i}{\delta x^i}+\varGamma_{ik}^iZ^k
		\label{eq:divergence vector coordinates}
	\end{equation}
This, in particular, is true for the (Levi-Civita)--Chern anisotropic connection of $L$, so one can take the Christoffel symbols to be those of \eqref{eq:chern}.
\end{prop}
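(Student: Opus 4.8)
The plan is to compute $\lie_Z(\vol)$ directly from the coordinate formula \eqref{eq:lie bracket coordinates} for the anisotropic Lie bracket acting on $n$-forms, and then to rewrite the result using the hypotheses on $\nabla$. First I would apply \eqref{eq:lie bracket coordinates} to the volume form $\vol = \sqrt{|\det g_{ab}|}\,\dd x^1\wedge\cdots\wedge\dd x^n$, regarded as an anisotropic $n$-covariant (alternating) tensor with the single component function $\sqrt{|\det g_{ab}|}$ in the chart. Since $\vol$ has no contravariant indices, the formula \eqref{eq:lie bracket coordinates} gives
\[
\left(\lie_Z\vol\right)_{1\dots n} = Z^k\,\frac{\delta}{\delta x^k}\sqrt{|\det g_{ab}|} + \sum_{\nu=1}^n \frac{\delta Z^k}{\delta x^{j_\nu}}\,\vol_{\dots k\dots},
\]
and because only $\vol_{1\dots n}\neq 0$, the sum over $\nu$ collapses: the $\nu$-th term contributes $\frac{\delta Z^{j_\nu}}{\delta x^{j_\nu}}$ (no sum on $\nu$ inside, but then summing over $\nu$ yields $\frac{\delta Z^i}{\delta x^i}$) times $\vol_{1\dots n}$. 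Hence
\[
\lie_Z(\vol) = \left(\frac{Z^k\,\delta_k\sqrt{|\det g_{ab}|}}{\sqrt{|\det g_{ab}|}} + \frac{\delta Z^i}{\delta x^i}\right)\vol.
\]

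Next I would identify the coefficient of $\sqrt{|\det g_{ab}|}$ using the hypothesis $\nabla_Z(\vol)=0$. Writing out $\nabla_Z\vol$ with the covariant-derivative formula for anisotropic tensors from \S\ref{sec:preliminaries}, one gets $\nabla_Z\vol = \left(Z^k\,\delta_k\sqrt{|\det g_{ab}|} - \varGamma^i_{ik}Z^k\sqrt{|\det g_{ab}|}\right)\dd x^1\wedge\cdots\wedge\dd x^n$ — this is the standard computation that the covariant divergence of the metric volume form involves the trace $\varGamma^i_{ik}$ of the Christoffel symbols, and it only uses that $\nabla$ has the metric nonlinear connection underneath so the $\delta_k$ appearing here is the correct one. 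Thus $\nabla_Z(\vol)=0$ is equivalent to $Z^k\,\delta_k\sqrt{|\det g_{ab}|} = \varGamma^i_{ik}Z^k\sqrt{|\det g_{ab}|}$, i.e. the first term in the coefficient above equals $\varGamma^i_{ik}Z^k$. Substituting this back gives
\[
\lie_Z(\vol) = \left(\frac{\delta Z^i}{\delta x^i} + \varGamma^i_{ik}Z^k\right)\vol,
\]
which is exactly \eqref{eq:divergence vector coordinates}, and recognizing $\frac{\delta Z^i}{\delta x^i}+\varGamma^i_{ik}Z^k = \nabla_i Z^i = \mathrm{trace}(\nabla Z)$ (here symmetry of $\nabla$ is what lets us contract in either slot, but really we only need the trace of $\nabla Z$) yields \eqref{eq:divergence chern}.

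Finally I would invoke the facts recalled in \S\ref{s2.3}: the Chern anisotropic connection is symmetric, has the metric nonlinear connection \eqref{eq:metric connection} as its underlying nonlinear connection, and parallelizes $g$ — hence parallelizes $\vol$, so $\nabla_Z(\vol)=0$ automatically. This verifies that Chern's $\nabla$ satisfies all the hypotheses, so the last sentence of the statement follows, with the explicit Christoffel symbols \eqref{eq:chern}. The main thing to be careful about is the bookkeeping in the first step: making sure that applying \eqref{eq:lie bracket coordinates} to an alternating $n$-covariant tensor really produces the single scalar equation above (the contravariant sum is empty, the covariant sum telescopes to a divergence of $Z$), and that the $\delta_k$ used throughout — in $\lie_Z$, in $\nabla$, and in the final formula — are all the $\delta_k$ of the metric nonlinear connection, so nothing is mismatched. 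There is no genuine analytic obstacle; the content is checking that the unbiased Lie-bracket definition reproduces the connection-based formula whenever the connection is compatible with $\vol$ and has the right nonlinear part.
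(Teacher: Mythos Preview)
Your argument is correct and reaches the same conclusion, but by a different route than the paper. The paper expands $\lie_Z(\vol)(\partial_1,\dots,\partial_n)$ using the tensor-derivation property, then replaces each $\lie_Z\partial_i$ by $\nabla_Z\partial_i-\nabla_{\partial_i}Z$ via the torsion identity \eqref{eq:lie bracket in terms of nabla} with $\mathrm{Tor}=0$, and regroups the terms into $\nabla_Z(\vol)(\partial_1,\dots,\partial_n)+\sum_i\vol(\partial_1,\dots,\nabla_{\partial_i}Z,\dots,\partial_n)$. You instead feed $\vol$ directly into the coordinate formula \eqref{eq:lie bracket coordinates}, obtain the scalar $Z^k\delta_k\sqrt{|\det g|}/\sqrt{|\det g|}+\delta_iZ^i$, and then identify the first piece through the coordinate expression of $\nabla_Z(\vol)=0$. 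Your approach is more elementary and avoids invoking \eqref{eq:lie bracket in terms of nabla}; the paper's is a bit more structural and makes the role of torsion-freeness transparent.

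One point to tighten: the computation of $\nabla_Z(\vol)$ from the covariant-derivative formula in \S\ref{sec:preliminaries} actually yields $Z^k\delta_k\sqrt{|\det g|}-Z^k\varGamma^{i}_{ki}\sqrt{|\det g|}$, with the $Z$-index in the \emph{first} lower slot of $\varGamma$, whereas $\mathrm{trace}(\nabla Z)=\delta_iZ^i+\varGamma^{i}_{ik}Z^k$ has it in the second. So symmetry of $\nabla$ is used precisely to match $\varGamma^{i}_{ki}=\varGamma^{i}_{ik}$, not merely ``to contract in either slot'' as your parenthetical suggests; the hypothesis is genuinely needed at that step and should not be downplayed. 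With that correction, your proof is complete.
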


\begin{proof}
	One expresses the $Z$-Lie bracket of the volume form in terms of the anisotropic connection, analogously to the isotropic case. From \eqref{eq:metric volume form} and the fact that $\mathfrak{l}_Z^\mathrm{H}$ is a tensor derivation, we obtain
	\[
	\begin{split}
	\mathrm{div}(Z)\sqrt{\left|\det g_{ab}\right|} &= \mathrm{div}(Z)d\mathrm{Vol}(\partial_1,...,\partial_n) \\
	&= \mathfrak{l}_Z^\mathrm{H}(d\mathrm{Vol})(\partial_1,...,\partial_n)\\
	&=\mathfrak{l}_Z^\mathrm{H}(d\mathrm{Vol}(\partial_1,...,\partial_n))-\sum_{i=1}^nd\mathrm{Vol}(\partial_1,...,\mathfrak{l}_Z^\mathrm{H}\partial_i,...,\partial_n). 
	\end{split}
	\]
	\eqref{eq:lie bracket function} and the fact that $\HH A$ is the underlying nonlinear connection of $\nabla$ give
	\[
	\mathfrak{l}_Z^\mathrm{H}(d\mathrm{Vol}(\partial_1,...,\partial_n)) = Z^\HH(d\mathrm{Vol}(\partial_1,...,\partial_n)) = \nabla_Z(d\mathrm{Vol}(\partial_1,...,\partial_n)).
	\]
	\eqref{eq:lie bracket in terms of nabla} and  $\mathrm{Tor} = 0$ 
	\[
	d\mathrm{Vol}(\partial_1,...,\mathfrak{l}_Z^\mathrm{H}\partial_i,...,\partial_n) = d\mathrm{Vol}(\partial_1,...,\nabla_Z\partial_i,...,\partial_n) - d\mathrm{Vol}(\partial_1,...,\nabla_{\partial_i}Z,...,\partial_n).
	\]
	From these and $\nabla_Z(\vol)=0$, 
	\begin{equation}
		\begin{split}
		\mathrm{div}(Z)\sqrt{\left|\det g_{ab}\right|} &= \nabla_Z(d\mathrm{Vol}(\partial_1,...,\partial_n)) - \sum_{i=1}^nd\mathrm{Vol}(\partial_1,...,\nabla_Z\partial_i,...,\partial_n)\\
		&\quad+\sum_{i=1}^nd\mathrm{Vol}(\partial_1,...,\nabla_{\partial_i}Z,...,\partial_n) \\
		&=\nabla_Z(\vol)(\partial_1,...,\partial_n) + \sum_{i=1}^nd\mathrm{Vol}(\partial_1,...,\nabla_{\partial_i}Z,...,\partial_n) \\
		&=\sum_{i=1}^nd\mathrm{Vol}(\partial_1,...,\nabla_{\partial_i}Z,...,\partial_n) \\
		&=\mathrm{trace}(\nabla Z)\sqrt{\left|\det g_{ab}\right|},
		\label{eq:divergence in terms of nabla}
		\end{split}
	\end{equation}
	where the last equality is reasoned analogously as in the proof of \eqref{lieV}.
	
	For the Chern connection, it can be checked that $\nabla(d\mathrm{Vol})=0$ by considering a parallel orthonormal basis with respect to a parallel observer $V$ along the integral curves of any vector field. The coordinate expression of $\mathrm{trace}(\nabla Z)$ in this case concludes \eqref{eq:divergence vector coordinates}.
\end{proof}

\subsection{Divergence theorem and boundary term representations} \label{sec:divergence tensor}

Our Lie bracket derivation allows us to obtain a statement of the Finslerian divergence theorem that subsumes both Rund's \cite[(3.17)]{Rund} and Minguzzi's \cite[Th. 2]{Min17}. This way, it does not need of computations in coordinates from the beginning nor of the ``pullback metric" ($g_V$ in our notation). Naturally, our statement does not include Shen's \cite[Th. 2.4.2]{Shen}, as this one is an independent generalization of the Riemannian theorem not dealing with anisotropic differential forms nor vector fields.

\begin{lemma}\label{verVol}
	For $X\in\vect(M_A)$, the vertical derivative of $d\mathrm{Vol}$ is given by
	\begin{equation}\label{verVoleq}
	\dot\partial _X (d\mathrm{Vol})= C^\mathfrak{m}(X) d\mathrm{Vol},
	\end{equation}
	where $C^\mathfrak{m}$ is the mean Cartan tensor of $L$ (see \eqref{eq:mean cartan}).
\end{lemma}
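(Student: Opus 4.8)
The statement is a purely local computation involving the vertical derivative $\dot\partial_X$ acting on the metric volume form $\vol=\sqrt{|\det g_{ab}(x,y)|}\,\dd x^1\wedge\dots\wedge\dd x^n$. Since $\dot\partial_X$ is an $\fun(A)$-linear tensor derivation that annihilates the isotropic forms $\dd x^i$ (because $\dot\partial_X \dd x^i$ has components $X^k\dot\partial_k\delta^i_j=0$), the only contribution comes from differentiating the scalar coefficient $\sqrt{|\det g_{ab}|}$. So the plan is: first reduce $\dot\partial_X(\vol)$ to $\dot\partial_X\!\left(\sqrt{|\det g_{ab}|}\right)\dd x^1\wedge\dots\wedge\dd x^n$; then compute the vertical derivative of $\sqrt{|\det g_{ab}|}$ using Jacobi's formula.

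\textbf{Key steps.} Write $\mathfrak{g}:=\det g_{ab}(x,y)$, so $\sqrt{|\mathfrak g|}$ is the coefficient. By the chain rule and Jacobi's formula for the derivative of a determinant,
\[
\dot\partial_k\sqrt{|\mathfrak g|}=\frac{1}{2}\,\sqrt{|\mathfrak g|}\;g^{ij}\,\dot\partial_k g_{ij}=\sqrt{|\mathfrak g|}\;g^{ij}C_{ijk},
\]
using $C_{ijk}=\frac{1}{2}\dot\partial_k g_{ij}$ from the definition of the Cartan tensor. Contracting with $X^k$ and recalling from \eqref{eq:mean cartan} that $\left(C^\mathfrak{m}\right)_k=g^{ij}C_{ijk}=C_k$, we get
\[
\dot\partial_X\sqrt{|\mathfrak g|}=X^k\dot\partial_k\sqrt{|\mathfrak g|}=C^\mathfrak{m}(X)\,\sqrt{|\mathfrak g|}.
\]
Finally, since $\dot\partial_X$ is a derivation on the anisotropic tensor algebra and $\dot\partial_X(\dd x^{i_1}\wedge\dots\wedge\dd x^{i_n})=0$, the Leibniz rule gives
\[
\dot\partial_X(\vol)=\dot\partial_X\!\left(\sqrt{|\mathfrak g|}\right)\dd x^1\wedge\dots\wedge\dd x^n=C^\mathfrak{m}(X)\,\sqrt{|\mathfrak g|}\;\dd x^1\wedge\dots\wedge\dd x^n=C^\mathfrak{m}(X)\,\vol,
\]
which is \eqref{verVoleq}.

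\textbf{Main obstacle.} There is no serious obstacle here; the only care needed is with the absolute value in $\sqrt{|\mathfrak g|}$ (the signature is Lorentzian, so $\mathfrak g<0$ generically), but since $\dot\partial_k|\mathfrak g|=\mathrm{sgn}(\mathfrak g)\,\dot\partial_k\mathfrak g$ and $\dot\partial_k\mathfrak g=\mathfrak g\,g^{ij}\dot\partial_k g_{ij}$, the sign cancels in $\dot\partial_k\sqrt{|\mathfrak g|}=\frac{1}{2}|\mathfrak g|^{-1/2}\mathrm{sgn}(\mathfrak g)\,\dot\partial_k\mathfrak g=\frac{1}{2}\sqrt{|\mathfrak g|}\,g^{ij}\dot\partial_k g_{ij}$, so the formula holds verbatim. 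One should also note that the identity is manifestly coordinate-independent since both sides are globally well-defined anisotropic $n$-forms, so it suffices to verify it in one chart.
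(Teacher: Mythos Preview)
Your proof is correct and complete. It differs from the paper's argument, however. The paper does not work in a coordinate chart via Jacobi's formula; instead it chooses, for each $t$ near $0$, a positively oriented $g_{v+tX}$-orthonormal basis $E_1(t),\dots,E_n(t)$, so that $\vol_{v+tX}(E_1(t),\dots,E_n(t))\equiv 1$. Differentiating this identity at $t=0$ produces a term $\dot\partial_X(\vol)_v(E_1(0),\dots,E_n(0))$ plus a sum $\sum_i\vol_v(E_1(0),\dots,\dot E_i(0),\dots,E_n(0))$, and differentiating $g_{v+tX}(E_i(t),E_i(t))=\pm 1$ identifies the latter sum with $-C^\mathfrak{m}(X)$. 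Your approach is more direct and computational, relying only on the explicit coordinate expression \eqref{eq:metric volume form} and the standard determinant identity; the paper's moving-frame argument is coordinate-free from the outset but requires constructing the smooth family $E_i(t)$. Both are short, and yours is arguably the more elementary route.
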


\begin{proof}
	Let $E_1(t)$, ..., $E_n(t)$ be a positively oriented $g_{v+tX}$-orthonormal basis for every $t\in [0,\varepsilon]$ for a certain $\varepsilon>0$. 
	Then  $d\mathrm{Vol}_{v+tX}(E_1(t),...,E_n(t))=1$ for all $t\in [0,\varepsilon]$. This implies that
	\[
	\dot\partial_X (d\mathrm{Vol})_{v}(E_1(0),...,E_n(0))+ \sum_{i=1}^nd\mathrm{Vol}_{v}(E_1(0),...,\dot E_i(0),...,E_n(0))=0.
	\]
	Moreover, as $g_{v+tX}(E_i(t),E_i(t))=\pm 1$, 
	\[
	2C_v(E_i(0),E_i(0),X)+2g_v(\dot E_i(0),E_i(0))=0.
	\] 
	Using this relation above, we conclude \eqref{verVoleq}.
\end{proof}

In the present article, by a {\em domain} $\overline{D}$ we understand a nonempty connected set which coincides with the closure of its interior $D$; then its boundary is $\partial\overline{D}=\partial D$. Physically, it is very important to include examples in which different parts of $\partial D$ have different causal characters, and this tipically leads to the boundary not being totally smooth. Hence, we will make a weaker regularity assumption that still allows one to apply Stokes' theorem on $\overline{D}$. A subset of $M$ has {\em $0$ $m$-dimensional measure} if its intersection with any embedded $m$-dimensional submanifold $\sigma\subseteq M$ is of $0$ measure in the smooth manifold $\sigma$. Finally, the interior product of an $s$-form $\omega$ with a vector field $X$ will be
\[
\imath_X\omega:=\omega(X,-,...,-).
\]

\begin{thm} \label{th:divergence theorem} Let $L$ be a fixed pseudo-Finsler metric defined on $A$. If
	\begin{enumerate} [label=(\roman*)]
		\item $Z\in\mathfrak{X}(M_A)$ is an anisotropic vector field, 
		\item $V\in\mathfrak{X}^A(U)$ is an $A$-admissible field with $U\subseteq M$ open, and
		\item $\overline{D}\subseteq U$ is a domain with $\partial D$ smooth up to subset of $0$ $\left(n-1\right)$-dimensional measure on $M$ and $\mathrm{Supp}(Z_V)\cap\overline{D}$ compact,
	\end{enumerate} 
	then
	\begin{equation}
	\begin{split}
	&\quad\int_D\mathrm{div}(Z)_V d\mathrm{Vol}_V+\int_D\left\{C^\mathfrak{m}(\mathrm{D}_ZV)+\mathrm{trace}(\dot\partial _{\mathrm{D}V}Z)\right\}
	d\mathrm{Vol}_V \\
	&=\int_{\partial D}\imath_{Z_{V}}(d\mathrm{Vol}_V),
	\end{split}
	\label{eq:divergence theorem}
	\end{equation}
	where $C^\mathfrak{m}$ is the mean Cartan tensor and $\mathrm{D}V$ is computed with the metric nonlinear connection \eqref{eq:metric connection}.
\end{thm}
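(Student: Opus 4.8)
The plan is to reduce the anisotropic statement to the classical pseudo-Riemannian divergence theorem applied to the metric $g_V$ on $U$, keeping careful track of the correction terms that arise from the anisotropy. First I would fix the admissible field $V$ and consider the isotropic volume form $d\mathrm{Vol}_V\in\varOmega_n(U)$ and the isotropic vector field $Z_V\in\mathfrak{X}(U)$. By the classical Cartan/Stokes machinery, $\int_D\mathrm{div}_{g_V}(Z_V)\,d\mathrm{Vol}_V=\int_{\partial D}\imath_{Z_V}(d\mathrm{Vol}_V)$, where the right-hand side already matches the right-hand side of \eqref{eq:divergence theorem}; the compact support hypothesis on $\mathrm{Supp}(Z_V)\cap\overline D$ together with the weaker regularity of $\partial D$ (a subset of $0$ $(n-1)$-dimensional measure removed) is exactly what licenses Stokes' theorem here, as in the standard argument. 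So the whole content is the pointwise identity
\[
\mathrm{div}_{g_V}(Z_V)=\mathrm{div}(Z)_V+C^\mathfrak{m}(\mathrm{D}_ZV)+\mathrm{trace}(\dot\partial_{\mathrm{D}V}Z),
\]
which I would establish as a lemma and then integrate.

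To prove that identity, the key input is the Lie-bracket characterization of $\mathrm{div}(Z)$ from Def. \ref{d_4.9}: $\lie_Z(\vol)=\mathrm{div}(Z)\vol$. I would evaluate this at $V$ using formula \eqref{lieV} of Th. \ref{th:lie bracket} (D), namely
\[
\left(\lie_Z\vol\right)_V=\Lie_{Z_V}(\vol_V)-\dot\partial_{\mathrm{D}_ZV}\vol-\mathrm{trace}(\dot\partial_{\mathrm{D}V}Z)\,\vol.
\]
Since $\vol_V=d\mathrm{Vol}_V$ is the metric volume form of the pseudo-Riemannian metric $g_V$, the first term is $\mathrm{div}_{g_V}(Z_V)\,d\mathrm{Vol}_V$ by the classical definition of divergence. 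The second term is handled by Lemma \ref{verVol}: $\dot\partial_{\mathrm{D}_ZV}\vol=C^\mathfrak{m}(\mathrm{D}_ZV)\,\vol$, so evaluated at $V$ it contributes $C^\mathfrak{m}(\mathrm{D}_ZV)\,d\mathrm{Vol}_V$. The third term is already in the desired form. Rearranging gives the pointwise identity, where $\mathrm{D}$ throughout is the metric nonlinear connection \eqref{eq:metric connection}, consistent with the hypothesis in the statement.

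Finally I would assemble the pieces: integrate the pointwise identity over $D$ against $d\mathrm{Vol}_V$, move the two anisotropic correction integrals to the left, and identify $\int_D\mathrm{div}_{g_V}(Z_V)\,d\mathrm{Vol}_V$ with $\int_{\partial D}\imath_{Z_V}(d\mathrm{Vol}_V)$ via the classical divergence theorem, yielding \eqref{eq:divergence theorem}. I expect the main obstacle to be the precise justification of Stokes'/the classical divergence theorem under the weakened boundary regularity (smooth up to a subset of $0$ $(n-1)$-dimensional measure) together with only $\mathrm{Supp}(Z_V)\cap\overline D$ compact rather than $\overline D$ itself compact: one must argue that the singular part of $\partial D$ is negligible for the boundary integral and that an exhaustion/cutoff argument lets one ignore the non-compact parts of $\overline D$ where $Z_V$ vanishes. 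The rest is a bookkeeping exercise combining Def. \ref{d_4.9}, Th. \ref{th:lie bracket} (D) and Lemma \ref{verVol}, all of which are available from the text above.
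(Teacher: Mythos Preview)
Your proposal is correct and follows essentially the same route as the paper: both apply Stokes' theorem to $\Lie_{Z_V}(d\mathrm{Vol}_V)$ (which you phrase as the classical divergence theorem for $g_V$) and then invoke formula \eqref{lieV} together with Lemma \ref{verVol} to identify the integrand as $\mathrm{div}(Z)_V+C^\mathfrak{m}(\mathrm{D}_ZV)+\mathrm{trace}(\dot\partial_{\mathrm{D}V}Z)$. Your write-up is in fact more explicit than the paper's, which compresses the argument into two lines and does not separately discuss the regularity/support issues for Stokes' theorem that you flag.
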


\begin{proof}
	The idea is to apply Stokes' theorem to 
	$\Lie_{Z_V}(d\mathrm{Vol}_V)$. But taking into account \eqref{lieV} and Lem. \ref{verVol}, it follows that
	\[
	\Lie_{Z_V}(d\mathrm{Vol}_V)=\lie_Z(d\mathrm{Vol})_V+\left\{C^\mathfrak{m}(\mathrm{D}_ZV)+ {\rm trace} (\dot\partial _{\mathrm{D}V}Z)\right\} d\mathrm{Vol}_V, 
	\]
	concluding \eqref{eq:divergence theorem}.
\end{proof} 

\begin{rem}[Riemannian and Finslerian unit normals] \label{rem:normals} Let $i\colon \Gamma\hookrightarrow M$ be the inclusion of a smooth open subset $\Gamma\subseteq\partial D$.
	\noindent \begin{enumerate} [label=(\roman*)]
		\item Even though we do not use the pseudo-Riemannian metric $g_V$ to derive Th. \ref{th:divergence theorem}, from our physical viewpoint it is natural to use it to re-express the boundary term.
		If $\Gamma$ is non-$g_V$-lightlike, then for a $g_V$-normal field $\widehat{N}_V$ and a transverse field $X$ along $i$, the form
		\begin{equation}
		\qquad\quad d\sigma_V:=\mathrm{sgn}(g_V(\widehat{N}_V,\widehat{N}_V))\frac{\sqrt{\left|g_V(\widehat{N}_V,\widehat{N}_V)\right|}}{g_V(\widehat{N}_V,X)}i^\ast(\imath_X(\vol_V))\in\varOmega_{n-1}(\Gamma)
		\label{eq:riemannian hypersurface form}
		\end{equation}
		is nonvanishing and independent of $X$. In particular, 
		\[
		d\sigma_V=\frac{1}{\sqrt{\left|g_V(\widehat{N}_V,\widehat{N}_V)\right|}}i^\ast(\imath_{\widehat{N}_V}(\vol_V))
		\]
		is independent of the scale of $\widehat{N}_V$, which we will always assume to be $g_V$-unitary and $D$-salient, so
		\[
		d\sigma_V=i^\ast(\imath_{\widehat{N}_V}(\vol_V))
		\]
		coincides with the hypersurface $g_V$-volume form of $\Gamma$. Taking into account that $i^\ast(\imath_{Z_{V}}(d\mathrm{Vol}_V))$ vanishes wherever $Z_{V}$ is tangent to $\Gamma$ and that $g_V(\widehat{N}_V,\widehat{N}_V)=\pm 1$, \eqref{eq:riemannian hypersurface form} allows us to represent
		and the right hand side of \eqref{eq:divergence theorem} as 
		\begin{equation}
		\int_{\Gamma}\imath_{Z_{V}}(d\mathrm{Vol}_V)=\int_{\Gamma}g_V(\widehat{N}_V,\widehat{N}_V)g_V(\widehat{N}_V,Z_V)d\sigma_V.
		\label{eq:boundary term riemannian}
		\end{equation}
		In fact, this is how Rund's divergence theorem follows from Th. \ref{th:divergence theorem}.
		
		\item There is another way that one can try to represent the boundary term. Namely, assume that there exists a smooth $\xi\colon p\in\Gamma\rightarrow\xi_p\in A\cap\TT_{p}M$ with $\TT_p\Gamma=\mathrm{Ker}\,g_{\xi_p}(\xi_p,-)$ and $L(\xi_p)=\pm 1$ (in the Lorentz-Finsler case, it will necessarily be $L(\xi)=1$). This is called a {\em Finslerian unit normal along $\Gamma$}. Analogously as in (i), one can put
		\[
		d\Sigma^\xi_V:=L(\xi)\frac{1}{g_\xi(\xi,X)}i^\ast(\imath_X(\vol_V))=i^\ast(\imath_{\xi}(\vol_V)),
		\]
		\begin{equation}
		\int_{\Gamma}\imath_{Z_{V}}(d\mathrm{Vol}_V)=\int_{\Gamma} \epsilon_\xi L(\xi)g_\xi(\xi,Z_V)d\Sigma^\xi_V;
		\label{eq:boundary term finslerian}
		\end{equation}
		here, due to the possible orientation difference between both sides, 
		\[
		\epsilon_\xi=\begin{cases}
		1, & \text{where $\xi$ is $D$-salient},\\
		-1 & \text{where $\xi$ is $D$-entering.}
		\end{cases}
		\]
		In fact, this is how Minguzzi deduces his divergence theorem \cite[Th. 2]{Min17}. Note, however, that he does it under the hypothesis of vanishing mean Cartan tensor ($C^\mathfrak{m}=0$), which implies that $d\Sigma^\xi_V$ is independent of $V$. As we do not require this, Th. \ref{th:divergence theorem} is more general statement than Minguzzi's. 
		
		\item The Finslerian unit normal presents some issues in the general case, as we are not taking $A=\TT M\setminus 0$. 
		In our physical interpretation, with $L$ Lorentz-Finsler, $A$ consists of timelike vectors, so asking for a Finslerian unit normal is only reasonable when $\Gamma$ is {\em $L$-spacelike}, that is, $\TT_p\Gamma\cap\left(A\cap\partial A\right)=\emptyset$ for $p\in\Gamma$. In such a case, the strong concavity of the indicatrix $\left\{v\in A_p\colon\,L(v)=1\right\}$ guarantees the existence and uniqueness of $\xi$: one defines $\xi_p$ to be the unique vector such that $\TT_p\Gamma+\xi_p$ and the indicatrix are tangent at $\xi_p$.
		
		\item Of course, if $L$ comes from a pseudo-Riemannian metric on $M$, then $\xi=\epsilon_\xi\widehat{N}_V=\epsilon_\xi \widehat{N}$ and $d\Sigma^\xi_V=\epsilon_\xi d\sigma_V=\epsilon_\xi d\sigma$. 
		
		\item It should be clear from this discussion that the form that one integrates on the right hand side of \eqref{eq:divergence theorem} is always the same and that the only difference between Rund's and Minguzzi's divergence theorems is how each of them represents it. Notwithstanding, this is an important difference, for the boundary terms \eqref{eq:boundary term riemannian} and \eqref{eq:boundary term finslerian} could potentially have different physical interpretations. 
	\end{enumerate}
\end{rem} 

\section{Divergence of anisotropic tensor fields}\label{s5}

Our developments of the previous section will allow us to obtain integral Finslerian conservation laws for a tensor $T$ with $\di(T)=0$. We obtain one for each $V\in\mathfrak{X}^A(U)$ satisfying certain hypotheses. Physically, $T$ can be interpreted as an anisotropic stress-energy tensor and $V$ as an observer field. We will also revisit two of the main examples with a clearer physical interpretation: Special Relativity and the conservation of the ``total energy of the universe". In order to do all this, let us see how the Chern connection enters the Finslerian definition of $\di(T)$. 

\subsection{Definition of divergence with the Chern connection}\label{d_5.1}

Prop. \ref{prop:divergence vector characterization} motivates the most natural definition of divergence of $T\in\ten_1^1(M_A)$. Namely, by analogy with the classical case, we shall require \eqref{eq:div(T(X))} to hold for any anisotropic vector field $X\in\vect(M_A)$. This makes the Chern connection appear now: it is the only Finslerian connection $\nabla$ for which one can assure that \eqref{eq:divergence chern} holds independently of $Z:=T(X)$. We shall also explore the conditions under which the term $\mathrm{trace}(\nabla Z)$ vanishes in the general Finslerian setting.

\begin{prop} \label{prop:div(T(X))}
	Let $L$ be a fixed pseudo-Finsler metric defined on $A$ with metric nonlinear connection $\HH A$ and Chern anisotropic connection $\nabla$. Also, let $S\in\ten_2^0(M_A)$ be symmetric, $v\in A$, $T\in\ten_1^1(M_A)$ and $X\in\vect(M_A)$.
	
	\noindent (A) The following are equivalent. 
	\begin{enumerate} [label=(A\roman*)]
		\item $S_v(-,\nabla^v_{-} X)$ is antisymmetric.
		\item $\nabla^v X$ is anti-self-adjoint with respect to $S_v$, that is, $S_v(\nabla^v_{-} X,-) = -S_v(-,\nabla^v_{-} X)$.
		\item $\left(\lie_XS\right)_v= \nabla^v_XS$.
	\end{enumerate}
	
	\noindent (B) One has
	\[
	\di(T(X))-\mathrm{trace}(T(\nabla X))=\mathbf{C}^1_2(\nabla T)(X), 
	\]
	where $\mathbf{C}^1_2$ is the operator that contracts the contravariant index with the covariant one introduced by $\nabla$.
	
	\noindent (C) One has $\mathrm{trace}(T(\nabla X))(v) = 0$ assuming any of the following conditions. 
	\begin{enumerate} [label=(C\roman*)]
		\item $T^\flat_v(-,\nabla^v_{-} X)$ is antisymmetric.
		\item $T^\flat_v$ is symmetric and $\left(\lie_Xg\right)_v= 0$.
	\end{enumerate}
\end{prop}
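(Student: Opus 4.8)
The plan is to exploit the two defining properties of the Chern connection $\nabla$ --- vanishing torsion and $\nabla g=0$ --- together with the fact that its underlying nonlinear connection is the metric one $\HH A$, which is also the connection used to build $\lie_X$. The first observation I would make is that, because these two nonlinear connections coincide, $\lie_X$ and $\nabla_X$ act identically on $\fun(A)$: both send $f$ to $X^{\mathrm H}(f)=X^k\delta_kf$ (cf. \eqref{eq:lie bracket function} and the coordinate formula for $\nabla$ on $\ten_0^0$).

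For \emph{Part (A)}, I would apply the Leibniz rule for the tensor derivations $\lie_X$ and $\nabla_X$ to an arbitrary $S\in\ten_2^0(M_A)$ and subtract; by the previous observation the function terms cancel, leaving
\[
(\lie_X S-\nabla_X S)(Y,W)=S\bigl(\nabla_X Y-\lie_X Y,\,W\bigr)+S\bigl(Y,\,\nabla_X W-\lie_X W\bigr)
\]
for isotropic $Y,W\in\vect(M)$. By Th. \ref{th:lie bracket}(A) with $\mathrm{Tor}=0$ one has $\nabla_X Y-\lie_X Y=\nabla_Y X$, so the right-hand side equals $S(\nabla_Y X,W)+S(Y,\nabla_W X)$. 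Evaluating at $v$ (extending arbitrary vectors of $\TT_{\pi(v)}M$ to isotropic fields) shows that (Aiii) holds iff $S_v(\nabla^v_Y X,W)+S_v(Y,\nabla^v_W X)=0$ for all $Y,W$, which is precisely (Aii); and (Ai)$\Leftrightarrow$(Aii) is immediate from the symmetry of $S_v$, since antisymmetry of $(Y,W)\mapsto S_v(Y,\nabla^v_W X)$ reads $S_v(\nabla^v_Y X,W)=-S_v(Y,\nabla^v_W X)$.

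For \emph{Part (B)}, set $Z:=T(X)\in\vect(M_A)$ and apply Prop. \ref{prop:divergence vector characterization} to the Chern connection to get $\di(Z)=\mathrm{trace}(\nabla Z)=\nabla_iZ^i$. Expanding $Z^i=T^i_jX^j$ by the Leibniz rule gives $\nabla_i(T^i_jX^j)=(\nabla_iT^i_j)X^j+T^i_j\nabla_iX^j$, whose first term is $\mathbf{C}^1_2(\nabla T)(X)$ and whose second is $\mathrm{trace}(T(\nabla X))$ --- the claimed identity. For \emph{Part (C)}, write $\mathrm{trace}(T(\nabla X))(v)=g^{il}T_{lj}\nabla_iX^j$ with $T_{lj}=(T^\flat)_{lj}$, evaluated at $v$. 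Under (Ci), $T_{lj}\nabla_iX^j$ is antisymmetric in $(l,i)$, so its contraction with the symmetric $g^{il}$ vanishes, exactly as in \eqref{eq:antisymmetry}. Under (Cii), I would first use Part (A) with $S=g$: since $\nabla g=0$, condition (Aiii) reduces to $(\lie_Xg)_v=0$, so the hypothesis yields (Ai)/(Aii) for $g$, i.e. $Y\mapsto\nabla^v_YX$ is $g_v$-skew; contracting the $T^\flat_v$-symmetry against this $g_v$-skewness then gives $0$, exactly as in \eqref{eq:killing}.

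The main obstacle is simply getting the Part (A) identity right: one must correctly see that the ``function parts'' of $\lie_X$ and $\nabla_X$ coincide (which is precisely why the relevant nonlinear connection has to be the metric one, common to $\lie$ and to Chern's $\nabla$), and then use torsion-freeness to rewrite $\nabla_XY-\lie_XY$ as $\nabla_YX$. Everything else is routine: (Cii) is a corollary of (A) with $S=g$, while (B) and (Ci) are index manipulations strictly parallel to the isotropic formulas \eqref{eq:div(T(X))} and \eqref{eq:antisymmetry}.
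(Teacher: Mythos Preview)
Your proof is correct and follows essentially the same route as the paper's: the key identity $\lie_XS-\nabla_XS=S(\nabla_{-}X,-)+S(-,\nabla_{-}X)$ for Part (A) is derived exactly as the paper does (via $X^{\mathrm H}$ agreeing on functions and torsion-freeness in the form of Th.~\ref{th:lie bracket}(A)), and Parts (B) and (C) reproduce the paper's appeals to Prop.~\ref{prop:divergence vector characterization} and to the isotropic computations \eqref{eq:div(T(X))}--\eqref{eq:killing}. Your explicit remark that (Cii) is precisely Part (A) applied to $S=g$ with $\nabla g=0$ is the same observation the paper makes when it asserts that $(\lie_Xg)_v=0$ is equivalent to $g_v$-anti-self-adjointness of $\nabla^vX$.
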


\begin{proof}
	For (A), take $Y,W\in\vect(M)$. The antisymmetry of $S_v(-,\nabla^v_{-} X)$ reads
	\[
	S_v(\nabla^v_Y X,W)=S_v(W,\nabla^v_Y X)=-S_v(Y,\nabla^v_W X),
	\]
	which is exactly the anti-self-adjointness of $\nabla^v X$ with respect to $S_v$. Besides, \eqref{eq:lie bracket function} and \eqref{eq:lie bracket in terms of nabla} together with $\mathrm{Tor}=0$ for the Chern connection give
	\begin{equation}
	\begin{split}
	&\quad\lie_XS(Y,W) \\
	&=X^\HH(S(Y,W))-S(\lie_XY,W)-S(Y,\lie_XW) \\
	&=X^\HH(S(Y,W))-S(\nabla_XY-\nabla_YX,W)-S(Y,\nabla_XW-\nabla_WX) \\
	&=\nabla_X S(Y,W)+S(\nabla_YX,W)+S(Y,\nabla_WX),
	\end{split}
	\label{eq:lie S in terms of nabla}
	\end{equation}
	which shows that $\left(\lie_X S\right)_v = \nabla^v_X S$ also is equivalent to the anti-self-adjointness.
	
	For (B), all the computations in \eqref{eq:div(T(X))} hold formally the same in the general Finslerian case due to Prop. \ref{prop:divergence vector characterization}. 
	
	As for the vanishing of $\mathrm{trace}(T(\nabla X))(v)$, it follows from (Ci) by the same computations as in \eqref{eq:antisymmetry}. Indeed, the antisymmetry can be expressed as
	\[
	T_{lj}(v)\nabla_iX^j(v)+T_{ij}(v)\nabla_lX^j(v)=0.
	\] 
	It also follows from (Cii) by \eqref{eq:killing}. Indeed, $\left(\lie_X g\right)_v=0$ is equivalent to $\nabla^v X$ being anti-self-adjoint with respect to $g_v$, and this can be expressed as 
	\[
	g^{li}(v)\nabla_iX^j(v)+g^{ji}(v)\nabla_iX^l(v)=0.
	\]
\end{proof}

\begin{rem}[$\lie_X g$ and Finslerian Killing fields] \label{rem:killing} 
	In classical Relativity ($g$, $T$ and $X$ isotropic), the second condition in (C ii) above would read $\left(\Lie_Xg\right)_{\pi(v)}=0$, and $\Lie_Xg=0$ would be equivalent to $X$ being a Killing vector field. In the general case, $X$ being Killing can be defined by the conditions $X\in\vect(M)$ and $\Lie_XL=0$ \cite[\S 5]{Jav1}, but (using Th. \ref{th:lie bracket} (C), the facts that $\dv \CC=\mathrm{Id}$ and $C(\CC,-,-)=0$, and also \eqref{eq:lie S in terms of nabla})
	\[
	\begin{split}
	\Lie_XL&=\Lie_X(g(\CC,\CC)) \\
	&=\Lie_Xg(\CC,\CC)+2g(\Lie_X\CC,\CC) \\
	&=\left(\lie_{X}g - \dv_{\lie_{X}\CC}g\right)(\CC,\CC)+2g(\lie_{X}\CC - \dv_{\lie_{X}\CC}\CC,\CC) \\
	&=\lie_{X}g(\CC,\CC) - 2C(\CC,\CC,\lie_{X}\CC)+2g(\lie_{X}\CC - \lie_{X}\CC,\CC) \\
	&=\lie_{X}g(\CC,\CC) \\
	&=\nabla g(\CC,\CC)+g(\nabla_\CC X,\CC)+g(\CC,\nabla_\CC X) \\
	&=2g(\CC,\nabla_\CC X)
	\end{split}
	\]
	This way, we see that neither of $X$ being Killing or $\lie_X g=0$ implies the other, and additionally we recover the characterization of \cite[Prop. 6.1 (i)]{HJP}.
\end{rem}

\begin{defi}\label{d_divT} Let $L$ be a fixed pseudo-Finsler metric defined on $A$ with (Levi-Civita--)Chern anisotropic connection $\nabla$. For $T\in\mathcal{T}_1^1(M_A)$, its \emph{divergence with respect to $L$} is defined as  
	\[
	\mathrm{div}(T):=\mathbf{C}^1_2(\nabla T)\in\ten_1^0(M_A)=\form_1(M_A),
	\]
	where $\mathbf{C}^1_2$ is the operator that contracts the contravariant index with the covariant one introduced by $\nabla$. In coordinates, 
	\begin{equation}
	\mathrm{div}(T)_j=\nabla_i T^i_j=\delta_i T^i_j+\varGamma_{ik}^iT^k_j-\varGamma_{ij}^kT^i_k
	\label{eq:divergence tensor coordinates}
	\end{equation}
	for the Christoffel symbols of \eqref{eq:chern}.
\end{defi}

\begin{rem}[Divergence vs. raising and lowering indices] \label{rem:divergence tensor}
	\noindent \begin{enumerate} [label=(\roman*)]
		\item First and foremost, by construction, \eqref{eq:div(T(X))} indeed holds for any $X\in\vect(M_A)$. At this point, it is important that the connection with which one defines $\mathrm{trace(\nabla X)}$ is the Chern one. 
		
		\item Thanks to the fact that the Chern connection parallelizes $g$, namely $\nabla_kg_{ij}=0$ and $\nabla_k g^{ij}=0$, the following hold: 
		\begin{equation}
		g^{ik}\nabla_k T_{ij}=g^{ik}g_{il}\nabla_k T^l_{j}=\nabla_k T^k_{j}=\mathrm{div}(T)_j,
		\label{eq:div flat}
		\end{equation}
		\begin{equation}
		\nabla_iT^{ij}=\nabla_iT^i_{l}g^{lj}=g^{jl}\mathrm{div}(T)_l.
		\label{eq:div sharp}
		\end{equation}
		This means that one could define the divergences of $S\in\mathcal{T}_2^0(M_A)$ and $R\in\mathcal{T}_0^2(M_A)$ straightforwardly,\footnote{Here, $\mathbf{C}_{1,3}$ is the operator that (metrically) contracts the first index of $S$ with the one introduced by $\nabla$, and $\mathbf{C}_1^1$ is the operator that (naturally) contracts the first index of $R$ with the one introduced by $\nabla$.} $\mathrm{div}(S)=\mathbf{C}_{1,3}(\nabla S)\in\mathcal{T}_1^0(M_A)=\form_1(M_A)$ and $\mathrm{div}(R)=\mathbf{C}_1^1(\nabla R)\in\mathcal{T}_0^1(M_A)=\vect(M_A)$, and then \eqref{eq:div flat} and \eqref{eq:div sharp} would read respectively 
		\[
		\mathrm{div}(T^\flat)=\mathrm{div}(T),
		\]
		\[
		\mathrm{div}(T^\sharp)=\mathrm{div}(T)^\sharp.
		\]
		
		\item Regardless of this, in general we are not assuming the symmetry of $T^\flat$ or $T^\sharp$, we only did in Prop. \ref{prop:div(T(X))} (Cii). Instead, at the beginning of \S 5 we fixed a convention for the order of the indices in $T_{ij}$ and $T^{ij}$ (for example, $T^\flat(X,Y)=g(X,T(Y))\neq g(T(X),Y)$). In the remainder of \S 4 and with said condition (Cii) only.
	\end{enumerate}
\end{rem}

\subsection{Chern vs. Berwald}\label{d_5.2}

One needs to keep in mind a discussion present in \cite{JSV1}. The metric connection $\HH A$ is the underlying nonlinear connection of an infinite family of $A$-anisotropic connections $\nabla$. One of them is the (Levi-Civita)--Chern connection of $L$, which is the horizontal part of Chern-Rund's and Cartan's classical connections and has Christoffel symbols \eqref{eq:chern}. All the others are this one plus an anisotropic tensor $Q\in\ten_2^1(M_A)$ with $Q(-,\CC)=0$ when viewed as an $\fun(A)$-bilinear map $\vect(M_A)\times\vect(M_A)\rightarrow\vect(M_A)$. In particular, for $Q=-\la^\sharp$, one gets the Berwald anisotropic connection of $L$, which is the horizontal part of Berwald's and Hasiguchi's classical connections and has Christoffel symbols \eqref{eq:berwald}. We did not a priory select any of these $\nabla$'s.

In some of the previous literature \cite{DL,MT,NDT,NNKN}, the Finslerian divergence of vector fields was chosen to be defined directly with the Chern connection. In \cite{Rund,Min17}, the quantity $\mathrm{trace}(\nabla Z)$, with $\nabla$ the Chern anisotropic connection, was referred to as the divergence of $Z$, though only after it had appeared in the divergence theorem. We have proven that the most natural definition leads to this characterization, hence clarifying why using Chern's covariant derivative is not arbitrary. Moreover, we have seen that said derivative fulfills the natural requisite \eqref{eq:div(T(X))} and is compatible with the lowering and raising of indices; these are key properties when it comes to the stress-energy tensor $T$. Still, it is important to compare this with what happens when one uses the other most natural covariant derivative: Berwald's.

\begin{rem}[Divergence in terms of the Berwald connection] Let $\nabla$ be the Chern anisotropic connection of $L$, with Christoffel symbols \eqref{eq:chern}, and $\widehat{\nabla}$ be the Berwald one, with symbols \eqref{eq:berwald}.
	\noindent \begin{enumerate} [label=(\roman*)] 
		\item \eqref{eq:divergence vector coordinates} and \eqref{eq:divergence tensor coordinates} read respectively
		\[
		\di(Z)=\widehat{\nabla}_i Z^i+\la_k Z^k=\mathrm{trace}(\widehat{\nabla} Z)+\la^{\mathfrak{m}}(Z),
		\]	 
		\[
		\begin{split}
		\di(T)_j &= \widehat{\nabla}_i T^i_j + \la_k T^k_j - \la_{ij}^k T_k^i \\
		&= \mathbf{C}^1_2(\widehat{\nabla} T)_j + \la^{\mathfrak{m}}(T)_j - \mathbf{C}^1_1(\la^\sharp(T(-),-))_j,
		\end{split}
		\]
		where $\la^{\mathfrak{m}}$ is the mean Landsberg tensor (see \eqref{eq:mean landsberg}) and the contraction operators have the obvious meanings. Moreover, for $X\in\vect(M_A)$
		\[
		\qquad\begin{split}
		\mathrm{trace}(T(\nabla X)) = T^i_j \nabla_iX^j &= T^i_j \widehat{\nabla}_iX^j + T^i_j \la_{ik}^j X^k \\
		&= \mathrm{trace}(T(\widehat{\nabla} X)) + \mathrm{trace}(\la^\sharp(T(-),X)),
		\end{split}
		\]
		which makes \eqref{eq:div(T(X))} consistent with the previous formulas.
		
		\item One sees that the vanishing of $\la^{\mathfrak{m}}$ (or of the mean Cartan $C^\mathfrak{m}$, see \cite[(6.37)]{Shen2}) implies that the divergence of elements of $\vect(M_A)$ coincides with the trace of their Berwald covariant derivative. However $\la^{\mathfrak{m}} = 0$ (or even $C^\mathfrak{m} = 0$) is not enough if one wants to obtain the same characterization for elements of $\ten_1^1(M_A)$. 
	\end{enumerate}
\end{rem}

\begin{rem}[Sufficient conditions for $\lie_X g=0$ and being Finslerian Killing]
	In Rem. \ref{eq:killing} one could see that $X\in\vect(M)$ together with $\nabla_\CC X=0$ is sufficient for $X$ to be Killing. This condition does not privilege the Chern connection $\nabla$ against the Berwald $\widehat{\nabla}$:
	\[
	\nabla_\CC X=\widehat{\nabla}_\CC X + \la^\sharp(\CC,X) = \widehat{\nabla}_\CC X
	\]
	(see \cite[(38)]{Jav1}, where $\Lie^\flat$ is what here we would denote $\la^\sharp$). However, when it comes to the stress-energy tensor, we have seen that the relevant condition is not this, but rather $\lie_X g=0$. Prop. \ref{prop:div(T(X))} (A) implies that $\nabla^v X=0$ is sufficient for $\left(\lie_X g \right)_v = 0$, and this does privilege $\nabla$ against $\widehat{\nabla}$.
\end{rem}

\subsection{Finslerian conservation laws and main examples}\label{d_5.3}

Compare the results here with the classical case \eqref{eq:riemannian conservation law} and also with \cite{Min17}.

\begin{cor}\label{cor:conservation laws0}
 Let $L$ be a fixed pseudo-Finsler metric defined on $A$. If
	\begin{enumerate} [label=(\roman*)]
		\item $X\in\mathfrak{X}(M_A)$ is an anisotropic vector field, 
		\item $V\in\mathfrak{X}^A(U)$ is an $A$-admissible field with $U\subseteq M$ open, 
		
\item $T\in\ten_1^1(M_A)$	is an anisotropic 2-tensor,	
		and
		\item $\overline{D}\subseteq U$ is a domain with $\partial D$ smooth up to subset of $0$ $\left(n-1\right)$-dimensional measure on $M$ and $\mathrm{Supp}(X_V)\cap\overline{D}$ compact,
	\end{enumerate} 
	then
	\begin{equation}\label{completeintegral}
	\begin{split}
	&\quad\int_D \mathrm{div}(T)(X) d\mathrm{Vol}_V+\int_D\mathrm{trace}(T(\nabla X))_V d\mathrm{Vol}_V\\
	&+\int_D\left\{C^\mathfrak{m}(\mathrm{D}_{T(X)}V)+\mathrm{trace}(\dot\partial _{\mathrm{D}V}T(X))\right\}
	d\mathrm{Vol}_V  =\int_{\partial D}\imath_{T(X)_{V}}(d\mathrm{Vol}_V),
	\end{split}
	\end{equation}
	where $C^\mathfrak{m}$ is the mean Cartan tensor and $\mathrm{D}V$ is computed with the metric nonlinear connection \eqref{eq:metric connection}.
\end{cor}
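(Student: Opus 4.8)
The plan is to deduce the statement directly from the scalar divergence theorem, Th.~\ref{th:divergence theorem}, applied to the anisotropic vector field $Z:=T(X)\in\vect(M_A)$, combined with the contraction identity of Prop.~\ref{prop:div(T(X))} (B).

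First I would check that $Z=T(X)$ satisfies the hypotheses of Th.~\ref{th:divergence theorem}. Conditions (i) and (ii) there are immediate: $T(X)$ is an anisotropic vector field since $T\in\ten_1^1(M_A)$ is interpreted as an endomorphism and $X\in\vect(M_A)$, and $V$ is the same $A$-admissible field. For (iii), the only point to verify is that $\mathrm{Supp}(T(X)_V)\cap\overline D$ is compact. Since $T(X)_V=T_V(X_V)$ depends $\fun(U)$-linearly on $X_V$, one has $T(X)_V(p)=0$ whenever $X_V(p)=0$, so $\mathrm{Supp}(T(X)_V)\subseteq\mathrm{Supp}(X_V)$; hence $\mathrm{Supp}(T(X)_V)\cap\overline D$ is a closed subset of the compact set $\mathrm{Supp}(X_V)\cap\overline D$, and therefore compact.

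Next, by Prop.~\ref{prop:div(T(X))} (B) — which is precisely \eqref{eq:div(T(X))} reinterpreted in the pseudo-Finsler setting, valid thanks to Prop.~\ref{prop:divergence vector characterization} and Def.~\ref{d_divT} — one has the identity of anisotropic functions
\[
\di(T(X))=\mathbf{C}^1_2(\nabla T)(X)+\mathrm{trace}(T(\nabla X))=\di(T)(X)+\mathrm{trace}(T(\nabla X))\in\fun(A).
\]
Composing with $V$ and substituting this into the left-hand side of \eqref{eq:divergence theorem} applied to $Z=T(X)$ — observing that $\mathrm{D}_ZV=\mathrm{D}_{T(X)}V$ and $\dot\partial_{\mathrm{D}V}Z=\dot\partial_{\mathrm{D}V}T(X)$, and that the right-hand side $\int_{\partial D}\imath_{Z_V}(d\mathrm{Vol}_V)$ becomes $\int_{\partial D}\imath_{T(X)_V}(d\mathrm{Vol}_V)$ — yields \eqref{completeintegral} at once.

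The argument is essentially bookkeeping once Th.~\ref{th:divergence theorem} and Prop.~\ref{prop:div(T(X))} (B) are available; the only mild subtlety, and the closest thing to an obstacle, is ensuring the support/compactness condition above (so that Stokes' theorem can legitimately be invoked inside the proof of Th.~\ref{th:divergence theorem}) and keeping in mind that the splitting of $\di(T(X))$ must be performed at the level of anisotropic functions on $A$, prior to evaluation along $V$. No genuine difficulty is anticipated.
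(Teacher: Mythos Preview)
Your proposal is correct and follows exactly the paper's own approach: apply Th.~\ref{th:divergence theorem} to $Z=T(X)$ and use Prop.~\ref{prop:div(T(X))}~(B) to split $\di(T(X))$. Your additional verification of the support condition $\mathrm{Supp}(T(X)_V)\subseteq\mathrm{Supp}(X_V)$ is a welcome detail that the paper leaves implicit.
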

\begin{proof}
Just  take $Z=T(X)$ in Th. \ref{th:divergence theorem} and use part (B) or Prop. \ref{prop:div(T(X))}	.
\end{proof}
\begin{rem}\label{rem:divT=0}
Observe that  \eqref{completeintegral} allows for an interpretation of the divergence of $T$ in terms of the flow in the boundary. Consider a sequence of domains $D_m$ such that their volumes go to zero when $m\rightarrow +\infty$ and consider an observer $V$ such that is infinitesimally parallel at $p\in M$, namely, $\mathrm{D}V=0$ in $p\in M$ and $X$ such that $\nabla^vX=0$. Then \eqref{completeintegral} and the mean value theorem imply that
\[ \di(T)_v(X) =\lim_{m\rightarrow +\infty} \frac{1}{\mathrm{Vol}_V(D_m)} \int_{\partial D_m}\imath_{T(X)_{V}}(d\mathrm{Vol}_V).	\]
In particular, $\di(T)_v=0$ can be interpreted as that the observer $v$ measures conservation of energy in its restspace.
\end{rem}

\begin{cor} \label{cor:conservation laws} In the ambient of the previous corollary, 
assume:
\begin{enumerate} [label=(\roman*)]
	\item $\di(T)_V=0$.
	
	\item Any of the conditions (Ci) or (Cii) of Prop. \ref{prop:div(T(X))} holds for $T^\flat_V$.
	
	\item $C^\mathfrak{m}(\mathrm{D}_{T(X)}V)+\mathrm{trace}\left\{\dot\partial _{\mathrm{D}V}(T(X))\right\}=0$.
\end{enumerate}
Then
\begin{equation}
	\int_{\partial D}\imath_{T_{V}(X_{V})}(d\mathrm{Vol}_V)=0.
	\label{eq:conservation laws}
\end{equation}
\end{cor}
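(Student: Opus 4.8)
The plan is to specialize the master identity \eqref{completeintegral} of Corollary \ref{cor:conservation laws0} and then annihilate, one by one, the three volume integrals on its left-hand side using the three hypotheses. First I would invoke Corollary \ref{cor:conservation laws0} verbatim, with the same $X$, $V$, $T$ and $\overline{D}$ (whose regularity and support hypotheses are exactly those assumed here), obtaining
\[
\int_D \di(T)(X)\,d\mathrm{Vol}_V + \int_D \mathrm{trace}(T(\nabla X))_V\,d\mathrm{Vol}_V + \int_D\left\{C^\mathfrak{m}(\mathrm{D}_{T(X)}V)+\mathrm{trace}(\dot\partial_{\mathrm{D}V}T(X))\right\}d\mathrm{Vol}_V = \int_{\partial D}\imath_{T(X)_{V}}(d\mathrm{Vol}_V).
\]

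Next I would dispatch the three terms on the left. The first integrand is $\di(T)_V(X_V)$, which vanishes identically on $U$ by hypothesis (i). The third integral vanishes pointwise by hypothesis (iii). For the middle term, hypothesis (ii) asserts that one of (Ci), (Cii) of Prop. \ref{prop:div(T(X))} holds for $T^\flat_V$; read fiberwise this means the relevant condition holds at $v=V_p$ for every $p\in U$, so Prop. \ref{prop:div(T(X))} (C) gives $\mathrm{trace}(T(\nabla X))(V_p)=0$ for all $p$, i.e. $\mathrm{trace}(T(\nabla X))_V\equiv 0$ on $U$. Hence the entire left-hand side is zero, and since $T(X)_V=T_V(X_V)$ by the composition convention of \S \ref{sec:preliminaries}, the right-hand side is precisely $\int_{\partial D}\imath_{T_{V}(X_{V})}(d\mathrm{Vol}_V)$, which yields \eqref{eq:conservation laws}.

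I do not anticipate a genuine obstacle: the analytic content was already front-loaded into Th. \ref{th:divergence theorem}, Prop. \ref{prop:div(T(X))} and Cor. \ref{cor:conservation laws0}, so this is a bookkeeping argument. The only point requiring a line of care is the passage from the pointwise formulation of Prop. \ref{prop:div(T(X))} (stated at a fixed $v\in A$) to a statement along the observer field $V$; this is immediate once each hypothesis is understood as imposed at $V_p$ for all $p$. A secondary remark worth recording in the proof is that alternative (Cii) additionally demands $T^\flat_V$ symmetric, consistent with the index-ordering convention fixed at the start of \S \ref{s5}, whereas (Ci) does not; in either case only the vanishing of $\mathrm{trace}(T(\nabla X))_V$ is used.
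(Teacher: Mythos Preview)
Your proposal is correct and follows exactly the paper's own approach: invoke Cor.~\ref{cor:conservation laws0} and use hypotheses (i), (ii) (via Prop.~\ref{prop:div(T(X))} (C)), and (iii) to annihilate the three volume integrals in \eqref{completeintegral}. You have simply spelled out in detail what the paper compresses into a single sentence, including the careful remark that Prop.~\ref{prop:div(T(X))} is applied fiberwise along $V$ and the identification $T(X)_V=T_V(X_V)$.
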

 
\begin{proof}

{  It follows from Cor \ref{cor:conservation laws0}, taking into account that the hypotheses $(i)$, $(ii)$ and $(iii)$ imply that the three first integrals in \eqref{completeintegral} vanish.  }
\end{proof}

\begin{rem} [Sufficient conditions for the hypotheses (i), (ii) and (iii)]
	\noindent	\begin{enumerate} [label=(\roman*)]
		\item Obviously, $\di(T)=0$ suffices, but we do not need to assume that the divergence vanishes for all observers.
		\item $X=\CC$ suffices. In fact, $\nabla\CC=0$ \cite[Prop. 2.9]{JSV2}, so (Ci) of Prop. \ref{prop:div(T(X))} holds for $T^\flat_V$. Thus, assuming the other two hypotheses, we get 
		\[
		\int_{\partial D}\imath_{T_{V}(V)}(d\mathrm{Vol}_V)=0.
		\] 
		
		\item  Although the hypothesis may seem artificial as it stands, there are a number of natural situations in which it is guaranteed. First, in classical Relativity ($g$, $T$ and $X$ isotropic), because $C^\mathfrak{m}=0$ and $\dot{\partial}(T(X))=0$; the result is then independent of $V$. Second, when the observer field is parallel ($\mathrm{D}V=0$), trivially. Third, when $\mathrm{D}V=\theta\otimes V$ for some $1$-form $V$ and $T(X)$ is $0$-homogeneous, because of Euler's theorem. And fourth, in the situation described in \cite[\S 5.1]{Min17} ($Z$ is our $T(X)$, $s$ is our $V$ and $I$ is our $C^\mathfrak{m}$).
	\end{enumerate}
\end{rem}

\begin{rem} [Representations of \eqref{eq:conservation laws}] \label{rem:conservation laws normals}
	One needs to keep in mind Rem. \ref{rem:normals}. For a smooth part $\Gamma$ of $\partial D$, one can use the (salient) Riemannian unit normal to represent 
	\begin{equation}
	\begin{split}
	\int_{\Gamma}\imath_{T_{V}(X_{V})}(d\mathrm{Vol}_V)&=\int_{\Gamma}g_V(\widehat{N}_V,\widehat{N}_V)g_V(\widehat{N}_V,T_V(X_V))d\sigma_V \\
	&=\int_{\Gamma}g_V(\widehat{N}_V,\widehat{N}_V)T^\flat_V(\widehat{N}_V,X_V)d\sigma_V
	\end{split}
	\label{eq:conservation laws option 1}
	\end{equation}
	when $\Gamma$ is non-$g_V$-lightlike, and the Finslerian unit normal to represent
	\[
	\int_{\Gamma}\imath_{T_{V}(X_{V})}(d\mathrm{Vol}_V)=\int_{\Gamma}\epsilon_\xi L(\xi)g_\xi(\xi,T_V(X_V))d\Sigma^\xi_V
	\]
	when $L$ is Lorentz-Finsler and $\Gamma$ is $L$-spacelike.
	This makes it possible to have the very same conservation law \eqref{eq:conservation laws} written in distinct ways, and in the examples below we will see that different expressions are preferable in different situations.
\end{rem} 

In the remainder of the section, we analyze the Finslerian conservation laws in two settings in which $L$ is Lorentz-Finsler. In particular, $g$ has signature $(+,-,...,-)$, $A$ determines a time orientation, $L>0$ on $A$, and $(A,L)$ is maximal with these properties. We also have regularity conditions at $\partial A$, and in fact one sees that Th. \ref{th:divergence theorem} and Cor. \ref{cor:conservation laws} still hold when allowing that $Z,X\in\vect(M_{\overline{A}})$, $T\in\ten_1^1(M_{\overline{A}})$ and $V\in\vect^{\overline{A}}(U)$. Despite this, in both settings it will be necessary to take $V$ as $L$-timelike, so the regularity at $\partial A$ will not be used.

\subsubsection{Example: Lorentz norms on an affine space} \label{subsec:affine space}
In this example, we shall particularize Cor. \ref{cor:conservation laws} to the easiest Finslerian setting in which we can assure that its hypothesis (iii) holds. Namely, the structure of an affine space automatically provides an infinite number of parallel observer fields, $V\in\vect^A(M)$ with $\mathrm{D}V=0$. 

To be preicse, suppose that $M=E$ is an affine space equipped with a {\em Lorentz norm} on an open conic subset $A_\ast\subseteq \vec{E}\setminus 0$ (a positive pseudo-Minkowski norm with Lorentzian signature in \cite[Def. 2.11]{JS20}). Under the usual identifications, such a norm can be seen as a Lorentz-Finsler $L$ on $A\subseteq \TT E\setminus\mathbf{0}\equiv E\times\left(\vec{E}\setminus 0\right)$ that is independent of the first factor. Consequently, its fundamental tensor is nothing more than a Lorentzian scalar product $g_v$ for each $v\in A_\ast$. The metric nonlinear connection of $L$ coincides with the canonical connection of $E$, hence so do the Chern and Berwald anisotropic connections.\footnote{For instance, it is clear that in affine coordinates the components of the metric spray vanish, so the geodesics are the straight lines of $E$.} This is what implies that the parallel $V\in\vect^A(E)$ correspond exactly to the elements $v\in A_\ast$.

Let us introduce some notation. Given $(p_0,v)\in A$ with $L(v)=1$, we can consider the Lorentzian scalar product $g_v$ and the orthogonal hyperplane $\mathscr{R}:=p_0+\vec{\mathscr{R}}:=p_0+\left\{w\in\vec{E}\colon g_v(v,w)=0\right\}$. We get an isometry $(t,p)\in\R\times\mathscr{R}\mapsto p+tv\in E$, where $\mathscr{R}$ is equipped with $-\left.g_v\right|_\mathscr{R}$ (a Euclidean scalar product), $\R\times\mathscr{R}$ with $\dd t^2+\left.g_v\right|_\mathscr{R}$ (a Lorentzian one) and $E$ with $g_v$. Let $\overline{\Omega}$ be a compact domain of $\mathscr{R}$ with $\partial\Omega\subseteq\mathscr{R}$ smooth up to a null $\left(n-2\right)$-dimensional measure set, and let $\widehat{n}_v$ be its salient unit $\left(-\left.g_v\right|_\mathscr{R}\right)$-normal. Then for $t_0<t_1$, the compact domain $\overline{D}\equiv\left[t_0,t_1\right]\times\overline{\Omega}\subseteq E$ has the required smoothness to apply Cor. \ref{cor:conservation laws}, its boundary is $\partial D=\left\{t_1\right\}\times\overline{\Omega}\cup\left[t_0,t_1\right]\times\partial\Omega\cup\left\{t_0\right\}\times\overline{\Omega}$, and its salient $g_v$-normal is given by
\[
\left.\widehat{N}_v\right|_{\left\{t_1\right\}\times\Omega}=v,\qquad\left.\widehat{N}_v\right|_{\left]t_0,t_1\right[\times\partial\Omega}=\widehat{n}_v,\qquad\left.\widehat{N}_v\right|_{\left\{t_0\right\}\times\Omega}=-v;
\]
\[
g_v(-v,-v)=g_v(v,v)=L(v)=1,
\]
\[
g_v(\widehat{n}_v,\widehat{n}_v)=-\left(-\left.g_v\right|_\mathscr{R}\right)(\widehat{n}_v,\widehat{n}_v)=-1.
\]

\begin{rem}
For a $V\in\vect^A(E)$ identifiable with $v\in A_\ast$, we know that the hypothesis (iii) of Cor. \ref{cor:conservation laws} holds automatically. If (i) and (ii) hold too, then we get \eqref{eq:conservation laws}, for which we can use the representation \eqref{eq:conservation laws option 1}. However, given the nature of the metric ``nonlinear" and Chern ``anisotropic" connections, it is easy to convince oneself that evaluating the result of anisotropic computations on this $V$ is the same as first evaluating on $V$ and then computing with isotropic tensors. For instance $\di(T)_V=\di(T_V)$ and $\left(\lie_Xg\right)_V=\Lie_{X_V}(g_V)$. As a consequence, mathematically we get exactly the same conservation laws as if we just were in the Lorentzian affine space $(E,g_v)$. Physically, though, different observers will measure different momenta.
\end{rem}

\begin{cor} Let $V\in\vect^A(E)$ parallely identifiable with an $v\in A_\ast$. If $T\in\ten_1^1(E_A)$ is such that $\di(T_V)=0$ and $X\in\vect(E_A)$ is such that $T^\flat_V(-,\nabla^V_{-} X)$ is antisymmetric, or $T^\flat_V$ is symmetric and $\Lie_{X_V}(g_V)=0$, then
\begin{equation}
	\begin{split}
	0&=\int_{\left\{t_1\right\}\times\Omega}T^\flat_V(V,X_V)d\sigma_V-\int_{\left\{t_0\right\}\times\Omega}T^\flat_V(V,X_V)d\sigma_V \\
	&\quad-\int_{\left]t_0,t_1\right[\times\partial\Omega}T^\flat_V(\widehat{n}_V,X_V)d\sigma_V,
	\end{split}
	\label{eq:conservation affine space}
\end{equation}
where $d\sigma_V$ is identifiable with the volume form of $-\left.g_v\right|_{\Omega}$ on $\left\{t_\mu\right\}\times\Omega$ and coincides with the volume form of $\left.g_v\right|_{\left]t_0,t_1\right[\times\partial\Omega}$ on $\left]t_0,t_1\right[\times\partial\Omega$.
\end{cor}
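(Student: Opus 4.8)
The plan is to read the statement off Cor.~\ref{cor:conservation laws}; the only substance is to verify its hypotheses for the parallel observer field $V$ and the box $\overline{D}=[t_0,t_1]\times\overline{\Omega}$, and then to expand the single boundary identity $\int_{\partial D}\imath_{T_V(X_V)}(\vol_V)=0$ over the three faces of $\partial D$, rewriting each face term by means of the Riemannian unit normal.

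First I would check the hypotheses of Cor.~\ref{cor:conservation laws} (and of the ``ambient'' Cor.~\ref{cor:conservation laws0}). The domain $\overline{D}$ was already set up so that $\partial D$ is smooth up to the null $(n-1)$-dimensional set consisting of the corner $\{t_0,t_1\}\times\partial\Omega$ together with the non-smooth locus of $\partial\Omega$, and $\mathrm{Supp}(X_V)\cap\overline{D}$ is compact because $\overline{D}$ is; moreover $X\in\vect(E_A)$, $V\in\vect^A(E)$ and $T\in\ten_1^1(E_A)$ as required. Since $V$ is parallel, $\mathrm{D}V=0$, whence $\mathrm{D}_{T(X)}V=0$ and $\dot\partial_{\mathrm{D}V}(T(X))=0$, so hypothesis (iii) holds trivially. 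For (i) and (ii) I would invoke the remark preceding the statement: because the metric spray and the Chern Christoffel symbols vanish in affine coordinates, evaluating an anisotropic computation on the parallel $V$ is the same as evaluating on $V$ first and then computing with the isotropic Levi--Civita connection of $g_v$; in particular $\di(T)_V=\di(T_V)=0$, $\nabla^V X$ agrees with the Levi--Civita covariant derivative of $X_V$, and $(\lie_Xg)_V=\Lie_{X_V}(g_V)$. Thus the assumed antisymmetry of $T^\flat_V(-,\nabla^V_{-}X)$ is precisely condition (Ci) of Prop.~\ref{prop:div(T(X))} for $T^\flat_V$, and the assumed ``$T^\flat_V$ symmetric and $\Lie_{X_V}(g_V)=0$'' is precisely (Cii); either way hypothesis (ii) holds, and (i) is assumed. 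Cor.~\ref{cor:conservation laws} then gives $\int_{\partial D}\imath_{T_V(X_V)}(\vol_V)=0$.

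Next I would split $\partial D=\{t_1\}\times\overline{\Omega}\,\cup\,]t_0,t_1[\times\partial\Omega\,\cup\,\{t_0\}\times\overline{\Omega}$ (the overlaps having null $(n-1)$-measure) and apply the Riemannian-normal representation \eqref{eq:conservation laws option 1} of Rem.~\ref{rem:conservation laws normals} to each smooth face, all of which are non-$g_V$-lightlike with salient $g_v$-normal and normal-square computed in the setup: $\widehat{N}_v=v$, $g_v(v,v)=1$ on $\{t_1\}\times\Omega$; $\widehat{N}_v=-v$, $g_v(-v,-v)=1$ on $\{t_0\}\times\Omega$; $\widehat{N}_v=\widehat{n}_v$, $g_v(\widehat{n}_v,\widehat{n}_v)=-1$ on $]t_0,t_1[\times\partial\Omega$. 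Substituting into $\int g_V(\widehat{N}_V,\widehat{N}_V)\,T^\flat_V(\widehat{N}_V,X_V)\,d\sigma_V$ yields $+\int_{\{t_1\}\times\Omega}T^\flat_V(V,X_V)\,d\sigma_V$, then $-\int_{\{t_0\}\times\Omega}T^\flat_V(V,X_V)\,d\sigma_V$ (the two sign sources $g_v(-v,-v)=+1$ and $T^\flat_V(-v,X_V)=-T^\flat_V(V,X_V)$ combining into a single minus), and $-\int_{]t_0,t_1[\times\partial\Omega}T^\flat_V(\widehat{n}_V,X_V)\,d\sigma_V$; adding the three is exactly \eqref{eq:conservation affine space}. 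The stated identifications of $d\sigma_V$ (the volume form of $-\left.g_v\right|_\Omega$ on the two horizontal faces, the induced Lorentzian volume form on the lateral one) are the content of the last part of Rem.~\ref{rem:normals}(i) applied to these normals.

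I do not expect a genuine obstacle here; the only points requiring care are that the reduction ``anisotropic-on-$V$ $=$ isotropic-for-$g_v$'' be applied correctly (it is exactly the remark quoted, resting on the vanishing of the affine Christoffel symbols) and that the signs on the bottom face be tracked, where the outward normal is $-v$ rather than $v$ and the double sign flip collapses to the lone minus in front of the $\{t_0\}$ integral.
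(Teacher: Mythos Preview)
Your proposal is correct and follows essentially the same approach as the paper: the corollary is stated without a written proof, being an immediate consequence of Cor.~\ref{cor:conservation laws} together with the preceding Remark (which supplies hypotheses (i)--(iii) via $\mathrm{D}V=0$ and the reduction $\di(T)_V=\di(T_V)$, $(\lie_Xg)_V=\Lie_{X_V}(g_V)$) and the boundary representation \eqref{eq:conservation laws option 1} applied to the normals already computed in the setup. Your tracking of the signs on the three faces is accurate and matches what the paper leaves implicit.
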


Physically, even though Lorentz norms generalize Very Special Relativity \cite{Bogoslovsky}, the {  classical} interpretations of Special Relativity are still valid; we list them for completeness: $v$ is an instantaneous observer at an event $p_0$,  $\vec{\mathscr{R}}$ is its restspace and $\mathscr{R}$ is the {\em simultaneity hyperplane of $v$}, namely the ``universe at an instant, say $t=0$, as seen by $v$". The affine space structure {  allows for a canonical propagation of $v$ to all of the} spacetime. Hence, if $\overline{\Omega}$ is a space region at $t=0$, then $\overline{D}$ is the ``evolution of $\overline{\Omega}$ along the time interval $\left[t_0,t_1\right]$ as witnessed by $v$". \eqref{eq:conservation affine space} expresses that {\em the variation after some time of the total amount of $X_v$-momentum in $\Omega$ is exactly equal to the amount of it that flowed across $\partial\Omega$}. 

\subsubsection{Example: Cauchy hypersurfaces in a Finsler spacetime}

Here we present a construction which manifestly generalizes that of the previous example, again with straightforward physical interpretations, and we find an estimate that allows us to interpret \eqref{eq:conservation affine space} when $\partial\Omega$ is ``at infinity". We will take $V\in\vect^A(U)$ with $U\subseteq M$ open, and we recall that we will assume the hypotheses of Cor. \ref{cor:conservation laws}.

Suppose that the Finsler spacetime $(M,L)$ is {\em globally hyperbolic}. By this, we mean that there is some (smooth, for simplicity) {\em $L$-Cauchy hypersurface} $\mathscr{S}\subseteq M$: every inextensible $L$-timelike curve $\gamma\colon I\rightarrow M$ (thus $\dot{\gamma}(t)\in A$) meets $\mathscr{S}$ exactly once. Let us assume that there are two $L$-spacelike Cauchy hypersurfaces $\mathscr{S}_0,\mathscr{S}_1\subseteq U$ which do not intersect.\footnote{The case when they interesect can be also conisdered by taking into account that, then, the open set $M\setminus J^+(S_1 \cup S_2)$ is still globally hyperbolic and a Cauchy hypersurface $S_3$ of this open subset will be also Cauchy for $M$ (and it will not intersect any of the previous ones).} Then the results of \cite{BerSan} can be automatically transplanted: there exists a foliation by spacelike Cauchy hypersurfaces $M\equiv\R\times\mathscr{S}$ such that $\mathscr{S}_0\equiv\left\{t_0\right\}\times\mathscr{S}$ and $\mathscr{S}_1\equiv\left\{t_1\right\}\times\mathscr{S}$. Taking the Finslerian unit normal $\xi$ to each level $\left\{t\right\}\times\mathscr{S}$ produces an $L$-timelike field $\xi\in\vect^A(M)$. We can take this $\xi$ to be our $V$, but we will not do so for the most part of this example.

Suppose also that $\left\{\overline{\Omega_{0,m}}\right\}$ is an exhaustion by compact domains of $\mathscr{S}_0$, namely $\overline{\Omega_{0,m}}\subseteq\Omega_{0,m+1}$ and $\underset{m\in\N}{\bigcup}\Omega_{0,m}=\mathscr{S}_0$, such that $\partial\Omega_{0,m}\subseteq\mathscr{S}_0$ is smooth a. e. For $p\in\mathscr{S}_0$, let $\gamma_p$ be the integral curve of $V$ starting at $p$, which necessarily meets $\mathscr{S}_1$ at a unique instant $t_p\in\R$. Put 
\[
\Omega_{1,m}:=\underset{p\in\Omega_{0,m}}{\bigcup}\gamma_p(\left\{t_p\right\})\subseteq\mathscr{S}_1,\qquad\Gamma_p:=\gamma_p\left[\min\left\{0,t_p\right\},\max\left\{0,t_p\right\}\right],  
\]  
\[
D_m:=\underset{p\in\Omega_{0,m}}{\bigcup}\Gamma_p\subseteq U,\qquad\Gamma_m:=\underset{p\in\partial\Omega_{0,m}}{\bigcup}\Gamma_p.
\]

\begin{rem}
	By construction,
\end{rem} 
\begin{enumerate} [label=(\roman*)]
	\item $\left\{\overline{\Omega_{1,m}}\right\}$ is again an exhaustion by compact domains of $\mathscr{S}_1$ such that $\partial\Omega_{1,m}=\underset{p\in\partial\Omega_{0,m}}{\bigcup}\gamma_p(\left\{t_p\right\})\subseteq\mathscr{S}_1$ is smooth a. e.
	
	\item $\overline{D_m}$ is a compact domain of $U$ with $\partial D_m=\overline{\Omega_{1,m}}\cup\Gamma_m\cup\overline{\Omega_{0,m}}\subseteq U$ smooth a. e. We do not really need to consider the union of all the $D_m$'s.
\end{enumerate} 	

Next, for $Z\in\vect(M_A)$, we shall give the quantitative decay condition on (some components of) $Z_V$ so that the integral
\[
\int_{\Gamma_m}\imath_{Z_V}(\vol_V)
\] 
vanishes in the limit. The key fact for it will be that $V$ is everywhere tangent to $\Gamma_m$ (this is composed of $\gamma_p$'s). In particular, as $V$ is $g_V$-timelike, so must be $\Gamma_m$.

\begin{rem} \label{rem:auxiliar riemannian}
	The presence of $V$ allows us to define an auxiliar Riemannian metric $h_V$ on $U$ with norm $\left\Vert-\right\Vert_V$, which gives a very natural way of quantifying. Namely, if $\left\{e_0=V_p/F(V_p),e_1,...,e_n\right\}$ is an orthonormal basis for $g_{V_p}$, then we prescribe it to be also $h_{V_p}$-orthonormal; equivalently,
	\[
	h_{V_p}(u,w)=2g_{V_p}(u,\frac{V_p}{F(V_p)})g_{V_p}(w,\frac{V_p}{F(V_p)})-g_{V_p}(u,w).
	\]
	Then, by construction: 
	\begin{enumerate} [label=(\roman*)]
		\item The volume form of $h_V$ coincides with that of $g_V$, namely $\vol_V$.
		\item The salient unit $h_V$-normal to $\Gamma_m$ coincides with the corresponding $g_V$-normal. We denote it by $\widehat{N}_V$, as in \ref{rem:conservation laws normals}. 
		\item The hypersurface volume form of $\Gamma_m$ with respect to $h_V$ coincides with {  the one computed with} $g_V$, namely $d\sigma_V=i_m^\ast(\imath_{\widehat{N}_V}(\vol_V))$ with $i_m\colon\Gamma_m\hookrightarrow U$ the inclusion. Hence we speak just of the {\em hypersurface volume of $\Gamma_m$}, namely $\sigma_V(\Gamma_m)$. As $\widehat{N}_V$ is {  $g_V$-orthogonal to $V$}, and hence $g_V$-spacelike, we can use the representation
		\begin{equation}
		\begin{split}
		\int_{\Gamma_m}\imath_{Z_V}(\vol_V)&=\int_{\Gamma_m}g_V(\widehat{N}_V,\widehat{N}_V)g_V(\widehat{N}_V,Z_V)d\sigma_V \\
		&=-\int_{\Gamma_m}g_V(\widehat{N}_V,Z_V)d\sigma_V.
		\end{split}
		\label{eq:integral timelike boundary}
		\end{equation}
	\end{enumerate}
\end{rem}

Thanks to \eqref{eq:integral timelike boundary} and the fact that $g_V(\widehat{N}_V,V)=0$, we intuitively see that if $Z_V$ is proportional to $V$ at infinity and the hypersurface volume does not grow too much, then the integral will be negligible. To be precise, we require that
\begin{equation}
K_m \sigma_V(\Gamma_m)\longrightarrow 0 \quad \left(m\longrightarrow \infty\right),
\label{eq:decay condition}
\end{equation}
where
\[
\begin{split}
K_m:&=\underset{\Gamma_m}{\max}\left\Vert  Z_V-g_V(Z_V,\frac{V}{F(V)})\frac{V}{F(V)}\right\Vert_V \\
&=\underset{\Gamma_m}{\max}\left\{\sqrt{g_V(Z_V,\frac{V}{F(V)})^2-g_V(Z_V,Z_V)}\right\}.
\end{split}
\]

\begin{cor} \label{prop:conservation cauchy} In the above set-up, let $T\in\ten_1^1(M_A)$, $X\in\vect(M_A)$ and $V\in\vect^A(U)$ be such that the hypotheses of Cor. \ref{cor:conservation laws} hold on all the $D_m$'s, and put $Z:=T(X)$. If the decay condition \eqref{eq:decay condition} holds too, then
\begin{equation}
\int_{\Omega_{1,m}}\imath_{Z_V}(\vol_V)+\int_{\Omega_{0,m}}\imath_{Z_V}(\vol_V)\longrightarrow 0 \quad (m\longrightarrow\infty),
\label{eq:conservation law cauchy 2}
\end{equation}	
where $\Omega_{1,m}$ is constructed from $\Omega_{0,m}$ by intersecting the integral curves of $V$ with $\mathscr{S}_1$.
\end{cor}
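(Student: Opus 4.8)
The plan is to apply Cor.~\ref{cor:conservation laws} on each compact domain $\overline{D_m}$ and thereby reduce \eqref{eq:conservation law cauchy 2} to the vanishing, in the limit, of the single boundary integral over the lateral piece $\Gamma_m$. Applying that corollary is legitimate: by the remark preceding the statement, $\partial D_m=\overline{\Omega_{1,m}}\cup\Gamma_m\cup\overline{\Omega_{0,m}}$ is smooth up to a null $(n-1)$-dimensional set; $\overline{D_m}$ is compact, so $\mathrm{Supp}(X_V)\cap\overline{D_m}$ is automatically compact; and hypotheses (i)--(iii) of Cor.~\ref{cor:conservation laws} are assumed to hold on every $D_m$. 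Writing $Z:=T(X)$, so that $Z_V=T_V(X_V)$, and equipping each of $\Omega_{0,m}$, $\Gamma_m$, $\Omega_{1,m}$ with its orientation as a part of $\partial D_m$, Cor.~\ref{cor:conservation laws} gives
\[
\int_{\Omega_{1,m}}\imath_{Z_V}(\vol_V)+\int_{\Gamma_m}\imath_{Z_V}(\vol_V)+\int_{\Omega_{0,m}}\imath_{Z_V}(\vol_V)=0,
\]
the relative sign between the two integrals over $\Omega_{0,m}$ and $\Omega_{1,m}$ being positive precisely because, as parts of $\partial D_m$, these carry the past- and the future-pointing normal respectively. Hence it suffices to show $\int_{\Gamma_m}\imath_{Z_V}(\vol_V)\to 0$.

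For the lateral term I would first observe that $\Gamma_m$ is, by construction, swept out by integral curves of $V$, so $V$ is everywhere tangent to it; since $V$ is $g_V$-timelike, $\Gamma_m$ is a $g_V$-timelike hypersurface, its Riemannian unit normal $\widehat{N}_V$ from Rem.~\ref{rem:auxiliar riemannian} (which coincides with the $g_V$-unit normal) is $g_V$-spacelike, and $g_V(\widehat{N}_V,V)=0$. In particular the Finslerian unit normal is \emph{not} available on $\Gamma_m$, and one is forced to represent the boundary term in Rund's way, i.e.\ via \eqref{eq:integral timelike boundary}, $\int_{\Gamma_m}\imath_{Z_V}(\vol_V)=-\int_{\Gamma_m}g_V(\widehat{N}_V,Z_V)\,d\sigma_V$. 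Next, since $g_V(\widehat{N}_V,V)=0$, inside $g_V(\widehat{N}_V,-)$ one may replace $Z_V$ by its part $Z_V-g_V(Z_V,\tfrac{V}{F(V)})\tfrac{V}{F(V)}$ transverse to $V$; this vector and $\widehat{N}_V$ both lie in the $g_V$-spacelike restspace, on which $h_V=-g_V$ is positive definite, so the $h_V$-Cauchy--Schwarz inequality together with $\|\widehat{N}_V\|_V=1$ yields $|g_V(\widehat{N}_V,Z_V)|\le\big\|Z_V-g_V(Z_V,\tfrac{V}{F(V)})\tfrac{V}{F(V)}\big\|_V\le K_m$ pointwise on $\Gamma_m$. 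Integrating, $\big|\int_{\Gamma_m}\imath_{Z_V}(\vol_V)\big|\le K_m\,\sigma_V(\Gamma_m)\to 0$ by the decay condition \eqref{eq:decay condition}, which, combined with the displayed identity, proves \eqref{eq:conservation law cauchy 2}.

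No single computation here is the hard part: every ingredient is prepared in advance (the divergence theorem as Cor.~\ref{cor:conservation laws}, the representation \eqref{eq:integral timelike boundary}, the auxiliary metric $h_V$). The point worth stressing is structural, namely that on the lateral boundary the Finslerian unit normal genuinely cannot be used ($\Gamma_m$ is $L$-timelike rather than $L$-spacelike), which is exactly why the Riemannian normal $\widehat{N}_V$ of Rem.~\ref{rem:auxiliar riemannian} is forced and why $K_m$ --- the $h_V$-size of the component of $Z_V$ not proportional to $V$ --- is the quantity one must control, the orthogonality $g_V(\widehat{N}_V,V)=0$ being precisely what allows the $V$-component of $Z_V$ to be discarded at no cost. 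Two minor points also deserve a word: that the orientations of $\Omega_{0,m}$ and $\Omega_{1,m}$ in \eqref{eq:conservation law cauchy 2} are understood to be those induced from $\partial D_m$, and that $\sigma_V(\Gamma_m)<\infty$ for each fixed $m$ --- which holds because $\partial\Omega_{0,m}$ is compact and $p\mapsto t_p$ is continuous, so $\Gamma_m$ is compact, while its $g_V$-timelike character makes $d\sigma_V$ an honest $(n-1)$-dimensional volume form on it.
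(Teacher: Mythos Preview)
Your proof is correct and follows essentially the same approach as the paper's: apply Cor.~\ref{cor:conservation laws} on each $\overline{D_m}$ to reduce to the lateral integral over $\Gamma_m$, represent it via \eqref{eq:integral timelike boundary}, drop the $V$-component of $Z_V$ using $g_V(\widehat{N}_V,V)=0$, and bound the remainder by $K_m\,\sigma_V(\Gamma_m)$ via the $h_V$-Cauchy--Schwarz inequality. Your additional remarks on orientations, on why the Finslerian normal is unavailable on $\Gamma_m$, and on the finiteness of $\sigma_V(\Gamma_m)$ are welcome clarifications but do not alter the argument.
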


\begin{proof}
	Cor. \ref{cor:conservation laws} can be applied on $\overline{D_m}$, as $\mathrm{Supp}(Z_V)\cap\overline{D_m}$ is always compact. This and the representation \eqref{eq:integral timelike boundary} give 
	\begin{equation}
		0=\int_{\Omega_{1,m}}\imath_{Z_V}(\vol_V)+\int_{\Omega_{0,m}}\imath_{Z_V}(\vol_V)-\int_{\Gamma_m}g_V(\widehat{N}_V,Z_V)d\sigma_V.
		\label{eq:conservation law cauchy 1}
	\end{equation}
	Using the definition of $h_V$ (Rem. \ref{rem:auxiliar riemannian}) and the Cauchy-Schwarz inequality,
	\[
	\begin{split}
	0&\leq\left|\int_{\Gamma_m}-g_V(\widehat{N}_V,Z_V)d\sigma_V\right| \\
	&\leq\int_{\Gamma_m}\left|g_V(\widehat{N}_V,Z_V)\right|d\sigma_V \\
	&=\int_{\Gamma_m}\left|g_V(\widehat{N}_V,Z_V-g_V(Z_V,\frac{V}{F(V)})\frac{V}{F(V)})\right|d\sigma_V \\
	&=\int_{\Gamma_m}\left|-h_V(\widehat{N}_V,Z_V-g_V(Z_V,\frac{V}{F(V)})\frac{V}{F(V)})\right|d\sigma_V \\
	&\leq\int_{\Gamma_m} \left\Vert\widehat{N}_V\right\Vert_V \left\Vert Z_V-g_V(Z_V,\frac{V}{F(V)})\frac{V}{F(V)}\right\Vert_V d\sigma_V \\
	&=\int_{\Gamma_m} \left\Vert Z_V-g_V(Z_V,\frac{V}{F(V)})\frac{V}{F(V)}\right\Vert_V d\sigma_V \\
	&\leq\int_{\Gamma_m} K_m d\sigma_V \\
	&=K_m \sigma_V(\Gamma_m),
	\end{split} 
	\]
	so if $K_m \sigma_V(\Gamma_m)$ tends to $0$, then so does the integral along $\Gamma_m$ in \eqref{eq:conservation law cauchy 1}.
\end{proof}

\begin{rem} \label{rem:assymptotic energy conservation} In Cor. \ref{prop:conservation cauchy}, if one of the integrals of $\imath_{Z_V}(\vol_V)$ along $\mathscr{S}_0$ or $\mathscr{S}_1$ exists in the Lebesgue sense, then so does the other and \eqref{eq:conservation law cauchy 2} reads	
\[
\int_{\mathscr{S}_1}\imath_{Z_V}(\vol_V)+\int_{\mathscr{S}_0}\imath_{Z_V}(\vol_V)=0. 
\]
Note that they could be $\pm\infty$, as we have not assumed, for instance, that $Z_V$ is compactly supported in the union of all the $D_m$'s. Rather, we have assumed the decay condition \eqref{eq:decay condition} alone.
\end{rem}

\begin{rem}[Sufficient conditions for \eqref{eq:decay condition}] \label{rem:decay condition}
	As for ensuring the decay condition, there are two possible scenarios.
	\begin{enumerate} [label=(\roman*)]
		\item The hypersurface volume $\sigma_V(\Gamma_m)$ stays bounded. Then, it is enough for \eqref{eq:decay condition} that $K_m\rightarrow 0$, and one could instead postulate the stronger condition that the maximum outside $D_m$ tends to $0$, which is independent of the concrete compact exhaustion.
		
		\item $\sigma_V(\Gamma_m)$ grows without bound. In this case, one can just postulate that the decay of $K_m$ compensates the growth of $\sigma_V(\Gamma_m)$, but this does depend on the compact exhaustion
	\end{enumerate}
	Notice that this is a purely Finslerian difficulty. Indeed, suppose that $g$, $T$ and $X$ were isotropic and that $Z=T(X)$ was timelike. Then one could just set $V:=Z$ and then carry out all the construction. Cor. \ref{cor:conservation laws} would be independent of the observer field (and its hypothesis (iii) would hold trivially), and $K_m=0$ regardless of $\Gamma_m$. This is how we get the following statement of the classical law.
\end{rem}

\begin{cor} \label{prop:conservation cauchy classical}In the above se-up, suppose that $L$ comes from a Lorentzian metric on $M$. Let $T\in\ten_1^1(M)$ and $X\in\vect(M)$ be such that $\di(T)=0$ and $T^\flat(-,\nabla_{-} X)$ is antisymmetric, or $T^\flat$ is symmetric and $\Lie_{X}g=0$. If $Z:=T(X)$ is timelike, then
	\[
	\int_{\Omega_{1,m}}\imath_{Z_V}(\vol_V)+\int_{\Omega_{0,m}}\imath_{Z_V}(\vol_V)\longrightarrow 0 \quad (m\longrightarrow\infty), 
	\]
where $\Omega_{1,m}$ is constructed from $\Omega_{0,m}$	by intersecting the integral curves of $Z$ with $\mathscr{S}_1$.
\end{cor}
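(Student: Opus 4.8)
The plan is to deduce this corollary directly from Cor.~\ref{prop:conservation cauchy} by taking the observer field $V$ to be $Z$ itself. Since $T$ and $X$ are isotropic, so is $Z:=T(X)\in\vect(M)$, and by hypothesis $Z$ is timelike, hence nowhere vanishing; being continuous on the connected manifold $M$ it is everywhere future-directed or everywhere past-directed, so up to reversing the time orientation we may take $V:=Z\in\vect^A(M)$, an $A$-admissible observer field, and put $U:=M$. With this choice we run the whole ``above set-up'': the splitting $M\equiv\R\times\mathscr{S}$, the exhaustion $\{\overline{\Omega_{0,m}}\}$ and the induced $\{\overline{\Omega_{1,m}}\}$, and the compact domains $\overline{D_m}$ with $\partial D_m=\overline{\Omega_{1,m}}\cup\Gamma_m\cup\overline{\Omega_{0,m}}$. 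In particular $\Omega_{1,m}$ is then obtained from $\Omega_{0,m}$ by flowing along the integral curves of $V=Z$ until they meet $\mathscr{S}_1$, exactly as in the statement.

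First I would check that the hypotheses of Cor.~\ref{cor:conservation laws} hold on each $\overline{D_m}$. Hypothesis (i), $\di(T)_V=0$, holds because $\di(T)=0$ and $T$ is isotropic. Hypothesis (ii) holds because in the isotropic Lorentzian setting $T^\flat_V=T^\flat$, $\nabla^V_-X=\nabla_-X$ and $(\lie_Xg)_V=\Lie_Xg$, so the assumed dichotomy --- antisymmetry of $T^\flat(-,\nabla_-X)$, or symmetry of $T^\flat$ together with $\Lie_Xg=0$ --- is precisely condition (Ci) or (Cii) of Prop.~\ref{prop:div(T(X))}. Hypothesis (iii) holds trivially: a genuine Lorentzian metric has vanishing Cartan tensor, so $C^\mathfrak{m}=0$, and $T(X)$ is isotropic, so $\dot\partial_{\mathrm{D}V}(T(X))=0$; this is exactly the classical-Relativity scenario recorded in the remark on sufficient conditions for hypothesis (iii). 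The compactness of $\mathrm{Supp}(Z_V)\cap\overline{D_m}$ is automatic since $\overline{D_m}$ itself is compact. Next I would check the decay condition \eqref{eq:decay condition}: with $V=Z$ one has $Z_V=\pm V$, so the $h_V$-component of $Z_V$ orthogonal to $V$ vanishes and hence $K_m\equiv 0$; therefore $K_m\,\sigma_V(\Gamma_m)=0\longrightarrow 0$ regardless of whether the hypersurface volumes $\sigma_V(\Gamma_m)$ stay bounded. With all hypotheses verified, Cor.~\ref{prop:conservation cauchy} applies and yields \eqref{eq:conservation law cauchy 2} for $Z=T(X)$, which is the asserted limit.

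I do not expect a genuine obstacle here, since all the real work has been front-loaded into Cor.~\ref{prop:conservation cauchy}. The only point requiring a little care is the bookkeeping of the reduction to the isotropic case --- checking, as in the remark following the affine-space example, that evaluating anisotropic objects on the isotropic $V$ agrees with the purely isotropic computations ($\di(T)_V=\di(T)$, $(\lie_Xg)_V=\Lie_Xg$, $C^\mathfrak{m}=0$, and the vanishing of the vertical-derivative terms). Once this is granted, the ``properly Finslerian'' hypothesis (iii) of Cor.~\ref{cor:conservation laws} and the decay condition \eqref{eq:decay condition} both become vacuous; in fact the identity $K_m\equiv 0$ is precisely the degenerate instance anticipated in Rem.~\ref{rem:decay condition}, where the observer field is taken equal to the timelike flux $Z$ and the hypersurface-volume growth is immaterial.
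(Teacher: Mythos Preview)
Your proposal is correct and follows essentially the same route as the paper: the paper's argument is contained in Rem.~\ref{rem:decay condition}, which sets $V:=Z$, observes that hypothesis (iii) of Cor.~\ref{cor:conservation laws} then holds trivially and that $K_m=0$ regardless of $\Gamma_m$, and invokes Cor.~\ref{prop:conservation cauchy}. You have simply spelled out the verification of hypotheses (i), (ii) and the isotropic bookkeeping in more detail than the paper does.
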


\begin{rem}[Conservation in terms of the Finslerian unit normal]
	\noindent \begin{enumerate} [label=(\roman*)]
		\item One could try to represent also the integrals of \eqref{eq:conservation law cauchy 2} in terms of $d\sigma_V$, as in \S \ref{subsec:affine space}. However, according to Rem. \ref{rem:conservation laws normals}, that would require assuming that $\mathscr{S}_\mu$ is non-$g_V$-lightlike, which is not very reasonable when all we know is that $\mathscr{S}_\mu$ $L$-spacelike and $L$-Cauchy.

		\item 
		On the other hand, in terms of the Finslerian unit normal $\xi$, \eqref{eq:conservation law cauchy 2} reads 
		\begin{equation}	
				\int_{\Omega_{1,m}}g_{\xi}(\xi,T_V(X_V))d\Sigma_V^\xi-\int_{\Omega_{0,m}}g_{\xi}(\xi,T_V(X_V))d\Sigma_V^\xi\longrightarrow 0
			\label{eq:conservation law cauchy 3}
		\end{equation}
		when $m\rightarrow\infty$. The sign in front of the second integral is explained as follows (see Rem. \ref{rem:normals} (ii)). $d\Sigma_V^\xi$ selects an orientation on each $\Omega_{\mu,m}$: the one for which $\vol_V(\xi,-,...,-)$ is positive. However, in \eqref{eq:conservation law cauchy 2} $\Omega_{1,m}$ already had an orientation $\mathfrak{O}_1$ and $\Omega_{0,m}$ had $\mathfrak{O}_0$: the $D_m$-salient ones. Necessarily,\footnote{Suppose, for instance, that $\mathscr{S}_{1}$ lays in the future of $\mathscr{S}_{0}$: the $\gamma_p$'s departing from $\Omega_{0,m}$ reach points $\gamma_p(t_p)\in\Omega_{1,m}$ with $t_p>0$. Take bases $(e_1,...,e_{n-1})$ for $\TT_{p}\Omega_{0,m}$ and $(e^\prime_1,...,e^\prime_{n-1})$ for $\TT_{\gamma_p(t_p)}\Omega_{1,m}$ such that $(V_p,e_1,...,e_{n-1})$ and $(V_{\gamma_p(t_p)},e^\prime_1,...,e^\prime_{n-1})$ are $\vol$-positive. Then $(e_1,...,e_{n-1})$ and $(e^\prime_1,...,e^\prime_{n-1})$ are both $d\Sigma_V^\xi$-positive ($\xi$ and $V$ always lie in the same half-space), the former is $\mathfrak{O}_0$-negative ($V$ is $D_m$-entering at $\mathscr{S}_{0}$) and the latter is $\mathfrak{O}_1$-positive ($V$ is $D_m$-salient at $\mathscr{S}_{1}$).} exactly one of these agrees with the $d\Sigma_V^\xi$-orientation: $\mathfrak{O}_1$ if $\mathscr{S}_{1}$ lays in the future of $\mathscr{S}_{0}$ and $\mathfrak{O}_0$ if it is the opposite. Notice that this, and hence \eqref{eq:conservation law cauchy 3}, would fail if the Cauchy hypersurfaces crossed.
		
		\item In the case $V=\xi$, \eqref{eq:conservation law cauchy 3} becomes 
		\[
		\int_{\Omega_{1,m}}T_\xi^\flat(\xi, X_\xi)d\Sigma_\xi-\int_{\Omega_{0,m}}T_\xi^\flat(\xi, X_\xi)d\Sigma_\xi\longrightarrow 0,
		\]
		a conservation law in which all the terms are purely Finslerian.
	\end{enumerate}
\end{rem}

Summing up, in this example we have proven a Finslerian (observer-dependent) version of the classical law that {\em the total amount of $X$-momentum in the universe is conserved} (Cor. \ref{prop:conservation cauchy}). Our formulation is asymptotic, so it is valid even for infinite total $X_V$-momentum (Rem. \ref{rem:assymptotic energy conservation}). We have recovered the classical law (Cor. \ref{prop:conservation cauchy classical}), which always holds under hypotheses on $T$ and $X$ alone, while in the general Finslerian case nontrivial difficulties appear in the regime of big separation between the Cauchy hypersurfaces (high $\sigma(\Gamma_m)$, Rem. \ref{rem:decay condition}). Finally, we have expressed the law naturally in terms of the Finslerian unit normal (see \eqref{eq:conservation law cauchy 3}).

\section{Conclusions}

About the physical interpretation of $T$, \S \ref{s3}:
\begin{enumerate}
\item {\em Heuristic interpretations from fluids}, \S \ref{s3.1}  and \ref{s3.2} Possible  breakings of Lorentz-invariance lead to non-trivial transformations of coordinates between observers. Such transformations are still linear and permit a well-defined energy-momentum vector at each tangent space $\TT_pM$, \S \ref{s3.1}. 

However, 
the stress-energy-momentum $T$ must not be regarded as a tensor on each $\TT_pM$, but as an anisotropic tensor.  This depends intrinsically on each observer $u\in \Sigma$ and may vary with $u$ in a nonlinear way. 
Indeed, the breaking of Lorentz invariance does not permit to fully replicate the relativistic arguments leading to (isotropic) tensors on $M$, even though classical interpretations of the anisotropic $T$ in terms of fluxes can be maintained, \S \ref{s3.2}. 

\item  {\em Lagrangian viewpoint},
\S \ref{s_lagr}.  In principle, the interpretations of Special Relativity about the canonical energy-momentum tensor associated with the invariance by translations remain for Lorentz norms and, thus, in Very Special Relativity. In the case of Lorentz-Finsler metrics, some issues to be studied further appear: 

\ben \item The canonical stress-energy tensor in Relativity $\delta S_{matter}/ \delta  g^{\mu\nu}$ leads  to different types of (anisotropic) tensors in the Finslerian setting (a scalar function $\delta S_{matter}/ \delta  L$  on $A\subseteq \TT M$ in the Einstein-Hilbert setting,  higher order tensors in Palatini's). Starting at such tensors, different alternatives to recover the heuristic physical interpretations in terms of a 2-tensor appear. 
\item     In the particularly interesting case of a kinetic gas \cite{HPV0, HPV21}, the 1-PDF $\phi$ becomes naturally the matter source for the Euler-Lagrange equation of the Finslerian Einstein-Hilbert functional. However, the variational derivation of $\phi$ is obtained by means of a non-natural Lagrangian. This might be analyzed {  by sharpening} the framework of variational completion for Finslerian Einstein equations \cite{HPV}.  
\een
\end{enumerate}

 About the divergence theorem for anisotropic vector fields $Z$, \S \ref{sec:divergence vectors}: 

\ben
\item \S \ref{subsec:bracket cartan}: For any Lorentz Finsler metric $L$, there is a natural definition of {\em anisotropic Lie bracket derivation along $Z$}, which depends only on the nonlinear connection $\HH A$ and admits an interpretation by using flows.

\item \S \ref{s_4.2}: This bracket allows one to give a natural definition of $\di(Z)$ which depends exclusively on  $\HH A$ 
and the volume {  form of} $L$. This provides a geometric interpretation for the definition of divergence introduced by Rund \cite{Rund}.

\item A general divergence theorem is obtained (Th. \ref{th:divergence theorem}) so that \S \ref{sec:divergence tensor}:
\ben\item   It can be seen as a conservation law for $Z$ {  measured} by each {  observer field} $V$, even if the conserved quantity  depends on $V$. 

\item \label{(i)} The computation of the boundary term is intrinsically expressed in terms of forms. 
However, several metric elements can be used to re-express it, in particular the normal vector  field  for: 

(i) the {  pseudo-Riemannian metric} $g_V$ (Rund), or 
(ii) the pseudo-Finsler metric $L$, when $L$ is defined on the whole $\TT M$ (Minguzzi). 

\een

\een

 About the conservation of the stress-energy $T$ \S \ref{s5}:
\ben \item \S \ref{d_5.1} and \ref{d_5.2}: The computation of $\di(T)$ priviledges the Levi-Civita--Chern anisotropic connection, showing explicit equivalence with Rund's {approach}. 

\item  Cors. ~\ref{cor:conservation laws0} and ~\ref{cor:conservation laws}: A vector field $T(X)_V$ on $M$ is preserved assuming that some natural elements vanish  on $V$ for $T$, $X$ and  $\mathrm{D}V$.

\item \S \ref{d_5.3}: Natural laws of conservation on Cauchy hypersurfaces under general conditions (including rates of decay for unbounded domains) can be obtained by a  combination of the techniques (i) and (ii) in
 \ref{(i)}.  
\een

\section{Appendix. Kinematics: observers and relative velocities } \label{s_A}

Here, we discusss  a series of different possibilities for the notion of relative velocity between two observers, each one with a well-defined geometric construction.
This is done as an academic exercise, because we do not  discuss     experimental issues 
  (compare with \cite{LP, Pfeifer}). However, it is worth emphasizing that all the possibilities studied here are intrinsic to the geometry of a flat model and, thus to any Finsler spacetime. 

Start at an affine space endowed with a Lorentz norm let $u, u'\in \Sigma$ be two distinct observers 
and consider the plane $\Pi:= $ Span$\{u,u'\} \subset V$, which intersects transversally $\C$ and inherits a Lorentz Finsler norm with indicatrix $\Sigma_\Pi:= \Pi \cap \Sigma$. Recall  that both tangent spaces $\TT_u \Pi$ and $\TT_{u'}\Pi$ inherit naturally a Lorentz scalar product by restricting the fundamental tensors $g_u$ and $g_{u'}$, 
resp. Moreover, their (1-dimensional) restspaces $l:= \TT_u\Sigma_\Pi$, $l':= \TT_{u'}\Sigma_\Pi$ also inherit a positive definite metric. In what follows, only the geometry of $\Pi$ will be relevant.

\subsection*{The Lorentz metric $g_\Pi$ up to a constant}\label{s3.1.1}
Notice that $\Pi \cap \C_p$ is composed by two half-lines spanned by two  $\C$-lightlike directions $w_\pm$; we will consider the orientation $\Pi$ provided by the choice $(w_+,w_-)$. One can determine a 
scalar  product $g_\Pi$ in $\Pi$ (which is  unique up to a positive constant), regarding both $w_+$ and $w_-$ as $g_\Pi$-lightlike in the  same causal cone.   It is easy to check that  $\Sigma$ must be a strongly convex curve which converges asymptotically to the vector lines spanned by $w_\pm$.
This implies both  $u\in \Sigma$   will be timelike for $g_\Pi$ and its restpace $l$ will be  $g_\Pi$-spacelike; we can assume also  that the orientation  $l_+$ in $l$ is induced by the chosen $w_+$. 

Notice that $g_u(u,w_\pm)\geq 0$ by the fundamental  inequality,  but  $w_\pm$ might be timelike or spacelike for $g_u$ (although $g_u(u,w_\pm)\rightarrow 0$ as $u\rightarrow w_\pm$).
This possibility might be regarded as a possible measurement of the speed of light with respect to $u$ by the observers in $\Pi$, namely, this velocity is in the orientation $l_+$  when $w_+$ is $g_u$-spacelike and smaller than 1 when it is timelike. However, a priori it is not clear an operational way to carry out such a measurement. Moreover such a measurement might be regarded as something non-intrinsic to the speed of light but to the way  of measuring  it. 

Nevertheless, as pointed out

in  \cite[Section 6]{BerJavSan}, there are several effects which might lead to a measurement of different speeds of light in different directions. So, we will consider that each $\Pi$ has its own speeds of light $c_{\Pi}^\pm$ in each spacelike orientation $l_\pm$.

 Indeed,  given $u$ and an orientation $l^+$, the {\em speed of light $c_{\Pi}^+$} will be defined as the the supremum of the relative velocities between $u$ and all the observers $u'$ such that $u'-u$ yields the orientation $l^+$. Next, we will explain several possible meanings of these  velocities.   To avoid cluttering, next we will write $c_{\Pi}$, assuming that the appropriate choice in $c_{\Pi}^\pm$  is done for each $u'$.

\subsection*{Simple relative velocity} As $g_u$ determines naturally a Lorentz metric on $V$, we can define the {\em simple relative velocity} $v_u^s(u')$ of $u'$ measured by $u$ as the usual $g_u$-relativistic velocity between $u$, $u'$ normalized to $c_\Pi$, i.e. 
$$v_u^s(u')= c_\Pi \tanh (\theta) \quad \hbox{where} \quad \cosh\theta =  -g_u(u,u')  > 1,$$
(the latter  by the reversed fundamental inequality). Clearly, $v_{u'}^s(u)\neq v_u^s(u')$ in general, but this does not seem a drawback in the Finslerian setting.

A support for the  physical plausibility of this velocity is that one could expect that each observer $u$ will work as in Special Relativity just choosing an orthonormal frame of $g_u$.  The possibility $g_u(v,v)\neq 1$ might seem ackward  from a dynamical viewpoint (see below),  but it seems harmless as far as only kinematics is being considered. In principle, the comparison between the measurements  of the two  observers would be geometrically possible by using the unique isometry of $(\TT_u\Pi , g_u)$ to $(\TT_{u'}\Pi , g_{u'})$ which maps $u$ into $u'$ and is consistent with  orientations induced from $\Pi$.  What is more, this isometry can also be extended to a  natural isometry from $(\TT_u V , g_u)$ to $(\TT_{u'} V , g_{u'})$, namely, regard $(\Sigma,g)$ as a Riemannian metric and use the parallel transport from $u$ to $u'$ along the segment of the curve $\Pi\cap \Sigma$ from $u$ to $u'$. However, the following fact might suggest  to explore further possibilities. 
 
 \begin{rem} \label{r1} Assume that $\Sigma$ is modified into the indicatrix $\bar \Sigma$ of another Lorentz-Finsler norm so that (i) $\bar \Sigma= \Sigma$ around $u$ and (ii)  $u'\in \bar \Sigma$ but its $\bar \Sigma$ restspace $\bar l'$ is different  from  $l'$. Then, the simple velocity would remain unaltered, i.e., $\bar v_u^s(u')= v_u^s(u')$.\end{rem}

\subsection*{Velocity as a distance between observers} Notice that $\Sigma$ can be regarded as a Riemannian manifold with the restriction of the fundamental tensor $g$ and, then, $\Sigma \cap \Pi$ can be regarded as a curve whose length can be computed. Then, the {\em observers' distance velocity} is defined as:
$$
v^d(u,u')= c_\Pi \, \tanh\left(\hbox{length}_g\{\hbox{segment of} \, \Sigma \cap \Pi \, \hbox{from $u$ to $u'$}\}\right). 
$$
Notice that this velocity is symmetric and it  generalizes  directly the one in Special Relativity providing a geometric interpretation for the addition of velocities. Recall that  $v^d(u,u')$ has been defined  essentially  as a distance in  $\Sigma \cap \Pi$, where $\Pi$ depends of each pair of observers, thus, one might have $v^d(u,u')+v^d(u',u'') < v^d(u,u'')$ when $n>2$. 
If one prefers to avoid such a possibility, it is enough to consider  $g$-distance in the whole space of observers $\Sigma$ ({\em  observers' space  
distance velocity}), at least in the case that $c_\Pi$ is regarded as independent of $\Pi$.

\begin{rem} \label{r2}
In the case studied in Remark \ref{r1}, one would have $\bar v^d(u,u')\neq v^d(u,u')$ in general. However, the relative position of the restspaces $l$ and  $l'$ does not play any special role. 
\end{rem}
\subsection*{Length-contraction and velocity} Consider a segment $S$ of $l$ with $g_u$-length $\ell$ and the strip of $V$ obtained by translating $S$ in the direction of $u$.  Let $S'$ be the intersection of this strip with $l'$, which will be a new segment of $g_{u'}$-length $\ell '$.  Let $\lambda = \ell '/\ell$  be   the {\em length-contraction parameter}. In the relativistic case, $\lambda<1$ and $\lambda \rightarrow 0$ as $u' \rightarrow \C_\Pi$. The former property does not hold for a general Lorentz 
norm   but the latter does.    So, whenever $\lambda<1$ holds, we can define the {\em length-contractive velocity  $v^c_u(u')$ of $u'$ with respect to} $u$ as:
$$
v^c_u(u')= c_\Pi \sqrt{1-\lambda^2}.
$$ 
Again, this velocity is not symmetric. Because of the strong convexity of $\Sigma$,  a different observer $u'$ will have a different restspace  $l'$, but this does not imply a different length $\ell'$ nor velocity $v_u^c(u')$.  However, this velocity gives a comparison between restspaces which was absent in the previous two velocities.

\subsection*{Symmetric Lorentz velocities in $\Pi$}\label{ss3.1.5}

Let us consider the Lorentzian  scalar   product $g_\Pi$ en $\Pi$,  unique up to a positive constant (which will be irrelevant for our purposes) introduced   above. 
Recall that  $u$ and $u'$ were timelike for $g_\Pi$ and, moreover, both $l$ and $l'$ were spacelike. Now, we can define two velocities between $u$ and $u'$: the {\em simple Lorentz velocity},  

$$v^s(u,u')= c_\Pi \tanh (\theta) \quad \hbox{where} \quad \cosh\theta = -\frac{g_\Pi(u,u')}{\sqrt{g_\Pi(u,u) g_\Pi(u',u')}},$$
and the {\em length-contractive Lorentz velocity},  
$$v^c(u,u')= c_\Pi \tanh (\theta) \quad \hbox{where} \quad \cosh\theta = -\frac{|g_\Pi(n,n')|}{\sqrt{g_\Pi(n,n) g_\Pi(n',n')}},$$
where, in the latter,  $n$, $n'$ are $g_\Pi$-timelike vectors orthogonal to $l$, $l'$, resp.

Clearly, both velocities are symmetric. Their appearance might be  physically sound because the intrinsic  Lorentz metric $g_\Pi$ (up to a constant) can be regarded as an object available (or, at least,  a compromise one) for all the observers, as it would depend directly on physical light rays.

\section*{Acknowledgments}
MAJ was partially supported by the project  PGC2018-097046-B-I00 funded by MCIN/ AEI /10.13039/501100011033/ FEDER ``Una manera de hacer Europa'' and Fundaci\'on S\'eneca project with reference 19901/GERM/15.  This work is a result of the activity developed within the framework of the Programme in
	Support of Excellence Groups of the Regi\'on de Murcia, Spain, by Fundaci\'on S\'eneca, Science and Technology Agency of the Regi\'on de Murcia. MS and FFV were partially supported by 
the project  PID2020-116126GB-I00 funded by MCIN/ AEI /10.13039/501100011033, by the project PY20-01391 (PAIDI 2020)  funded by Junta de Andaluc\'{\i}a---FEDER and by the framework of IMAG-Mar\'{\i}a de Maeztu grant CEX2020-001105-M funded by MCIN/AEI/ 10.13039/50110001103.

\end{document}